\newtheorem{theorem}{Theorem}[section]
\newtheorem{lemma}[theorem]{Lemma}
\theoremstyle{remark}
\theoremstyle{definition}
\crefname{theorem}{Theorem}{Theorems}
\crefname{lemma}{Lemma}{Lemmas}
\crefname{proposition}{Proposition}{Propositions}
\crefname{corollary}{Corollary}{Corollaries}
\crefname{fact}{Fact}{Facts}
\crefname{definition}{Definition}{Definitions}
\crefname{remark}{Remark}{Remarks}
\crefname{section}{Section}{Sections}
\crefname{appendix}{Appendix}{Appendices}
\crefname{algorithm}{Algorithm}{Algorithms}
\newcommand{\RR}{{\mathbb R}}
\newcommand{\EE}{{\mathbb E}}
\newcommand{\PP}{{\mathbb P}}
\DeclareMathOperator{\expected}{\EE}
\DeclareMathOperator{\prob}{\PP}
\newcommand{\indic}{\boldsymbol{1}}
\newcommand{\eps}{\varepsilon}
\DeclareMathOperator{\opt}{opt}
\DeclareMathOperator{\open}{open}
\DeclareMathOperator{\conn}{conn}
\DeclareMathOperator{\poly}{poly}
\DeclareMathOperator{\cost}{cost}
\DeclareMathOperator{\dime}{\lambda}
\DeclareMathOperator{\DLA}{DLA}
\newcommand{\callr}{\mathcal{R}} 
\newcommand{\callu}{\mathcal{U}} 
\title{An $O(\log\log n)$-Approximation for Submodular Facility Location}
\author[1]{Fateme Abbasi\thanks{fateme.abbasi@cs.uni.wroc.pl}}
\author[1]{Marek Adamczyk\thanks{marek.adamczyk@cs.uni.wroc.pl}}
\author[2]{Miguel {Bosch-Calvo}\thanks{miguel.boschcalvo@idsia.ch}}
\author[1]{Jaros\l{}aw Byrka\thanks{jby@cs.uni.wroc.pl}}
\author[2]{\mbox{Fabrizio Grandoni}\thanks{fabrizio.grandoni@idsia.ch}}
\author[2]{Krzysztof Sornat\thanks{krzysztof.sornat@idsia.ch}}
\author[2]{Antoine Tinguely\thanks{antoine.tinguely@idsia.ch}}
\affil[1]{University of Wroc\l{}aw, Poland}
\affil[2]{IDSIA, USI-SUPSI, Switzerland}
\date{}
\begin{document}

\maketitle
\thispagestyle{empty}

\begin{abstract}
\noindent In the Submodular Facility Location problem (SFL) we are given a collection of $n$ clients and $m$ facilities in a metric space. A feasible solution consists of an assignment of each client to some facility. For each client, one has to pay the distance to the associated facility. Furthermore, for each facility $f$ to which we assign the subset of clients $S^f$, one has to pay the opening cost $g(S^f)$, where $g(\cdot)$ is a monotone submodular function with $g(\emptyset)=0$.

SFL is APX-hard since it includes the classical (metric uncapacitated) Facility Location problem (with uniform facility costs) as a special case. Svitkina and Tardos [SODA'06] gave the current-best $O(\log n)$ approximation algorithm for SFL. The same authors pose the open problem whether SFL admits a constant approximation and provide such an approximation for a very restricted special case of the problem.

We make some progress towards the solution of the above open problem by presenting an $O(\log\log n)$ approximation. Our approach is rather flexible and can be easily extended to generalizations and variants of SFL. In more detail, we achieve the same approximation factor for the practically relevant generalizations of SFL where the opening cost of each facility $f$ is of the form $p_f+g(S^f)$ or $w_f\cdot g(S^f)$, where $p_f,w_f\geq 0$ are input values.

We also obtain an improved approximation algorithm for the related Universal Stochastic Facility Location problem. In this problem one is given a classical (metric) facility location instance and has to a priori assign each client to some facility. Then a subset of active clients is sampled from some given distribution, and one has to pay (a posteriori) only the connection and opening costs induced by the active clients. The expected opening cost of each facility $f$ can be modelled with a submodular function of the set of clients assigned to $f$.
\end{abstract}

\clearpage
\pagenumbering{arabic}

\section{Introduction}

In the {\sc Submodular Facility Location} problem (SFL), we are given a set $C$ of $n$ clients and set $F$ of $m$ facilities, with metric distances $d: (C\cup F)\times (C\cup F)\to \mathbb{R}_{\geq 0}$. Furthermore, we are given\footnote{As usual in this framework, we assume to have an oracle access to $g(\cdot)$: given $R\subseteq C$, we can obtain the value of $g(R)$ in polynomial time.} a monotone submodular (opening cost) function $g:2^{C}\to \mathbb{R}_{\geq 0}$ with $g(\emptyset)=0$. Notice that $g(\cdot)$ is non-negative. A feasible solution consists of an assignment $\varphi:C\to F$ of each client to some facility (we also say that $\varphi(c)$ \emph{serves} $c$). The opening cost of $f\in F$ in this solution is $g(\varphi^{-1}(f))$. The cost of the solution, that we wish to minimize, is the sum of the distances from each client to the corresponding facility plus the total opening cost of the facilities, in other words
$$
\cost(\varphi)=\sum_{c\in C}d(c,\varphi(c))+\sum_{f\in F}g(\varphi^{-1}(f)).
$$
SFL captures practical scenarios where the cost of opening a facility is a (non-linear, still ``tractable'') function of the set of served clients. As we will discuss, SFL is also closely related to certain stochastic optimization problems which recently attracted a lot of attention (see, e.g, \cite{AGLW17,GGLMSS13,GGLS08,GPRS11,IKMM04} and references therein). In particular, there are scenarios where one has to pay (a posteriori) the connection and opening costs related only to a random subset of \emph{activated} clients, and this naturally induces objective functions with submodular opening costs.

SFL is APX-hard since it includes the classical {\sc Facility Location} problem (with uniform facility costs) as a special case \cite{guha1999greedy}. Hence the best we can hope for, in terms of approximation algorithms, is a constant approximation. Finding such an approximation algorithm is explicitly posed as an open problem, e.g., by Svitkina and Tardos \cite{ST10}. The same authors present an $O(\log n)$ approximation for a generalization of SFL where each facility $f$ has a distinct submodular function $g_f(\cdot)$ (and this result is tight for this generalization due to a reduction from {\sc Set Cover} by Shmoys, Swamy and Levi \cite{SSL04}). Svitkina and Tardos also present a constant approximation for a rather restrictive (still practically motivated) special case of SFL where $g(\cdot)$ is induced by certain subtrees of a node-weighted tree over the clients.

\subsection{Our Results and Techniques}

We make some progress towards the resolution of the mentioned open problem by presenting an improved approximation algorithm for SFL.
\begin{theorem}\label{thr:main}
  There is a polynomial-time $O(\log\log n)$-approximation algorithm for SFL.
\end{theorem}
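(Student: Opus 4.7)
The plan is to produce the improvement by combining a convex relaxation based on the Lovász extension of $g$ with a two-stage rounding that first reduces to a ``core'' instance of polylogarithmic effective size and then applies an off-the-shelf $O(\log N)$-style guarantee on that core.

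First I would write the natural relaxation with variables $x_{cf}\in[0,1]$ representing the fractional assignment of client $c\in C$ to facility $f\in F$:
\[
\min\ \sum_{c\in C}\sum_{f\in F} d(c,f)\,x_{cf} \;+\; \sum_{f\in F} \hat{g}\bigl(x_{\cdot f}\bigr),\qquad \sum_{f\in F} x_{cf}=1\ \forall c,
\]
where $\hat g$ is the Lovász extension of $g$. Since $g$ is monotone submodular, $\hat g$ is convex and non-decreasing, so the program is solvable to arbitrary accuracy in polynomial time and its value is a valid lower bound on $\opt$. The integral representation $\hat g(x_{\cdot f})=\int_0^1 g(\{c:x_{cf}\ge t\})\,dt$ will be the main handle: for every facility $f$ and every threshold $t\in(0,1]$, the level set $L_f^t=\{c:x_{cf}\ge t\}$ contributes $t\cdot g(L_f^t)$ to the opening budget in a controlled way.

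The main step is the rounding. I would first apply the classical filtering trick of Shmoys--Tardos--Aardal at a constant threshold to guarantee that each client $c$ has an $O(1)$-fractional assignment within distance $O(1)\cdot C^*_c$ of the LP connection cost $C^*_c$ of $c$; call these the \emph{close} facility-client pairs. Next, I would cluster clients into groups $\callc_1,\callc_2,\dots$ greedily in order of increasing LP contribution, exactly as in classical facility-location rounding, but with two crucial modifications. (i) Within a cluster, tentatively assign all clients to a single representative facility $f_j$ and pay $g(\callc_j)$; the naive bound here costs $O(\log n)$ because of how much $g$ can blow up when we aggregate fractional mass. (ii) To shave this factor, subdivide each cluster into geometric scales of the quantity $|\callc_j|$ and of the fractional opening mass $\hat g(x_{\cdot f_j})$, giving $O(\log n)$ ``types.'' Using submodularity and the level-set representation above, one argues that clusters of the same type can be \emph{sparsified}: keeping a carefully chosen subfamily of cardinality $\poly\log n$ and reassigning the remaining clients to already-open cluster centers preserves the LP cost up to a constant factor. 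After sparsification, the residual instance on which one still has to solve SFL has only $N=\poly\log n$ effectively distinct choices, and invoking the Svitkina--Tardos $O(\log N)=O(\log\log n)$ approximation on this residual instance closes the analysis.

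The hard part will be step (ii): showing that sparsification to $\poly\log n$ clusters is possible without losing more than a constant factor against the LP. Concretely, one must prove that for each geometric type, the total fractional opening mass deposited by the LP on clusters of that type dominates the submodular cost of opening a $\poly\log n$-sized subfamily that ``covers'' the rest via short re-routes. I expect this to follow from a dual fitting / contention-resolution argument that exploits two properties simultaneously: the submodularity of $g$ (so that opening a few large clusters absorbs the others almost for free), and the fact that clusters within the same type have comparable radii (so that re-routing clients across cluster centers costs only $O(1)$ times their LP connection cost). Once sparsification is established, the final $O(\log\log n)$ bound follows immediately by plugging $N=\poly\log n$ into the logarithmic guarantee of the previous state of the art.
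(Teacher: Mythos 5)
Your relaxation is fine (the Lov\'asz-extension program is equivalent in value to the configuration LP the paper uses, and it is solvable via submodular minimization), but the proof has a genuine gap exactly where you flag it: step (ii), the sparsification to $\poly\log n$ clusters, is not established and does not look salvageable as stated. Your ``types'' are defined by geometric scales of $|\callc_j|$ and of the fractional opening mass $\hat g(x_{\cdot f_j})$ --- quantities that say nothing about where the clusters sit in the metric. Two clusters of the same type can be arbitrarily far apart, so discarding a cluster and re-routing its clients to a kept center of the same type can increase their connection cost by an unbounded factor; ``comparable radii'' controls the within-cluster detour, not the between-cluster distance. The appeal to submodularity (``opening a few large clusters absorbs the others almost for free'') also does not hold in general: the marginal cost $g(R\cup T)-g(T)$ of absorbing a discarded cluster $R$ into an open center serving $T$ is only bounded by $g(R)$, and for, say, coverage-type or near-modular $g$ there is no absorption at all, so no constant-factor charging against the LP opening mass is available.

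Even granting sparsification, the final step does not deliver the claimed bound. The Svitkina--Tardos guarantee is $O(\log n)$ in the number of \emph{clients}; a residual instance with $\poly\log n$ clusters (or facilities) but still up to $n$ clients yields $O(\log n)$, not $O(\log\log n)$. If instead you merge each cluster into a ``super-client,'' you no longer have an SFL instance with the original $g$: the opening cost depends on which actual clients are assigned, not on cluster identities, and the approximation guarantee does not transfer without an argument you have not supplied. For contrast, the paper obtains the $\log\log$ by a different mechanism: it samples from the LP for $\ln\ln N$ rounds (covering each client with probability $1-1/\ln N$), observes that the residual fractional solution has connection cost reduced by a $1/\ln N$ factor, so an FRT embedding's $O(\log N)$ distortion is absorbed, and then rounds the residual on the HST via a reduction to a descendant-leaf assignment problem solved \`a la Bosman--Olver with loss $O(\log D)=O(\log\log N)$ in the tree depth. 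Your plan shares the meta-idea of shrinking the parameter inside a logarithmic guarantee, but the concrete reduction that would make that parameter $\poly\log n$ is exactly the missing piece.
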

Our approach is surprisingly simple (modulo exploiting some non-trivial results in the literature). By standard reductions (see Section \ref{sec:preliminaries}) we can assume that $N=n+m$ is polynomial in $n$, hence it is sufficient to provide an $O(\log\log N)$ approximation. Our starting point is a natural (configuration) LP relaxation for the problem:
\begin{align}\label{eq:configuration-lp}\tag{Conf-LP}
        \min &\sum_{f \in F} \sum_{R \subseteq C} g(R) \cdot x_R^f
        + \sum_{c\in C} \sum_{f\in F} \sum_{R\ni c} d(c,f) \cdot x_R^f & \\
        \text{s.t.} &
        \sum_{f\in F}\sum_{R\ni c} x_R^f= 1 &\forall c \in C;\nonumber \\
        & \sum_{R \subseteq C} x_R^f= 1 & \forall f \in F;\nonumber
        \\
        & x_R^f \geq 0 & \forall R\subseteq C,\ \forall f\in F.\nonumber
\end{align}
In an integral solution, we interpret $x^f_R=1$ as assigning exactly the set of clients $R$ to the facility $f$. Notice that we impose $\sum_{R \subseteq C} x_R^f = 1$. This is w.l.o.g. since $g(\emptyset)=0$ (intuitively, $x^f_\emptyset=1$ means that no client is assigned to $f$).
We can solve the above LP in polynomial time (see Section \ref{sec:omitted}).
\begin{lemma}\label{lem:solveConfLP}
  In $\poly(N)$ time one can find an optimal solution to \eqref{eq:configuration-lp} with $\poly(N)$ non-zero entries.
\end{lemma}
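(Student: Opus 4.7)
The plan is to solve \eqref{eq:configuration-lp} via the ellipsoid method applied to its dual, which has only $|C|+|F|=N$ variables but exponentially many constraints. The key observation will be that separating the dual reduces to an instance of unconstrained submodular function minimization (SFM), which is solvable in polynomial time.

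First, I would write the dual explicitly. Introducing variables $\alpha_c$ for the client-covering equalities and $\beta_f$ for the facility equalities, the dual is
\begin{align*}
\max \quad & \sum_{c\in C}\alpha_c + \sum_{f\in F}\beta_f \\
\text{s.t.} \quad & \beta_f + \sum_{c\in R}\alpha_c \leq g(R) + \sum_{c\in R}d(c,f) \qquad \forall\, f\in F,\ R\subseteq C.
\end{align*}
Rearranging, the constraint for facility $f$ is satisfied iff
\[
\beta_f \leq \min_{R\subseteq C}\Bigl[\, g(R) + \sum_{c\in R}\bigl(d(c,f)-\alpha_c\bigr)\,\Bigr].
\]
For each fixed $f$, the right-hand side is the minimum over $R$ of a set function that is the sum of the monotone submodular function $g(\cdot)$ and a modular (linear) function with possibly negative coefficients; such a sum is submodular. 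By standard polynomial-time SFM (e.g., Grötschel--Lovász--Schrijver, or the more recent combinatorial algorithms), we can compute this minimum and a minimizer $R^\star_f$ in $\poly(N)$ time using the oracle for $g$. Comparing the minimum to $\beta_f$ for every $f\in F$ gives a polynomial-time separation oracle for the dual; whenever a violated inequality exists, we return the corresponding $(f,R^\star_f)$ pair.

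Next, I would run the ellipsoid method on the dual. It converges in $\poly(N)$ iterations, and during its execution it invokes the separation oracle only polynomially many times. Let $\callr\subseteq F\times 2^C$ denote the polynomial-size collection of facility/subset pairs that were ever produced as violated constraints. Standard LP-duality arguments (the ``equivalence of separation and optimization'' of Grötschel--Lovász--Schrijver) show that restricting the dual to the constraints in $\callr$ yields the same optimum, hence the restricted primal obtained by keeping only the variables $\{x_R^f : (f,R)\in \callr\}$ (together with $x_\emptyset^f$ for every $f$, to satisfy the facility equalities) has the same optimal value as \eqref{eq:configuration-lp}.

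Finally, this restricted primal has polynomially many variables and polynomially many constraints, so I would solve it exactly in $\poly(N)$ time with any standard LP solver, obtaining an optimal solution to \eqref{eq:configuration-lp} with at most $|\callr|+|F|=\poly(N)$ non-zero entries. The main technical point is the availability of polynomial-time unconstrained SFM for the sum of $g$ and an arbitrary-sign linear function; once this is granted, the rest of the proof is essentially the standard ellipsoid-plus-separation template.
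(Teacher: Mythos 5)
Your proposal is correct and follows essentially the same route as the paper: the paper also separates over the dual by observing that, for fixed $(\alpha,\beta)$, each constraint amounts to minimizing the submodular function $g(R)+\sum_{c\in R}d(c,f)-\sum_{c\in R}\alpha_c-\beta_f$, solvable by polynomial-time SFM, and then invokes the equivalence of separation and optimization to recover an optimal primal solution with $\poly(N)$ non-zero entries. Your explicit construction of the restricted primal from the constraints generated by the ellipsoid method is just a spelled-out version of the same standard argument the paper cites.
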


Given an optimal solution $\dot{x}=(\dot{x}^f_R)_{f\in F,R\subseteq C}$ to \eqref{eq:configuration-lp} of cost $\cost(\dot{x})$ as in Lemma \ref{lem:solveConfLP}, we proceed with two main stages. In the first stage (discussed in Section \ref{sec:reducingConnection}) we simply sample partial assignments of clients to facilities with the distribution induced by $\dot{x}$ for $\ln\ln N$ many times. This cost at most $\ln\ln N$ times the optimal LP cost in expectation, and leads to a partial solution that covers a random subset $C_1\subseteq C$ of clients.

In the second stage (discussed in Section \ref{sec:treeCase}) we take care of the remaining uncovered clients $C_2=C\setminus C_1$. Let us consider the restriction $\ddot{x}$ of $\dot{x}$ to $C_2$. The opening cost of $\ddot{x}$ might be as large as the opening cost of $\dot{x}$. However, in expectation, the connection cost of $\ddot{x}$ is only a $1/\ln N$ fraction of the connection cost of $\dot{x}$ (as we will show).

At this point, using the probabilistic tree embedding algorithm in \cite{FakcharoenpholRT04}, we embed the original metric $d$ into a (rooted) tree metric $d^T$ over a hierarchically well-separated tree (HST) $T$ (see Section \ref{sec:preliminaries} for the details). The opening cost of $\ddot{x}$ w.r.t.\ to the new tree instance does not change, while its connection cost grows by a factor at most $O(\log N)$ in expectation. Altogether we obtain a feasible fractional solution $\ddot{x}$ over the tree instance whose expected cost is at most $O(\cost(\dot{x}))$. Hence it is sufficient to develop an $O(\log\log N)$-approximate LP-rounding algorithm for the considered tree instance.

The next step is at the heart of our approach. Using the properties of HSTs and losing a constant factor in the approximation, we can further reduce our SFL tree instance to the following {\sc Descendant-Leaf Assignment} problem (DLA): the facilities are leaves of $T$ and the clients are arbitrary nodes of $T$. Each client $c$ must be served by a facility contained in the subtree $T_c$ rooted at $c$. The opening cost of each facility is given by $g(\cdot)$, and there are no connection costs at all. The latter problem is very similar to a problem recently studied by Bosman and Olver \cite{BO20} in the framework of {\sc Submodular Joint Replenishment} and {\sc Inventory Routing} problems. In particular, we can adapt their approach to achieve the desired $O(\log\log N)$ approximation for our DLA instance.

We remark that we do not know how to get an $O(1)$ approximation for SFL on trees (even on HSTs). Though such approximation would not imply an $O(1)$ approximation for SFL with our approach (due to the first stage), finding it seems to be a natural intermediate problem to address.

\subsection{Generalizations and Variants}

Our basic approach is rather flexible, and it can be applied to generalizations and variants of SFL. We next describe some other applications of our approach, and we expect to see a few more ones in the future. For example, we can handle the case where the opening cost of the facility $f$ is $g_f(S^f)=w_f\cdot g(S^f)$, where $w_f\geq 0$ is some input value: we call this the SFL {\sc with Multiplicative Opening Costs} problem ({\sc multSFL}).

\begin{theorem}\label{thr:mainMultiplicative}
  There is a polynomial-time $O(\log\log n)$-approximation algorithm for {\sc multSFL}.
\end{theorem}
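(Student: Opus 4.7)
The plan is to mirror the algorithm of Theorem~\ref{thr:main} step by step, checking that each ingredient extends when the opening cost $g(R)$ of facility $f$ is replaced by $w_f\cdot g(R)$. The key observation is that for every fixed $f$ the function $S\mapsto w_f\cdot g(S)$ is still monotone submodular with value $0$ at $\emptyset$, so all submodular-function manipulations used for SFL remain valid; only a few estimates need to be rescaled by $w_f$. Note that a direct reduction of {\sc multSFL} to SFL by ``rescaling $g$'' does not work, because different facilities carry different weights.

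First I would write the natural weighted configuration LP, obtained from \eqref{eq:configuration-lp} by substituting $w_f\cdot g(R)$ for $g(R)$ in the objective. A literal copy of the proof of Lemma~\ref{lem:solveConfLP} solves this LP optimally in polynomial time with polynomially many non-zero entries: the separation oracle for the dual maximizes, per facility $f$, a sum of a linear term and a submodular term rescaled by $w_f$, which is still a submodular maximization problem and thus amenable to the same ellipsoid-plus-Lov\'asz-extension argument. I would then run the two-stage rounding of Sections~\ref{sec:reducingConnection}--\ref{sec:treeCase} verbatim. In the first stage, sampling $\ln\ln N$ independent partial assignments from $(\dot{x}^f_R)_R$ at each facility $f$ yields, by submodularity of $w_f\cdot g$, expected opening cost of the union at most $\ln\ln N$ times $\sum_R w_f\cdot g(R)\,\dot{x}^f_R$, so the total opening cost blows up by a factor $\ln\ln N$. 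The connection-cost analysis, and the conclusion that the residual fractional solution $\ddot{x}$ on $C_2=C\setminus C_1$ has expected connection cost only a $1/\ln N$ fraction of that of $\dot{x}$, do not involve $g$ and are therefore unchanged.

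Next, FRT embeds the metric into an HST; the opening costs depend only on the sets served and on the weights $w_f$, so they are unaffected, while the expected connection cost blows up by $O(\log N)$ but acts on a $1/\ln N$ fraction, giving a net $O(1)$ contribution. Using the HST hierarchy as in Section~\ref{sec:treeCase}, and at the loss of another constant factor, this reduces the problem to a weighted version of DLA: the facilities are leaves of an HST, each leaf $\ell$ carries a multiplicative weight $w_\ell\geq 0$, every client $c$ must be served by a leaf of $T_c$, and the objective to minimize is $\sum_\ell w_\ell\cdot g(S^\ell)$.

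The main obstacle, and the step I expect to require the most care, is adapting the Bosman--Olver-style $O(\log\log N)$ algorithm for DLA to this weighted variant. Their argument combines submodularity of the opening cost with a bottom-up scheme on the HST and a charging/potential bookkeeping along the tree. Since for each leaf $\ell$ the function $S\mapsto w_\ell\cdot g(S)$ remains monotone submodular with value $0$ at $\emptyset$, the underlying structural inequalities go through verbatim after rescaling, with an extra per-leaf factor $w_\ell$ carried through the potential. Stringing these pieces together yields an $O(\log\log N)=O(\log\log n)$-approximation for {\sc multSFL}, establishing Theorem~\ref{thr:mainMultiplicative}.
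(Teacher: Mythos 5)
Your outline matches the paper's route (weighted configuration LP, the same two-stage sampling and FRT embedding, reduction to a leaf-weighted DLA), but the one step you flag as ``requiring the most care'' is precisely where your argument has a genuine gap, and ``the structural inequalities go through verbatim after rescaling'' is not true as stated. In Algorithm~\ref{alg:tree}, Step~\ref{alg:merging} merges all fractional mass in $\tilde F_v$ onto an \emph{arbitrary} facility $f_v$; the whole charging argument in Lemma~\ref{lem:alg_cost} rests on the invariant that the fractional cost $\cost_{\DLA}(z)$ never increases, which in the unweighted case follows from subadditivity of $\hat h$. With per-facility weights this invariant fails for an arbitrary choice of $f_v$: moving mass from a facility with tiny $w_{f'}$ onto a facility with huge $w_{f_v}$ can blow up $\sum_f \hat h_f(z^f)$ by an unbounded factor, and no per-leaf ``rescaling of the potential'' repairs this, since a single merge mixes different weights. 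The fix in the paper is a concrete algorithmic change: in Step~\ref{alg:merging} always merge onto the facility $f_v\in\tilde F_v$ of \emph{minimum} weight $\tilde w_{f_v}$, and test $\alpha$-supportedness with respect to $h_{f_v}=\tilde w_{f_v}\,h$ rather than $h$. Then $\hat h_{f_v}(z^{f_v}+z^{f'})\leq \hat h_{f_v}(z^{f_v})+\hat h_{f_v}(z^{f'})\leq \hat h_{f_v}(z^{f_v})+\hat h_{f'}(z^{f'})$, the cost is non-increasing, and Lemma~\ref{lem:BO20} applied to $h_{f_v}$ gives the bounds \eqref{eq:bound-delta-one} and \eqref{eq:bound-delta-theta} exactly as before. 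Your proposal never supplies this (or any equivalent) selection rule, so the weighted DLA rounding is not established.

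A second, smaller gap is your closing claim that $O(\log\log N)=O(\log\log n)$. This requires first reducing the number of facilities to $\poly(n)$, and the simple reduction used for plain SFL (Lemma~\ref{lem:reduceNumberFacilities}, which replaces each client's nearest facility by a colocated dummy) does not preserve the weights $w_f$: which facility is ``best'' for a client now depends on both distance and $w_f$. The paper handles this with the separate reduction of Lemma~\ref{lem:reduceNumFacilitiesAffine}, which buckets facilities by distance scale (and, in the affine case, by rounded additive cost) and keeps only a minimum-weight representative per bucket, plus an exact set-cover-based solution when $m\geq 2^n$. Without some such preprocessing your bound is only $O(\log\log N)$, which can exceed $O(\log\log n)$ when $m$ is super-polynomial in $n$.
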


Similarly, we can address the SFL {\sc with Additive Opening Costs} problem ({\sc addSFL}), where $g_f(S^f)=p_f+g(S^f)$ for $S^f\neq \emptyset$, $g_f(\emptyset)=0$, and $p_f\geq 0$ is some input value.
\begin{theorem}\label{thr:mainAdditive}
  There is a polynomial-time $O(\log\log n)$-approximation algorithm for {\sc addSFL}.
\end{theorem}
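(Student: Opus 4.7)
The plan is to mimic closely the three-stage approach used for \Cref{thr:main}, replacing the opening cost $g(R)$ in the configuration LP by the facility-specific function $\hat{g}_f(R) := p_f\cdot \indic[R\neq \emptyset] + g(R)$. The crucial observation is that $\hat{g}_f(\cdot)$ is a monotone submodular set function with $\hat{g}_f(\emptyset)=0$, since both summands are, and submodularity is preserved under non-negative sums. Because $\hat{g}_f$ differs across facilities only through the scalar $p_f$, the separation oracle used to solve the natural analogue of \eqref{eq:configuration-lp} still reduces to submodular function minimization, so an extreme-point solution with $\poly(N)$ support can be computed in polynomial time by a straightforward adaptation of \Cref{lem:solveConfLP}.

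First, I would set up the modified configuration LP with objective $\sum_{f,R}\bigl(\hat{g}_f(R) + \sum_{c\in R}d(c,f)\bigr)x_R^f$ and the same covering/marginal constraints as in \eqref{eq:configuration-lp}. Stage~1 is then identical: for each facility $f$ sample one configuration from the distribution $\{\dot{x}_R^f\}_R$, and repeat this independently $\ln\ln N$ times. The resulting partial assignment has expected cost at most $\ln\ln N$ times the LP optimum, and this cost now automatically accounts for the additive $p_f$ charges since they are baked into $\hat{g}_f$. Let $C_2\subseteq C$ denote the subset of clients that remain uncovered.

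For Stage~2 I would restrict $\dot{x}$ to $C_2$ by replacing each configuration $R$ at facility $f$ with $R\cap C_2$. The additive part behaves monotonically under this restriction, $p_f\cdot\indic[R\cap C_2\neq \emptyset]\le p_f\cdot\indic[R\neq \emptyset]$, so the opening component does not increase, while the connection component shrinks in expectation by a $1/\ln N$ factor exactly as in the unweighted case. Embedding the metric into an HST via \cite{FakcharoenpholRT04} then preserves the opening terms (which are metric-independent) and blows up the connection terms by an expected factor $O(\log N)$, perfectly compensating the $1/\ln N$ saving. At this point one has a feasible fractional solution on the HST whose expected total cost is $O(\cost(\dot{x}))$.

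The final step is to reduce the HST instance to a generalized DLA problem in which each facility-leaf $f$ carries an additive opening charge $p_f$ on top of the shared submodular opener $g(\cdot)$. The main obstacle, and the only place where more than a cosmetic change is required, is to verify that the Bosman--Olver-style rounding of \Cref{sec:treeCase} still delivers an $O(\log\log N)$ approximation in the presence of these per-facility linear charges. I expect this to be a clean extension: the $p_f$ terms are ordinary linear facility-opening costs that can be added to the LP objective driving the DLA rounding and are ``rounded for free'' by any integral configuration chosen by the algorithm, without interacting with the submodular inequalities used in the analysis. Combining the $O(\log\log N)$ loss at this step with the $O(\log\log N)$ loss from Stage~1 yields the claimed approximation.
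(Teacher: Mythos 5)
Your overall pipeline (facility-specific costs $g_f(R)=p_f+g(R)$ in the configuration LP, identical sampling stage, restriction to $C_2$, FRT embedding) matches the paper, but the step you defer as ``a clean extension'' is exactly where the real work lies, and your justification for it is not correct as stated. The additive charges are \emph{not} ordinary linear opening costs that are ``rounded for free'': in the convex relaxation the facility $f$ pays only $\hat h_f(z^f)=\hat h(z^f)+p_f\cdot\max_{c}z^f_c$ (by \eqref{eq:lovasz-extension-bis}), i.e.\ a fractional share of $p_f$, so the rounding must genuinely control how these charges accumulate, and they do interact with the analysis of Algorithm~\ref{alg:tree}. Concretely, Step~\ref{alg:merging} of the algorithm moves all fractional mass within $\tilde F_v$ onto an \emph{arbitrary} facility $f_v$; with a common function $h$ this never increases $\cost_{\DLA}(z)$ (subadditivity of $\hat h$), but with facility-specific $h_f$ it can, e.g.\ when mass is shifted from a facility with $p_{f'}=0$ onto one with huge $p_{f_v}$. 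The paper's fix is to make the selection in Step~\ref{alg:merging} non-arbitrary — choose $f_v\in\tilde F_v$ with minimum $p_{f_v}$ — after which $\hat h_{f_v}(z^{f_v}+z^{f'})\le \hat h_{f_v}(z^{f_v})+\hat h_{f'}(z^{f'})$ holds and the potential argument goes through; moreover the supportedness test and \cref{lem:BO20} must be applied to $h_{f_v}$ (which is legitimate since each $h_f$ is monotone submodular with $h_f(\emptyset)=0$), not to the bare $h$. None of this is in your proposal, and the claim that the $p_f$ terms do not interact with the submodular inequalities is the precise point that fails without the minimum-weight selection rule.

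A second, smaller gap: you prove (modulo the above) an $O(\log\log N)$ guarantee but the theorem is stated in terms of $n$ alone. For plain SFL the paper passes from $N$ to $n$ via \cref{lem:reduceNumberFacilities}, which merges/relocates facilities and relies on all facilities sharing the same opening function; with distinct $p_f$ this reduction is no longer valid, and the paper instead invokes the more involved \cref{lem:reduceNumFacilitiesAffine} (exact weighted-set-cover solution when $m\ge 2^n$, otherwise bucketing facilities by rounded $p_f$ values and distance scales to get $O_\eps(n^3)$ facilities). Your write-up never addresses how to bound $N$ by $\poly(n)$ in the additive setting, so as written the bound you obtain is $O(\log\log(n+m))$, not $O(\log\log n)$.
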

The above generalizations are discussed in \cref{sec:extensions}.
We remark that we do not know how to obtain an $O(\log\log n)$-approximation for the {\sc Affine} SFL case, where the opening costs are submodular functions of the form $g_f(S^f)=p_f+w_f\cdot g(S^f)$. Notice that this generalizes both {\sc addSFL} and {\sc multSFL}. This is left as an interesting and practically relevant open problem.

As mentioned earlier, SFL is closely related to stochastic variants of {\sc Facility Location}. In particular, our approach also extends to the following {\sc Universal Stochastic Facility Location} problem ({\sc univFL}). Here we are given clients $C$ and facilities $F$ with metric distances $d$ like in SFL, plus an opening cost $w_f$ for each $f\in F$. Furthermore, we have an oracle access to a probability distribution $\pi: 2^C\to \mathbb{R}_{\geq 0}$ specifying the probability $\pi(A)$ that a given subset of clients $A\subseteq C$ is activated. A feasible solution is an (universal) mapping $\varphi:C\to F$. The cost of $\varphi$ w.r.t.\ clients $A\subseteq C$ is $\cost_A(\varphi)=\sum_{c\in A}d(c,\varphi(c))+\sum_{f\in F:\varphi^{-1}(f)\cap A\neq \emptyset}w_f$. In words, this is the cost of connecting clients in $A$ to the corresponding facilities, plus the cost of opening the facilities that serve at least one client in $A$.
Our goal is to minimize $\mathbb{E}_{A\sim \pi}[\cost_A(\varphi)]$. The main motivation for \emph{universal} problems of this type is to allow a very quick (possibly distributed) reaction to requests that arrive over time. Let $\opt \colon C\to F$ minimize $\expected_{A\sim \pi}[\cost_A(\opt)]$, in other words $\opt$ is an optimal (universal) mapping. We say that an algorithm for {\sc univFL} is $\alpha$-approximate\footnote{In Section \ref{sec:related} we describe alternative ways to define the approximation ratio.} if it returns a universal mapping $\varphi$ satisfying $\mathbb{E}_{A\sim \pi}[\cost_{A}(\varphi)]\leq \alpha\cdot \mathbb{E}_{A\sim \pi}[\cost_A(\opt)]$.

Notice that the objective function of {\sc univFL} can be rewritten as
$$
  \sum_{c\in C}d(c,\varphi(c))\cdot \prob_{A\sim \pi}[\{c\}\cap A\neq \emptyset]+\sum_{f\in F}w_f\cdot \prob_{A\sim \pi}[\varphi^{-1}(f)\cap A\neq \emptyset].
$$
Hence {\sc univFL} is almost identical to SFL since $g(R) = \prob_{A\sim \pi}[R \cap A\neq \emptyset]$ is a monotone submodular function of $R$ which is $0$ for $R=\emptyset$. We can therefore adapt our techniques to achieve the following result (see \cref{sec:universal}). Let $\pi_{\min}\coloneqq \min_{c\in C}\{\prob_{A\sim \pi}[c\in A]\}$ be the smallest probability of any client to be activated. W.l.o.g. we will assume $\pi_{\min}>0$.
\begin{theorem}\label{thr:universalFL}
  There is a polynomial-time $O(\log\log \frac{n}{\pi_{\min}})$-approximation algorithm for the {\sc Universal Stochastic Facility Location} problem.
\end{theorem}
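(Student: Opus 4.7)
The plan is to apply the two-stage SFL framework essentially unchanged, after recasting {\sc univFL} as a weighted submodular facility location instance and bounding the effective metric size by $\poly(n/\pi_{\min})$.

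\textbf{Setup.} Using the rewriting of the objective displayed right before the theorem statement, $g(R):=\PP_{A\sim\pi}[R\cap A\neq\emptyset]$ is monotone submodular with $g(\emptyset)=0$; the per-facility opening cost acquires a weight $w_f$ and each connection cost $d(c,\varphi(c))$ acquires a weight $p_c:=\PP_{A\sim\pi}[c\in A]\geq \pi_{\min}$. Modulo these linear multipliers, this is exactly a {\sc multSFL} instance, so the configuration LP \eqref{eq:configuration-lp} and \cref{lem:solveConfLP} go through after inserting $w_f$ and $p_c$ as constant coefficients in the two cost terms; oracle access to $g$ is provided by querying $\pi$.

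\textbf{Preprocessing.} To control the FRT distortion in Stage~2, I would reduce $N=n+m$ to $\poly(n/\pi_{\min})$. Let $U$ be a polynomial upper bound on $\opt$ obtained from a trivial feasible assignment. For every client $c$ discard any facility $f$ with $p_c\cdot d(c,f)>U$: no near-optimal solution can use such a pair, since even the connection term alone would exceed $\opt$. After this pruning plus the usual discretization of distances to powers of two, the aspect ratio of the relevant metric is $\poly(n/\pi_{\min})$, and standard aggregation then brings the number of points $N$ down to $\poly(n/\pi_{\min})$ as well.

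\textbf{Two-stage algorithm.} Run the algorithm of \cref{sec:reducingConnection,sec:treeCase} verbatim with $k:=\lceil \ln\ln(n/\pi_{\min})\rceil$ sampling rounds. By linearity of expectation (which treats the $p_c$ and $w_f$ weights as transparent constants), Stage~1 pays $O(k)\cdot \opt_{\text{LP}}$ while the residual LP solution $\ddot x$ on still-uncovered clients $C_2$ satisfies
$$
\EE[\cost_{\conn}(\ddot x)]\leq \frac{\cost_{\conn}(\dot x)}{\ln(n/\pi_{\min})},\qquad \EE[\cost_{\open}(\ddot x)]\leq \cost_{\open}(\dot x).
$$
Stage~2 embeds the pruned metric into an HST via FRT, incurring expected distortion $O(\log(n/\pi_{\min}))$ on connection costs; the two factors cancel, giving fractional cost $O(\opt_{\text{LP}})$ on the tree. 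Reducing the tree instance to a weighted DLA instance and applying the rounding used for \cref{thr:mainMultiplicative} loses a further constant factor, yielding a total of $O(\log\log(n/\pi_{\min}))\cdot \opt_{\text{LP}}$.

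\textbf{Main obstacle.} The delicate step is the preprocessing that legitimately replaces $n$ by $n/\pi_{\min}$ in the FRT bound: one must check that pruning $(c,f)$ pairs via $p_c\cdot d(c,f)>U$ preserves a near-optimal universal mapping (a client whose only surviving facilities are still far away must be handled by an auxiliary ``default'' assignment whose cost is absorbed into $\opt$), and that the resulting discretized metric admits an HST of size $\poly(n/\pi_{\min})$. Once this is in place, the rest is a routine adaptation of the proofs of \cref{thr:main,thr:mainMultiplicative}, since the $p_c$ and $w_f$ weights enter every bound only as multiplicative constants and hence do not interact with either the sampling analysis or the DLA reduction.
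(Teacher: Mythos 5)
Your two-stage core is exactly the paper's route: rewrite the objective with $g(R)=\PP_{A\sim\pi}[R\cap A\neq\emptyset]$, treat the per-client factors $g(c)$ and per-facility weights $w_f$ as coefficients in the configuration LP, sample $\ln\ln(\cdot)$ rounds, embed the residual instance into an HST, and round via the weighted DLA machinery used for {\sc multSFL}; the paper confirms all of this goes through (the per-client scaling is transparent to Lemma~\ref{lem:DLAreduction}, and the $w_f$'s are handled as in Theorem~\ref{thr:mainMultiplicative}). The genuine gap is in the preprocessing paragraph, which is precisely where the bound $O(\log\log\frac{n}{\pi_{\min}})$ has to come from; without it one only gets $O(\log\log N+\log\log\frac{d_{\max}}{d_{\min}})$, which is what the paper's ``weaker'' first pass gives. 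Two specific steps do not work as stated. First, discarding pairs with $p_c\cdot d(c,f)>U$ plus ``discretization to powers of two'' does not bound the aspect ratio by $\poly(n/\pi_{\min})$: two clusters of clients, each with its own nearby facilities, can sit at an arbitrarily large mutual distance, so $d_{\max}$ is uncontrolled. One needs the component decomposition of Lemma~\ref{lem:reduceDistances} (delete edges longer than a guessed $L=\max_c d(c,\opt(c))$ and solve components separately, using $\cost(\opt)\geq\pi_{\min}L$ for the lower end), and even then the component diameter is only bounded by $N\cdot L$, so $m$ re-enters the aspect ratio. Second, ``standard aggregation brings $N$ down to $\poly(n/\pi_{\min})$'' is not a valid operation here: a metric of bounded aspect ratio can contain arbitrarily many points, and colocated or nearby facilities cannot be merged because they carry different weights $w_f$. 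The step that actually reduces $m$ is the weight-aware selection of Lemma~\ref{lem:reduceNumFacilitiesAffine} (with $p_f=0$): per client and per geometric distance ring keep only the minimum-$w_f$ facility, after first handling $m>2^n$ by the exact weighted-set-cover dynamic program; this yields $N'=O\bigl(n(n+\log\frac{1}{\pi_{\min}})\bigr)$, and a second application of the distance-scaling then gives aspect ratio $\poly(n/\pi_{\min})$, which is the paper's actual argument. You flag this as the ``main obstacle'' but do not supply the mechanism, and since everything else is routine, this is the missing content of the proof.

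A smaller point in the same paragraph: the bound $U$ from a ``trivial feasible assignment'' needs care. Assigning each client to its nearest facility is \emph{not} within a polynomial factor of $\opt$ in general, because that facility may have a huge weight $w_f$; you would need, e.g., per-client minimization of $p_c d(c,f)+w_f g(c)$ (giving $U\leq n\cdot\cost(\opt)$ by subadditivity), or simply guess $L$ among the $O(nm)$ candidate distances as the paper does.
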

For a comparison, Adamczyk, Grandoni, Leonardi and W\l{}odarczyk \cite{AGLW17} obtain an $O(\log n)$ approximation which also holds for non-metric distances. In the case of metric distances, they obtain an $O(1)$ approximation but only in the \emph{independent activation case}, i.e., when the sampled set $A$ of active clients is obtained by independently sampling each client $c$ according to some input probability $\pi'(c)$ for $k$ times.

\subsection{Lower Bounds}

We believe that it is plausible that SFL admits a constant approximation. In particular, one might consider greedy algorithms. In \cref{app:lower-bound} we consider a natural set-cover type greedy algorithm {\tt GreedySFL} for SFL. The same algorithm gives a $1.861$-approximation when applied to the classical {\sc Facility Location} problem~\cite{JMMSV03}. We show that this algorithm does not produce $O(1)$ approximate solutions for SFL, and in fact, it is not better than our algorithm from Theorem~\ref{thr:main}.

\begin{theorem}\label{thm:lower-bound}
  {\tt GreedySFL} has an approximation ratio at least $\Omega(\log\log n)$.
\end{theorem}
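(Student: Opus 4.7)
My plan is to exhibit a family of hard SFL instances on which {\tt GreedySFL} loses a factor $\Omega(\log\log n)$ against the optimum. Recall from \cref{app:lower-bound} that at each iteration {\tt GreedySFL} picks a pair $(f, R)$, consisting of a facility $f$ and a subset $R$ of currently unassigned clients, minimizing the \emph{density}
\[
\rho(f, R) \;:=\; \frac{g(R) + \sum_{c \in R} d(c, f)}{|R|},
\]
assigns all clients in $R$ to $f$, and iterates until every client is served. This is the natural SFL analog of the Jain--Mahdian--Markakis--Saberi--Vazirani greedy, which achieves a $1.861$-approximation for classical metric Facility Location, so the lower bound must exploit the non-linearity of $g$ rather than the metric geometry.

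The construction is recursive with $k$ levels, with $k$ chosen to be $\Theta(\log\log n)$. I take a doubly-exponentially growing sequence of level sizes $N_0 \ll N_1 \ll \cdots \ll N_{k-1}$, e.g.\ $N_i \sim 2^{2^i}$, so that $n = \sum_i N_i \approx 2^{2^k}$ and hence $\log\log n = \Theta(k)$. Each level $i$ consists of $N_i$ clients together with a cheap \emph{dedicated} facility $h_i$ at unit distance from all level-$i$ clients (and far from all other clients), plus a common \emph{shared} facility $s$ at unit distance from every client. The submodular function $g$ is taken to be a nonnegative combination of a concave function of cardinality and several weighted coverage-type functions, chosen so that $g$ is monotone submodular with $g(\emptyset)=0$, and so that the per-client cost of $s$ on a ``diagonal'' mixed set containing clients from many different levels is much smaller than any density greedy can attain by committing only to a single $h_i$.

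The analysis then has two principal steps: (a) an inductive argument showing that at each of the first $\Omega(k)$ greedy iterations the density-minimizing pair is of the form $(s, R^{(i)})$ with $R^{(i)}$ a mixed subset spanning all still-alive levels, so that {\tt GreedySFL} repeatedly ``reopens'' $s$ at total cost $\Omega(k)$; and (b) exhibiting an explicit feasible solution that routes every level-$i$ client to $h_i$ and pays only $O(1)$ in total. Combining (a) and (b) gives an approximation ratio $\Omega(k) = \Omega(\log\log n)$.

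The main obstacle is step~(a): the weights defining $g$ and the sizes $N_i$ must be calibrated so that, after each ``bad'' greedy commitment, the residual instance on the still-unassigned clients is essentially isomorphic to the same construction with one fewer level, which is what enables the induction. The doubly-exponential growth of the $N_i$ is forced precisely because the marginal submodular cost flattens out as greedy progresses, and sustaining $\Omega(1)$ cost per phase while keeping $\opt = O(1)$ demands that later levels dominate earlier ones in size. Once the parameters are in place, verifying monotone submodularity of $g$ with $g(\emptyset)=0$ is routine (concave-of-cardinality and coverage functions are classically submodular), and the desired $\Omega(\log\log n)$ lower bound follows.
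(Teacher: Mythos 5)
Your proposal is a plan, not a proof, and the part you yourself flag as ``the main obstacle'' is exactly the content that is missing. You never write down the submodular function $g$: saying it is ``a nonnegative combination of a concave function of cardinality and several weighted coverage-type functions, chosen so that\ldots'' defers the entire construction, and the inductive claim (a) --- that for $\Omega(k)$ rounds the density-minimizing pair is a mixed set at the shared facility $s$, and that the residual instance is isomorphic to the same construction with one fewer level --- is asserted but never argued. In the paper's proof this is precisely where all the work lies: one needs quantitative structural lemmas about $g$ (there, Lemmas~\ref{l:dim_split} and \ref{l:singleton} for a hypercube-based $g$ defined as $\expected_{A\sim\pi}[|U(R,A)|]$ with edge weights $p_i=\tfrac{1}{2(\dime+1-i)}$) to pin down exactly which sets greedy can select, including ruling out that greedy augments an already-open facility or grabs a cheap singleton at a dedicated facility instead of your intended ``bad'' mixed set. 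Without a concrete $g$ and these comparisons, there is no way to check that your calibration is even possible subject to monotone submodularity.

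There is also an inconsistency in your cost accounting as stated. You place every client at unit distance from its dedicated facility $h_i$ (and from $s$), yet claim a feasible solution of total cost $O(1)$; connection costs alone would already be $n$, so $\opt=O(1)$ is impossible in your metric. If instead you shrink the distances so that connection costs are negligible, then greedy's claimed $\Omega(1)$ cost per phase must come entirely from marginal opening costs of the form $g(R\cup T)-g(T)$ (note your density formula also omits the marginal term with respect to the set $T$ already served at $f$, which is part of the algorithm being analyzed), and submodularity makes these marginals shrink as greedy progresses --- which is exactly the tension your ``calibration'' would have to resolve and does not. So the gap is genuine: the statement may well be provable along recursive-levels lines, but as written neither the instance nor the greedy-trajectory analysis exists, whereas the paper supplies both via an explicit hypercube construction where $\opt\le 2^{\dime}$ and greedy provably pays $2^{\dime-1}H_{\dime}$.
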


The lower bound is based on the construction of a family of instances, parameterized by $\dime \in \mathbb{N}$, where the underlying metric is defined on a hypercube of dimension $\dime$.
We define $\dime \cdot 2^{\dime}$ clients and $O(\dime \cdot 2^{\dime})$ facilities.
The cost of a feasible solution is $2^{\dime}$, but the cost of the greedy algorithm is $\Theta(\log(\dime) \cdot 2^{\dime})$, hence the lower bound follows.
The main challenge in our construction is to define a submodular function $g(\cdot)$ on all subsets of clients.
We present a compact way of defining $g(\cdot)$: it is the expected value of a random process that depends on carefully designed distances (induced by edge weights) in the hypercube.

\subsection{Related Work}\label{sec:related}

In the {\sc (Metric Uncapacitated) Facility Location} problem (FL) we are given a set of clients and a set of facilities in a metric space $d$, where each facility has an opening cost $o_f$.
One has to select a subset of facilities $F'\subseteq F$ and assign each client $c$ to the closest facility $F'(c)$ in $F'$ so as to minimize $
\sum_{c\in C}d(c,F'(c))+\sum_{f\in F'}o_f$.
FL is a special case of both {\sc addSFL} and {\sc multSFL} (and of SFL in the case of uniform opening costs).
FL is among the best-studied problems in the literature from the point of view of approximation algorithms (see, e.g., \cite{CG05,MYZ06,STA97}). It is known to be APX-hard~\cite{guha1999greedy} and the current best-known 1.488-approximation algorithm~\cite{li20131} is a randomized combination of the greedy JMS algorithm~\cite{JMMSV03} with an LP-rounding algorithm from~\cite{byrka2010optimal}. Lagrangian-multiplier preserving algorithms for FL are at the heart of several approximation algorithms for fundamental clustering problems, including {\sc $k$-Median} \cite{ANSW20,BPRST17,CEMN22,CGLS23,GPST23,JMMSV03,JV01,LS16} and {\sc $k$-Means} \cite{ANSW20,CEMN22,GORSV22}.

Various variants of FL were studied in the literature and for most of them (at least with metric connection costs) a constant approximation was eventually discovered. A notable example is the {\sc Capacitated Facility Location} problem in which the number of clients that can be served from a facility is restricted by a location-specific bound. A local-search-based constant approximation for the latter problem is given in \cite{zhang2005multiexchange} (see also \cite{an2017lp} for a more recent LP-based result). SFL is one of the most natural generalizations of (metric) FL where a constant approximation is still not known.

Grandoni, Gupta, Leonardi, Miettinen, Sankowski, and Singh \cite{GGLMSS13}, among other universal stochastic problems, studied {\sc univFL} in the independent activation case. However, they compare the cost of their solution with $\expected_{A\sim \pi}[\cost_A(\opt(A))]$, where $\opt(A)$ is the optimal facility location solution restricted to clients $A$ (while we compare with $\expected_{A\sim \pi}[\cost_A(\opt)]$). For this setting they obtain a $O(\log n)$ approximation, which also holds for non-metric connection costs.

Gupta, P{\'a}l, Ravi, and Sinha \cite{GPRS11} consider a 2-stage stochastic version of FL. Here in a first stage, one buys some facilities, then a subset of active clients is sampled from a given distribution. Finally, one can buy some more facilities, however at an opening cost which is increased by a multiplicative \emph{inflation factor} $\sigma$. For this setting they present a constant approximation.

Universal stochastic problems have a natural online stochastic counterpart. For example, in the {\sc Online Stochastic Facility Location} problem clients are sampled one by one, and when client $c$ is sampled one has to connect $c$ to an already open facility or open a new facility $f$ and connect $c$ to $f$. Garg, Gupta, Leonardi and Sankowski \cite{GGLS08} consider this problem in the independent activation case, i.e. when the next client to be served is sampled from a probability distribution $\pi:C\to \mathbb{R}_{\geq 0}$. For this setting, they present an $O(1)$ approximation. Meyerson \cite{Meyerson01} studied a variant of the problem where an adversary chooses the set of input clients, and then a random permutation of them is presented in input (\emph{random order model}).

\subsection{Preliminaries and Notation}\label{sec:preliminaries}

We use $\ln$ and $\log$ for the logarithm with base 2 and $\ln$ for the natural logarithm.
Define $X = C \cup F$, and $N=|X|=|C \cup F|$. Given a metric $d$ over $X$, we let $d_{\min}$ be the smallest non-zero distance and $d_{\max}$ be the largest distance (that we assume to be positive w.l.o.g). We use $g(c)$ as a shortcut for $g(\{c\})$.

We sometimes express a feasible solution to SFL in the form $S=(S^f)_{f\in F}$, where $S^f\subseteq C$ specifies the clients $\varphi^{-1}(f)$ assigned to $f$. Notice that for each $c\in C$ there is precisely one $f\in F$ with $c\in S^f$. We define a \emph{partial assignment} as $S=(S^f)_{f\in F}$, where $S^f\subseteq C$. We say that $S$ covers the clients $C'=\cup_{f\in F}S^f\subseteq C$. Notice that, for technical reasons, in a partial assignment we allow $S^f\cap S^{f'}\neq \emptyset$ for two distinct $f,f'\in F$ (i.e. we allow to simultaneously assign a client to more than one facility). The cost of a (partial) assignment $S$ of the above type is defined as
$
  \cost(S)\coloneqq \conn(S)+\open(S),
$
where $\conn(S)\coloneqq \sum_{f\in F}\sum_{c\in S^f}d(c,f)$ is the connection cost of $S$ and $\open(S)\coloneqq \sum_{f\in F}g(S^f)$ is the opening cost of $S$. Given a (possibly infeasible) fractional solution $x$ for \eqref{eq:configuration-lp}, we analogously define $\cost(x)=\conn(x)+\open(x)$, where $\conn(x)=\sum_{c\in C} \sum_{f\in F} \sum_{R\ni c} d(c,f) \cdot x_R^f$, and $\open(x)=\sum_{f \in F} \sum_{R \subseteq C} g(R) \cdot x_R^f$.

It is convenient to define the merge $S=S_1+S_2$ of two partial assignments $S_1$ and $S_2$ naturally as follows: (1) for each facility $f\in F$, we initially set $S^f\coloneqq S^f_1\cup S^f_2$; (2) while there exist two distinct facilities $f$ and $f'$ with $S^{f}\cap S^{f'}\neq \emptyset$, replace $S^{f'}$ with $S^{f'}\setminus S^f$ (intuitively this second step guarantees that each client is assigned to no more than one facility). We observe that merging two partial assignments cannot increase the total cost.

\begin{lemma}\label{lem:merge}
  For any two partial assignments $S_1$ and $S_2$, $\cost(S_1+S_2)\leq \cost(S_1)+\cost(S_2)$.
\end{lemma}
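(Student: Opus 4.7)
The plan is to separately bound the connection and opening parts of $\cost(S_1+S_2)$ by the corresponding parts of $\cost(S_1)+\cost(S_2)$, tracking what happens in the two stages of the merge operation: the initial union $S^f := S_1^f\cup S_2^f$, and the subsequent deletions $S^{f'} \leftarrow S^{f'}\setminus S^f$.

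For the connection cost, I would first observe that
\[
\sum_{f\in F}\sum_{c\in S_1^f\cup S_2^f}d(c,f) \;\leq\; \sum_{f\in F}\sum_{c\in S_1^f}d(c,f) \;+\; \sum_{f\in F}\sum_{c\in S_2^f}d(c,f) \;=\; \conn(S_1)+\conn(S_2),
\]
since a client $c\in S_1^f\cap S_2^f$ contributes $d(c,f)$ twice on the right-hand side but only once on the left. The deletion phase of the merge only removes clients from some $S^{f'}$, which eliminates nonnegative terms $d(c,f')$ from the sum, so $\conn(S_1+S_2)\leq \conn(S_1)+\conn(S_2)$.

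For the opening cost, the key tool is subadditivity: since $g$ is monotone submodular with $g(\emptyset)=0$, we have $g(A\cup B)\leq g(A)+g(B)$ for any $A,B\subseteq C$ (this follows from $g(A\cup B)+g(A\cap B)\leq g(A)+g(B)$ together with $g(A\cap B)\geq g(\emptyset)=0$). Applied facility-by-facility after the initial union step, this gives $\sum_f g(S_1^f\cup S_2^f)\leq \sum_f g(S_1^f)+\sum_f g(S_2^f) = \open(S_1)+\open(S_2)$. Then, in the deletion phase, each operation replaces some $S^{f'}$ with a subset of itself, and by monotonicity of $g$ each such replacement cannot increase $g(S^{f'})$, hence cannot increase the total opening cost. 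Combining both bounds yields $\cost(S_1+S_2)\leq \cost(S_1)+\cost(S_2)$.

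There is no real obstacle here; the only subtlety is making explicit that submodularity plus $g(\emptyset)=0$ yields subadditivity, and that both steps of the merge are cost-nonincreasing thanks to nonnegativity of distances and monotonicity of $g$, respectively.
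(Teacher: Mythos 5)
Your proposal is correct and follows essentially the same two-step argument as the paper: bound the cost after the union step via subadditivity of $g$ (for opening) and nonnegativity of distances (for connection), then note the deletion step is cost-nonincreasing by monotonicity of $g$. Indeed, you are slightly more careful than the paper, which states $\conn(S')=\conn(S_1)+\conn(S_2)$ after the union step, whereas an inequality (as you write) is the precise statement when $S_1^f\cap S_2^f\neq\emptyset$.
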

\begin{proof}
Let $S=S_1+S_2$, and $S'$ be the intermediate value of $S$ obtained by executing only step (1) of the merge operation. One has $\conn(S')=\conn(S_1)+\conn(S_2)$. Furthermore, by the submodularity (hence subadditivity) of $g(\cdot)$, $\open(S')\leq \open(S_1)+\open(S_2)$.
Clearly $\conn(S)\leq \conn(S')$, and the monotonicity of $g(\cdot)$ implies that $\open(S)\leq \open(S')$. The claim follows.
\end{proof}
We will exploit the following fairly standard reductions (proofs in Section \ref{sec:omitted}), thanks to which in the following it will be sufficient to obtain an $O(\log\log N)$ approximation for SFL. In order to distinguish between distinct instances $J$ of the problem, we use $\cost_J(\varphi)$ to denote the cost of $\varphi$ w.r.t.\ $J$ and define similarly $\open_J(\varphi)$ etc.
\begin{lemma}\label{lem:reduceNumberFacilities}
  There is a $3$-approximate reduction from SFL to the special case where $m=n$.
\end{lemma}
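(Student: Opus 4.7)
The plan is to reduce the facility set to a natural ``representative'' subset of size at most $n$ (padding by duplicates if we want exactly $n$ facilities), losing a factor of $3$ in the approximation. The argument is the standard one for metric facility location, adapted to the submodular opening cost.

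First I would build the reduced instance. For each client $c\in C$, let $f_c\in\argmin_{f\in F}d(c,f)$ be its closest facility, and set $F'=\{f_c:c\in C\}$, so $|F'|\leq n$. Keep the same client set, the same opening function $g(\cdot)$, and the metric $d$ restricted to $C\cup F'$ (with dummy duplicate facilities added if strict equality $m=n$ is needed, which is harmless since duplicates are never used in an optimal solution).

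Next I would show: from any feasible assignment $\varphi:C\to F$ in the original instance we can construct $\varphi':C\to F'$ with $\cost_{J'}(\varphi')\leq 3\,\cost_J(\varphi)$. For each $f\in F$ with $S^f=\varphi^{-1}(f)\neq\emptyset$, pick a representative $c_f^*\in\argmin_{c\in S^f}d(c,f)$, and assign every $c\in S^f$ to $\varphi'(c)\coloneqq f_{c_f^*}\in F'$. Then for the connection cost, the triangle inequality and the definitions of $c_f^*$ and $f_{c_f^*}$ yield
\[
d(c,f_{c_f^*})\leq d(c,c_f^*)+d(c_f^*,f_{c_f^*})\leq d(c,c_f^*)+d(c_f^*,f)\leq d(c,f)+2d(c_f^*,f)\leq 3\,d(c,f),
\]
so $\conn_{J'}(\varphi')\leq 3\,\conn_J(\varphi)$. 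For the opening cost, for each $f'\in F'$ we have $\varphi'^{-1}(f')=\bigcup_{f:\,f_{c_f^*}=f'}S^f$, and subadditivity (which follows from monotonicity and submodularity of $g$ with $g(\emptyset)=0$) gives $g(\varphi'^{-1}(f'))\leq \sum_{f:\,f_{c_f^*}=f'}g(S^f)$. Summing over $f'\in F'$ yields $\open_{J'}(\varphi')\leq \open_J(\varphi)$. Combining, $\cost_{J'}(\varphi')\leq 3\,\cost_J(\varphi)$.

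Finally I would close the reduction in the other direction. Since $F'\subseteq F$, any assignment in the reduced instance $J'$ is also feasible in the original instance $J$ with the same cost. Hence if $\calla'$ is an $\alpha$-approximation algorithm for the $m=n$ case, running $\calla'$ on $J'$ and interpreting the output in $J$ produces a solution of cost at most $\alpha\cdot\optopt(J')\leq 3\alpha\cdot\optopt(J)$, giving a $3\alpha$-approximation for the original instance. There is no real obstacle here; the only point requiring attention is the opening-cost bookkeeping, where one must be careful that multiple original facilities may collapse to the same representative in $F'$, which is handled by subadditivity of $g$.
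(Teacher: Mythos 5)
Your proposal is correct and follows essentially the same route as the paper: pick each client's nearest facility to define the reduced instance, and transfer an optimal solution by routing each facility's clients through that facility's closest assigned client, with the triangle inequality giving the factor $3$. The only notable difference is that the paper attaches a private dummy copy of the nearest facility to each client (so distinct facilities of $\opt$ map to distinct dummies and opening costs are preserved exactly), whereas your subset construction may merge several facilities onto one representative and therefore needs the subadditivity of $g(\cdot)$, which you correctly invoke.
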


\begin{lemma}\label{lem:reduceDistances}
  For any constant $\eps>0$, There is a $(1+4\eps)$-approximate reduction from SFL to the special case where the metric $d$ satisfies $d_{\min}=2$ and $d_{\max}\leq \frac{2nN}{\eps}$.
\end{lemma}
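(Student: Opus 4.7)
The plan is to use the textbook ``guess $\opt$, then discretize distances'' preprocessing. First I would enumerate a polynomial-size set $\mathcal B$ of candidate values for $\opt$: a trivial upper bound $U := \min_{f \in F}\bigl(g(C) + \sum_c d(c,f)\bigr)$ is obtained by assigning every client to a single facility, while a positive lower bound $L$ on $\opt$ (assuming $\opt > 0$, which is the non-trivial case) follows from the input precision. Taking $\mathcal B := \{L(1+\eps)^i : L(1+\eps)^i \le U\}$ guarantees that some guess $B^* \in \mathcal B$ satisfies $\opt \le B^* \le (1+\eps)\opt$. The reduction tries each guess, runs the downstream approximation on the corresponding reduced instance, maps the returned solution back to the original metric, and outputs the cheapest one.

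Second, for each guess $B$ I would produce the reduced metric by two simple operations: (a) cap from above, $d'(x,y) := \min\{d(x,y), NB\}$; (b) round up from below, $d''(x,y) := \max\{d'(x,y), \eps B / n\}$ for $x \ne y$ (and $d''(x,x) := 0$). Both preserve the triangle inequality (capping is standard, and the pointwise max with a constant is a one-line case analysis). A final uniform rescaling by $\lambda := 2n/(\eps B)$ gives $\tilde d := \lambda d''$ and $\tilde g(R) := \lambda g(R)$, so that the whole objective scales by $\lambda$. By construction $\tilde d_{\min} = \lambda \cdot \eps B/n = 2$ and $\tilde d_{\max} = \lambda \cdot NB = 2nN/\eps$, matching the statement.

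Third, I would verify the $(1+4\eps)$-loss at the good guess $B^*$. Every edge used by the original optimum $\varphi^*$ satisfies $d(c,\varphi^*(c)) \le \opt \le B^* \le NB^*$, so the cap never triggers on $\varphi^*$; the bump-up adds at most $\eps B^* \le \eps(1+\eps)\opt$ in total, giving $\opt(\tilde J_{B^*}) \le \lambda(1+2\eps)\opt$. For the reverse direction, let $\varphi$ be an $\alpha$-approximate solution on $\tilde J_{B^*}$. The key observation is that $\varphi$ never uses a pair $(c,\varphi(c))$ with $d(c,\varphi(c)) > NB^*$: any such pair alone contributes $\lambda NB^* = 2nN/\eps$ to the scaled reduced cost, whereas the entire $\alpha$-approximate cost is at most $\alpha\lambda(1+2\eps)\opt \le 2\alpha n(1+2\eps)/\eps$, strictly smaller whenever $\alpha < N/(1+2\eps)$, which comfortably holds for the $O(\log\log n)$-approximation we target. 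Hence the cap is inactive on every used pair, so $d(c,\varphi(c)) = d'(c,\varphi(c)) \le d''(c,\varphi(c))$, and the original cost of $\varphi$ is at most its unscaled reduced cost, i.e., at most $\alpha(1+2\eps)\opt \le \alpha(1+4\eps)\opt$.

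The one delicate point is this implicit dependence on the downstream $\alpha$ when ruling out capped pairs: a fully black-box version of the reduction would require either a larger (still $\poly(N)$) cap or a local reassignment step that sends any client routed through a capped pair to its closest facility within distance $B^*$. Neither modification is needed here since the intended downstream approximation is sub-polynomial. The remaining bookkeeping --- confirming that $d''$ is a metric, treating $\opt = 0$ as a trivial special case, and bounding $|\mathcal B|$ --- is routine.
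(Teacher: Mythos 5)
Your proposal is correct for the way the lemma is actually used in this paper, but it takes a genuinely different route and carries one caveat that the paper's construction avoids. The paper does not guess $\opt$; it guesses $L=\max_{c\in C}d(c,\opt(c))$, which is one of at most $nm$ input distances, so the enumeration is strongly polynomial and no ``input precision'' argument is needed (your positive lower bound on $\opt$ does need a word, e.g.\ the minimum of $d_{\min}$ and the least positive singleton value $g(c)$ works, since by subadditivity $g(S)>0$ forces $g(c)>0$ for some $c\in S$). Instead of capping long distances, the paper deletes every edge of length more than $L$, treats each connected component as a separate instance, and merges points whose $\frac{\eps}{2n}L$-balls intersect; since edge deletion only increases shortest-path distances and each point moves by at most $\frac{\eps}{n}L$, mapping \emph{any} solution of the reduced instance back to the original metric costs only an additive $2\eps L\leq 2\eps\cdot\cost(\opt)$. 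That is what the component decomposition buys: the reduction is $(1+4\eps)$-approximate unconditionally, for every downstream ratio $\alpha$. Your cap-and-round metric does not dominate the original metric on capped pairs, which is exactly why your reverse direction needs $\alpha<N/(1+2\eps)$; as written, your reduction is conditional on the downstream algorithm being sub-polynomial, whereas the lemma as stated (a black-box approximate reduction) is not. You flagged this yourself, and your patch is sound: a client routed through a capped pair can be reassigned to some facility within distance $B^*$ (one exists since $d(c,\opt(c))\leq\opt\leq B^*$), the opening cost increases by at most $g(c)\leq\opt$ by submodularity and monotonicity, and at most $\alpha(1+2\eps)/N$ clients can sit on capped pairs, so the total correction is $O(\alpha\opt/N)$ --- but that patch (or the paper's component trick) is genuinely needed if you want the lemma in full generality. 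The remaining pieces of your argument --- the forward direction, metricity of the capped and bumped-up distances, and the scaling giving $d_{\min}\geq 2$ and $d_{\max}\leq \frac{2nN}{\eps}$ (rescale once more if you want $d_{\min}$ exactly $2$; the paper's own construction has the same cosmetic slack) --- all check out.
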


One of the key tools that we use is the notion of probabilistic tree embedding, which we use to map the input metric into a metric on a \emph{hierarchically well-separated tree} (HST) while stretching the distances by a small enough factor. We recall that an HST is an edge weighted rooted tree where all the leaves are at the same distance from the root $r$. Furthermore, on every path from a leaf to $r$ the edge weights are $1,2,4,\dots$ In particular, edges at the same level have the same weight. We will use the following construction\footnote{We slightly and trivially extend their claim to consider nodes at distance $0$.} by Fakcharoenphol, Rao and Talwar~\cite{FakcharoenpholRT04}.

\begin{theorem}[FRT metric tree embedding~\cite{FakcharoenpholRT04}]\label{thm:frt-tree-embedding}
    For any finite metric space $(M,d)$ with $d_{\min}>1$, there exists a randomized polynomial-time algorithm returning an HST $T$ such that:
    \begin{enumerate}\itemsep0pt
        \item Every $a\in M$ is mapped to some leaf $v(a)$ of $T$ (with elements at distance zero being mapped to the same leaf);
        \item Let $d^T(v(a),v(b))$ be the length of the path between the leaves $v(a)$ and $v(b)$ of $T$. Then $d^T\left(v(a),v(b)\right) \geq d(a,b)$ and $\expected\left[d^T(v(a),v(b))\right] \leq 8 \log |M| \cdot d(a,b)$;
        \item $T$ has depth $O(\log d_{\max})$.
    \end{enumerate}
\end{theorem}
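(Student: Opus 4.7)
The plan is to reproduce the construction of Fakcharoenphol, Rao and Talwar directly. First I would rescale so that all nontrivial distances lie in $[2, 2\cdot d_{\max}]$ (harmless since $d_{\min}>1$) and set $\delta=\lceil \log d_{\max}\rceil+1$ so that no pair of points is farther than $2^\delta$. The randomness consists of two independent objects: a uniformly random permutation $\pi$ of $M$, and a scaling parameter $\beta$ sampled on $[1,2)$ with density proportional to $1/x$ (equivalently, $\beta=2^U$ for $U$ uniform on $[0,1)$). For each level $i\in\{0,1,\dots,\delta\}$ I partition $M$ by assigning each $x\in M$ to the cluster whose center is $\pi(j_x^{(i)})$, where $j_x^{(i)}$ is the smallest index $j$ with $d(x,\pi(j))\leq \beta\cdot 2^{i-1}$. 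Since the radius doubles with $i$, the level-$i$ partition refines the level-$(i+1)$ partition, so these partitions form a laminar family, and $T$ is the tree having one node per non-empty cluster with edges of weight $2^{i-1}$ connecting a level-$i$ cluster to each of its level-$(i-1)$ sub-clusters.

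Items (1) and (3) of the theorem then fall out of the construction. Distinct points at distance $\geq 2$ are never placed in a common cluster at level $0$ (whose radius is at most $\beta/2<1$), so level $0$ is the partition into singletons, while two points at distance $0$ are assigned to the same center at every level and hence to the same leaf. Since there are $\delta+1=O(\log d_{\max})$ levels, $T$ has the claimed depth. The non-contraction half of (2) also follows directly: if $a,b$ first share a cluster at level $i^\ast$, they lie within $\beta\cdot 2^{i^\ast-1}\leq 2^{i^\ast}$ of a common center, so $d(a,b)\leq 2\cdot 2^{i^\ast}$, while $d^T(v(a),v(b))=2\sum_{j=0}^{i^\ast-1}2^j = 2^{i^\ast+1}-2\geq d(a,b)$.

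The main technical step, and the step I expect to be the main obstacle, is the expectation bound $\mathbb{E}[d^T(v(a),v(b))]\leq 8\log|M|\cdot d(a,b)$. I would write the distance as $d^T(v(a),v(b))=\sum_{i=1}^{\delta+1} 2^i\cdot \mathbf{1}\bigl[a,b \text{ are separated at level } i-1\bigr]$ and order the points $z_1,\dots,z_n$ of $M$ by $\min\bigl(d(z,a),d(z,b)\bigr)$. The crucial observation, which is FRT's padding argument, is that $z_k$ can be the point \emph{responsible} for separating $a$ from $b$ at level $i$ only if (i) the random radius $\beta\cdot 2^{i-1}$ lies in an interval of length $d(a,b)$ determined by $d(z_k,a)$ and $d(z_k,b)$, and (ii) $z_k$ precedes all of $z_1,\dots,z_{k-1}$ in $\pi$ (otherwise one of those closer points would have captured $a$ or $b$ first). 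Event (ii) has probability exactly $1/k$ by symmetry, and summing the probabilities in (i) over all levels telescopes (because the admissible radius windows at different levels tile the log-scale range) into a contribution of order $d(a,b)$, with the $1/x$ density of $\beta$ converting the geometric weight $2^i$ into a clean constant. Aggregating over $k$ yields a bound of $c\cdot d(a,b)\cdot H_n$; tracking the constants with the specific density of $\beta$ recovers the quoted factor $8\log|M|$. The delicate part will be this constant-tracking step: ensuring that the geometric weights $2^i$ are absorbed exactly by the $\beta$ probability so that no spurious $\log d_{\max}$ factor survives and only the harmonic sum in $|M|$ remains.
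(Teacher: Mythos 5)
The paper does not actually prove this statement: Theorem~\ref{thm:frt-tree-embedding} is imported from Fakcharoenphol, Rao and Talwar \cite{FakcharoenpholRT04}, with only a footnoted, trivial extension mapping points at distance zero to the same leaf. So your proposal is a re-derivation of the cited result rather than an alternative to an in-paper argument, and its high-level structure (random permutation plus log-uniform $\beta$, charging separation of $a,b$ at level $i$ to the $k$-th closest point with probability $1/k$, and letting the $1/x$ density of $\beta$ absorb the geometric weights $2^i$ so that only the harmonic sum in $|M|$ survives) is indeed FRT's proof.

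Two steps, however, are wrong or incomplete as written. First, the claim that the level-$i$ partition automatically refines the level-$(i+1)$ partition is false: under the rule ``assign $x$ to the smallest-index center within radius $\beta 2^{i-1}$'', two points can share a center at level $i$ yet receive different centers at level $i+1$ (take centers $c_1,c_2$ with $c_1$ earlier in $\pi$, $d(x,c_2)=d(y,c_2)=r$, $d(x,c_1)=2r$, $d(y,c_1)=3r$: at radius $r$ both go to $c_2$, at radius $2r$ the point $x$ switches to $c_1$ while $y$ cannot). FRT obtain laminarity by construction, performing the level-$i$ decomposition \emph{inside} each level-$(i+1)$ cluster (children are intersections with the parent cluster), and the stretch analysis must then be phrased in terms of the highest level at which the center assignments of $a,b$ differ. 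Second, your non-contraction argument does not close: sharing a level-$i^\ast$ cluster only gives $d(a,b)\le 2\beta\,2^{i^\ast-1}<2^{i^\ast+1}$, whereas $d^T(v(a),v(b))=2^{i^\ast+1}-2$, so $d^T\ge d$ does not follow as stated; one needs the offset between cluster radii and edge weights (or an initial rescaling) arranged so that the diameter of a level-$i$ cluster is at most $2(2^i-1)$, which is precisely the bookkeeping done in \cite{FakcharoenpholRT04}. Finally, the constant $8\log|M|$ is asserted but, as you acknowledge, not tracked, so the expectation bound remains a sketch. None of this affects the paper, whose correct move is simply to cite the theorem.
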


For a given set $C$, let $h\colon 2^C\to \RR$ be a
monotone submodular function with $h(\emptyset)=0$. The \emph{Lov\'{a}sz extension} $\hat{h}\colon [0,1]^C \rightarrow \RR$ of $h(\cdot)$ is defined as
\begin{equation}\label{eq:lovasz-extension}
  \hat h(y) \coloneqq \min\Big\{\;\sum_{R \subseteq C}h(R)\mu_R:\ \sum_{R\subseteq C}\sum_{R \ni c}\mu_R = y_c \ \forall c\in C,\ \sum_{R \subseteq C} \mu_R=1,\ \mu\geq0 \;\Big\}.
\end{equation}
The function $\hat h(\cdot)$ is convex. We remark that $\hat h(y)$ can be alternatively defined as
\begin{equation}\label{eq:lovasz-extension-bis}
    \hat h(y) \coloneqq
        \sum_{k=1}^{n-1} h\left(\{c_1, \dots, c_k\}\right)
        (y_{c_k} - y_{c_{k+1}}) + h(C) y_{c_{n}}
\end{equation}
where the components of $y$ are sorted in decreasing order, i.e. $y_{c_1} \geq y_{c_2} \geq \dots \geq y_{c_{n}}$ \cite[Section~6.3]{fujishige2005submodular}.
By the monotonicity of $h(\cdot)$, $\hat h(\cdot)$ is also non-decreasing in the sense that $\hat h(y) \geq \hat h(y')$ if $y\geq y'$.

\section{Reducing the Connection Cost}\label{sec:reducingConnection}

In this section, we show how to compute a random partial assignment $S_1=(S_1^f)_{f\in F}$ covering a random subset of clients $C_1\coloneqq \cup_{f\in F}S_1^f\subseteq C$ with the following high-level properties: the expected cost of $S_1$ is ``small enough'' and (2) each client belongs to $C_1$ with ``large enough'' probability. In the next section, we will describe a different partial assignment $S_2=(S_2^f)_{f\in F}$, again of small enough cost, covering the remaining clients $C_2\coloneqq C\setminus C_1$. By merging these two partial assignments we obtain a feasible solution for the input problem of small enough total cost.

Let $\dot{x}$ be an optimal solution to \eqref{eq:configuration-lp} with at most $\poly(N)$ non-zero entries that can be computed via Lemma \ref{lem:solveConfLP}. The basic idea behind the next lemma is fairly standard: we sample partial assignments according to the distribution induced by $\dot{x}$ for $\ln\ln N$ times, and merge them together.

\begin{lemma}\label{lem:sampling}
  In polynomial time one can compute a random partial assignment $S_1$ covering a random subset of clients $C_1$ such that: (1) $\expected\left[\cost(S_1)\right] \leq \ln\ln (N) \cdot \cost(\dot{x})$ and (2) For each $c\in C$, $\prob[c\in C_1]\geq 1-\frac{1}{\ln N}$.
\end{lemma}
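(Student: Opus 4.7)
The plan is straightforward: produce $S_1$ as the merge of $k := \lceil \ln\ln N \rceil$ partial assignments sampled independently from the product distribution that $\dot{x}$ induces at each facility. Concretely, to draw a single partial assignment $\tilde S^{(i)} = (\tilde S^{(i),f})_{f\in F}$, for each facility $f$ independently I would pick the set $R$ with probability $\dot{x}^f_R$ (a valid distribution because $\sum_{R}\dot{x}^f_R = 1$ in \eqref{eq:configuration-lp}). Since $\dot{x}$ has $\poly(N)$ non-zero entries by \cref{lem:solveConfLP}, each draw runs in polynomial time, so the overall construction $S_1 = \tilde S^{(1)} + \cdots + \tilde S^{(k)}$ (using the merge operation from the preliminaries) is polynomial-time as well.

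For property (1), I would first verify that a single sample has expected cost exactly $\cost(\dot{x})$. Splitting into opening and connection cost and using linearity of expectation,
\[
\expected[\open(\tilde S^{(i)})] = \sum_{f}\sum_{R} g(R)\,\dot{x}^f_R = \open(\dot{x}), \qquad \expected[\conn(\tilde S^{(i)})] = \sum_{f}\sum_{R}\dot{x}^f_R\sum_{c\in R}d(c,f) = \conn(\dot{x}).
\]
Iterating \cref{lem:merge} yields $\cost(S_1) \leq \sum_{i=1}^{k} \cost(\tilde S^{(i)})$ pointwise, and then linearity of expectation gives $\expected[\cost(S_1)] \leq k\cdot \cost(\dot{x})$, matching $\ln\ln N \cdot \cost(\dot{x})$ up to the ceiling (a $(1+o(1))$ factor that can be absorbed into the final $O(\log\log N)$).

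For property (2), I would fix $c\in C$ and set $p^f_c := \sum_{R\ni c}\dot{x}^f_R$, so that $\sum_{f} p^f_c = 1$ by the client constraint of \eqref{eq:configuration-lp}. Since the per-facility draws within one sample are independent, the probability that $c$ is covered by no facility in sample $i$ is $\prod_{f}(1-p^f_c) \leq \exp(-\sum_{f} p^f_c) = e^{-1}$ via $1-x\leq e^{-x}$. Independence across the $k$ samples gives $\prob[c\notin C_1] \leq e^{-k} \leq 1/\ln N$, as required.

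There is no real obstacle here; the one point of care is that sampling facilities independently may place a single client into several $\tilde S^{(i),f}$ simultaneously. This is precisely why the paper allows partial assignments with $S^f\cap S^{f'}\neq \emptyset$, and why invoking \cref{lem:merge} (rather than a naive subadditivity argument on $g$ alone) is the right tool for bounding $\cost(S_1)$.
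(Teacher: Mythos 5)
Your proposal is correct and takes essentially the same route as the paper: sample $\ln\ln N$ independent partial assignments from the distribution induced by $\dot{x}$, merge them via \cref{lem:merge} and linearity of expectation to bound the cost, and bound the miss probability of each client through $1-x\leq e^{-x}$ together with the constraint $\sum_{f}\sum_{R\ni c}\dot{x}^f_R=1$. The only (immaterial) difference is that you draw exactly one set per facility from the categorical distribution $(\dot{x}^f_R)_{R}$, whereas the paper includes each set $R$ at $f$ independently with probability $\dot{x}^f_R$; both variants give the same per-sample expected cost and the same coverage estimate.
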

\begin{proof}
For $i \in \{1,2,\dots,\ln\ln N\}$ and for every $R\subseteq C$, we define a partial assignment $S(i,R)$ by setting $S^f(i,R)=R$ independently with probability $\dot{x}_{R}^f$ and $S^f(i,R)=\emptyset$ otherwise. Let $S_1=\sum_{i=1}^{\ln\ln N}\sum_{R\subseteq C}S(i,R)$ be obtained by merging all these solutions, and let $C_1=\cup_{f\in F}S_1^f$. Observe that
$$
  \prob[c\notin C_1]=\prod_{f\in F}\prod_{R\ni c}(1-\dot{x}_R^f)^{\ln\ln N}\leq e^{-\ln\ln N\sum_{f\in F}\sum_{R\ni c}\dot{x}_S^f}\leq e^{-\ln\ln N}=\frac{1}{\ln N}.
$$
Furthermore, by \cref{lem:merge}, $\expected[\cost(S_1)]$ is upper-bounded by
$$
\sum_{i=1}^{\ln\ln N}\sum_{R\subseteq C}\expected[\cost(S(i,R))]=\ln\ln N\cdot \sum_{f\in F,R\subseteq C}\dot{x}^f_{R}\cdot \bigg(g(R)+\sum_{c\in R}d(c,f)\bigg)= \ln\ln N\cdot \cost(\dot{x}).\qedhere
$$
\end{proof}
Consider the partial assignment $S_1$ covering the random subset of clients $C_1$ as in the previous lemma. Let $C_2\coloneqq C\setminus C_2$ be the remaining (uncovered) clients.
Let also $\ddot{x}$ be $\dot{x}$ restricted to $C_2$, i.e. $\ddot x_R^f = \sum_{R' \subseteq C_1} \dot x_{R \cup R'}^f$ for $R \subseteq C_2$ and $f\in F$. The following lemma upper bounds the expected opening and connection cost of $\ddot{x}$.

\begin{lemma}\label{lem:costResidualClients}
  One has $\open(\ddot{x})\leq \open(\dot{x})$ and $\expected[\conn(\ddot{x})]\leq \frac{1}{\ln N}\conn(\dot{x})$.
\end{lemma}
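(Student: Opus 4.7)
The plan is to prove both inequalities by simply unfolding the definition $\ddot x_R^f = \sum_{R' \subseteq C_1} \dot x_{R\cup R'}^f$ and reindexing the sums by the original configurations $S\subseteq C$. The key observation is that every configuration $S\subseteq C$ appearing in $\dot x$ contributes its mass $\dot x_S^f$ to exactly one entry of $\ddot x$, namely to $\ddot x_{S\cap C_2}^f$, since $S\cap C_2$ is the unique subset $R\subseteq C_2$ with $(S\setminus R)\subseteq C_1$. After this reindexing the opening and connection costs of $\ddot x$ can be rewritten directly in terms of the original LP variables, and the two claims become elementary.

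For the first bound I would carry out the reindexing to obtain
\[
\open(\ddot x)=\sum_{f\in F}\sum_{R\subseteq C_2} g(R)\,\ddot x_R^f=\sum_{f\in F}\sum_{S\subseteq C} g(S\cap C_2)\,\dot x_S^f,
\]
and then apply monotonicity of $g(\cdot)$ (namely $g(S\cap C_2)\leq g(S)$) to conclude $\open(\ddot x)\leq \open(\dot x)$. Note that this inequality is deterministic: it holds pointwise for every realization of the random set $C_2$, and no use of the sampling guarantee from \cref{lem:sampling} is needed here.

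For the second bound I would use the same reindexing to observe that, for any $c\in C_2$ and any $f\in F$,
\[
\sum_{R\subseteq C_2,\,R\ni c}\ddot x_R^f=\sum_{S\subseteq C,\,S\ni c}\dot x_S^f,
\]
since when $c\in C_2$ the condition $c\in S\cap C_2$ is equivalent to $c\in S$. Introducing the indicator $\indic[c\in C_2]$ to range $c$ over all of $C$, this gives
\[
\conn(\ddot x)=\sum_{c\in C}\indic[c\in C_2]\sum_{f\in F}d(c,f)\sum_{S\ni c}\dot x_S^f.
\]
Taking expectations and using $\prob[c\in C_2]\leq 1/\ln N$ from \cref{lem:sampling} together with linearity of expectation yields $\expected[\conn(\ddot x)]\leq \tfrac{1}{\ln N}\,\conn(\dot x)$, as required.

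There is no real obstacle here; the only thing to be careful about is the bookkeeping when reindexing $R\mapsto S$, and in particular the fact that the opening-cost bound is pointwise (it uses only monotonicity of $g$, not the sampling probability), whereas the randomness from \cref{lem:sampling} is only invoked in the connection-cost part via the bound on $\prob[c\in C_2]$.
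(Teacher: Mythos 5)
Your proposal is correct and follows essentially the same route as the paper's proof: the opening-cost bound is the pointwise consequence of monotonicity of $g(\cdot)$ under the reindexing $S\mapsto S\cap C_2$, and the connection-cost bound comes from $\prob[c\in C_2]\leq 1/\ln N$ together with linearity of expectation. The paper merely states these two steps more tersely, so your explicit bookkeeping of $\ddot x_R^f=\sum_{S\subseteq C:\,S\cap C_2=R}\dot x_S^f$ is a welcome but not substantively different elaboration.
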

\begin{proof}
    We have $\open(\ddot{x})\leq \open(\dot{x})$ by the monotonicity of $g(\cdot)$. For the connection cost, notice that the probability of a client $c$ being in $C_2$ is at most $1/\ln N$, and only in that case one has to pay the associated connection cost. Thus by linearity of expectation, the expected connection cost of $\ddot{x}$ is at most $\conn(\dot{x})/\ln N$. The claim follows.
\end{proof}
Notice that $\ddot{x}$ is a feasible fractional solution for~\eqref{eq:configuration-lp} limited to $C_2$.
In the following section, we show how to randomly round $\ddot{x}$ to a partial assignment $S_2$ which covers $C_2$ at expected cost $O(\log\log N) \cdot \cost(\ddot{x})$. It will then follow that $S_1+S_2$ is a feasible $O(\log\log N)$-approximate solution to the input SFL instance.

\section{Approximating SFL on an HST}\label{sec:treeCase}

We say that an instance $(C\cup F,d^T,g(\cdot))$ of SFL is of \emph{HST-type} if the metric $d^T$ is an HST metric over a tree $T$ of the type guaranteed by Lemma \ref{lem:sampling}. We remark that we allow multiple clients $C(v)$ and facilities $F(v)$ to be colocated at each leaf $v$ of $T$. In this section we will describe an $O(\log\log N)$-approximate LP-rounding algorithm for the considered instances w.r.t.\ \eqref{eq:configuration-lp}.
\begin{lemma}\label{lem:treeLProunding}
  Given a feasible fractional solution $x$ to \eqref{eq:configuration-lp} for an HST-type SFL instance, in polynomial time one can compute a feasible (integral) solution for the same instance with cost at most $O(\log\log N)\cdot \cost(x)$.
\end{lemma}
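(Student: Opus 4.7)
The plan is to carry out the rounding in two stages: first reduce the HST-type SFL instance, together with the fractional solution $x$, to an instance of \emph{Descendant-Leaf Assignment} (DLA) at a constant-factor loss; then round the resulting DLA LP at an $O(\log\log N)$-factor loss by adapting the approach of Bosman and Olver~\cite{BO20}.

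For the reduction to DLA, for each client $c\in C$ let $d_c:=\sum_{f\in F}\sum_{R\ni c}d^T(c,f)\,x_R^f$ denote its fractional connection cost under $x$. By a Markov-type argument, at least half of $c$'s fractional assignment mass is placed on facilities within HST-distance $2d_c$ of the leaf hosting $c$. Because edge weights double along every root-to-leaf path of an HST, this metric ball is exactly the subtree $T_{v_c}$ rooted at the unique ancestor $v_c$ of $c$ of an appropriate height $\Theta(\log d_c)$, and moreover every leaf of $T_{v_c}$ lies at HST-distance $\Theta(d_c)$ from $v_c$. We build a DLA instance in which $c$ is relocated to $v_c$, together with a fractional DLA solution $x'$ obtained by restricting $x$ to facility-leaves of $T_{v_c}$ and rescaling by a constant factor. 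Then $x'$ has opening cost $O(\open(x))$, and any integral DLA solution induced by $x'$ lifts back to an HST-SFL solution whose additional per-client connection cost is $O(d_c)$, hence $O(\conn(x))$ in total.

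It then remains to round a fractional DLA solution $y$ of opening cost $O(\open(x))$ to an integral one at an $O(\log\log N)$ multiplicative loss. By \cref{lem:reduceDistances} and the FRT embedding, the HST has depth $L=O(\log N)$, so the naive bottom-up level-by-level rounding incurs an $O(\log N)$ loss, which is too expensive. Following Bosman and Olver, we instead process the tree in $O(\log\log N)$ groups of consecutive levels and round each group with the help of the Lov\'{a}sz extension $\hat g$: within a group we apply a submodular-aware ``bucketing'' that concentrates the fractional assignment of each client onto few representative facilities at a constant opening-cost blow-up, and a careful potential/charging argument shows that $O(\log\log N)$ groups suffice to produce a fully integral assignment. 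Combining both stages yields the claimed bound.

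The main technical obstacle is the DLA rounding itself. A constant-factor approximation for DLA on HSTs appears out of reach with current techniques, and the $O(\log\log N)$ bound hinges on the delicate recursive bucketing and charging scheme of Bosman and Olver. The bulk of the work is to verify that their argument, originally developed for Submodular Joint Replenishment, transfers to DLA, where clients may sit at arbitrary internal tree nodes and $g$ is accessed only through a value oracle; in particular, one must ensure that the merging of assignments across a group of levels preserves the submodularity-based inequalities on which the charging relies, and that the group size can be chosen uniformly so that $L$ levels are exhausted in $O(\log\log N)$ groups.
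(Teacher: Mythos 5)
The first half of your plan essentially coincides with the paper's reduction (Lemma \ref{lem:DLAreduction}): relocating each client $c$ to an ancestor $v_c$ whose subtree captures a constant fraction of $c$'s fractional assignment mass, rescaling the restricted fractional solution by a constant, bounding the opening cost via monotonicity of the Lov\'asz extension, and paying only $O(1)$ times $c$'s fractional connection cost to lift back. Choosing $v_c$ by a Markov argument on $d_c$ rather than as the lowest ancestor holding mass at least $1/2$ is an immaterial variation, so this part is fine.

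The genuine gap is in the DLA rounding, which is where the $O(\log\log N)$ factor must actually be earned; your sketch both defers it and misdescribes the mechanism. There is no partition of the $D+1=O(\log N)$ levels into $O(\log\log N)$ groups each rounded at constant loss: rounding a group of $\Theta(\log N/\log\log N)$ consecutive levels at constant loss is essentially a constant-factor approximation for DLA (equivalently, SFL on HSTs of that depth), which the paper explicitly states is not known and is not what Bosman--Olver do. The actual algorithm (Algorithm \ref{alg:tree}) processes \emph{all} levels bottom-up, one at a time; at each node it merges the fractional vectors of the facilities below it into a single facility (cost-non-increasing by convexity of $\hat h$), and then either buys a level set $L_\theta(z^{f_v})$ that is $\tfrac{1}{32\log(D+1)}$-supported, or buys $L_1(z^{f_v})$. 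The $O(\log D)$ bound comes entirely from a charging argument built on the dichotomy of \cref{lem:BO20} with $\alpha=1/\log(D+1)$: supported purchases are charged, with a $32\log(D+1)$ inflation, to the \emph{decrease} of $\sum_{f}\hat h(z^f)$ caused by truncating $z^{f_v}$ at $\theta$, while each unsupported purchase costs at most a $1/(D+1)$ fraction of the current fractional cost, so the $D+1$ of them together cost at most $\cost_{\DLA}(z)$. None of these ingredients (the supported/unsupported dichotomy, the truncation-based charging, the cost-monotonicity under merging) appears in your proposal, and your remark that ``naive bottom-up level-by-level rounding incurs an $O(\log N)$ loss'' suggests you expect the saving to come from processing fewer rounds rather than from this per-level charging; as stated, the central step of the lemma is missing and the proposed ``constant loss per group'' scheme would not go through.
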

Theorem \ref{thr:main} directly follows.
\begin{proof}[Proof of Theorem \ref{thr:main}]
By Lemma \ref{lem:reduceNumberFacilities} it is sufficient to describe an $O(\log\log N)$-approximation. Furthermore by Lemma \ref{lem:reduceDistances}, we can assume that $d_{\min}=2$ and $d_{\max}\leq \frac{2nN}{\eps}$.

By applying the construction of Section \ref{sec:reducingConnection} we compute a random partial assignment $S_1=(S^f_1)_{f\in F}$ covering the clients $C_1=\cup_{f\in F}S^f_1$ with expected cost at most $O(\log\log N)\cdot \cost(\dot{x})$, where $\dot{x}$ is an optimal solution to \eqref{eq:configuration-lp}. Furthermore, by Lemma \ref{lem:costResidualClients}, we obtain a feasible solution $\ddot{x}$ to \eqref{eq:configuration-lp} restricted to clients $C_2\coloneqq C\setminus C_1$ which satisfies $\open(\ddot{x})\leq \open(\dot{x})$ and $\expected[\conn(\ddot{x})]\leq \frac{1}{\ln N}\conn(\dot{x})$. By applying the probabilistic tree embedding from Theorem \ref{thm:frt-tree-embedding} to the metric $(C_2\cup F,d)$, we obtain an HST-type SFL instance $(C_2\cup F,d^T,g(\cdot))$ where the tree has depth $D=O(\log d_{\max})=O(\log N)$. Observe that $\ddot{x}$ is a feasible fractional solution for \eqref{eq:configuration-lp} restricted to $C_2$ on the HST-type instance. Furthermore, let $\conn_{T}(\ddot{x})$ denote the connection cost of $\ddot{x}$ w.r.t.\ the HST-type instance, and define similarly $\open_{T}(\ddot{x})$ and $\cost_{T}(\ddot{x})$. Then one has
$$
  \expected[\cost_{T}(\ddot{x})]=\open(\ddot{x})+\expected[\conn_{T}(\ddot{x})]\leq \open(\dot{x})+O(\log N)\cdot\expected[\conn(\ddot{x})] \leq O(\cost(\dot{x})).
$$
By applying the LP-rounding algorithm from Lemma \ref{lem:treeLProunding} to $\ddot{x}$ one obtains a partial assignment $(S^f_2)_{f\in F}$ covering the clients $C_2$ of cost at most $O(\log\log N)\cost(\dot{x})$. The same solution has no larger cost in the original problem (on a non-tree metric). Altogether $S_1+S_2$ is a feasible solution to the input SFL problem of expected cost at most $O(\log\log N)\cdot \cost(\dot{x})\leq O(\log\log N)\cdot\cost(\opt)$.
\end{proof}

In the rest of this section, we prove Lemma \ref{lem:treeLProunding}. To this aim, we will first present a reduction to a different problem that we call the {\sc Descendent-Leaf Assignment} problem (DLA) (see Section \ref{sec:reductionDLA}). Then, we will present a good-enough approximation algorithm for DLA (see Section \ref{sec:approximationDLA}).

\subsection{A Reduction to DLA}\label{sec:reductionDLA}

In the {\sc Descendent-Leaf Assignment} problem (DLA) we are given a rooted tree $\tilde{T}$ with depth $D$, a set of facilities $\tilde{F}$ and a set of clients $\tilde{C}$. Each $x\in \tilde{F}\cup \tilde{C}$ is mapped into some node $v(x)$ of $\tilde{T}$, with the restriction that facilities are mapped into leaves of $\tilde{T}$.
By $\tilde{F}_c$ we denote the facilities which are assigned to nodes that are descendants of $v(c)$ in $T$ ($v(c)$ included if it is a leaf).
A feasible solution consists of an assignment $\tilde{\varphi}:\tilde{C}\to \tilde{F}$ of each $c\in \tilde{C}$ to some $f\in \tilde{F}_c$. The cost of this solution is $\sum_{f\in \tilde{F}}h(\tilde{\varphi}^{-1}(f))$, where $h(\cdot)$ is a monotone submodular function over $\tilde{C}$ with $h(\emptyset)=0$. Similarly to SFL, we also express a feasible solution as $S=(S^f)_{f\in \tilde{F}}$, where $S^f=\tilde{\varphi}^{-1}(f)$, and let $\cost_{\DLA}(S)=\sum_{f\in \tilde{F}}h(S^f)$ be the associated cost. We define a convex-programming (CP) relaxation for DLA as follows:
\begin{align*}\label{eq:DLA-LP}\tag{DLA-CP}
        \min & \sum_{f \in \tilde{F}} \hat h(z^f) \\
        \text{s.t.} \quad
        &\sum_{f \in \tilde{F}_c} z_c^f = 1 \quad \forall c\in \tilde{C}. \\
        & z_c^f \geq 0 \quad \forall c\in \tilde{C},\ \forall f \in \tilde{F}.
\end{align*}
In a $0$-$1$ integral solution we interpret $z^f_c=1$ as $c$ being assigned to $f$. Recall that $\hat h(\cdot)$ is convex, which makes \eqref{eq:DLA-LP} a convex program. We also notice that each feasible assignment $S=(S^f)_{f\in \tilde{F}}$ corresponds to a feasible integral solution $z=(z^f)_{f\in \tilde{F}}$ to \eqref{eq:DLA-LP} with $\cost_{\DLA}(S)=\cost_{\DLA}(z)\coloneqq \sum_{f\in \tilde{F}}\hat{h}(z^f)$ and vice versa. Hence indeed \eqref{eq:DLA-LP} is a CP-relaxation of DLA.

The next lemma provides the claimed reduction from SFL on HST-type instances to DLA.
\begin{lemma}\label{lem:DLAreduction}
  Given a polynomial-time $O(\log D )$-approximate CP-rounding algorithm for DLA w.r.t.\ \eqref{eq:DLA-LP}, where $D$ is the depth of the tree, there is polynomial-time $O(\log\log N)$-approximate LP-rounding algorithm for SFL on HST-type instances with tree-depth $O(\log N)$ w.r.t.\ \eqref{eq:configuration-lp}.
\end{lemma}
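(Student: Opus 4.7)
The plan is to reduce the HST-type SFL instance (together with its LP solution $x$) to a DLA instance on the same tree, losing only constant factors in both the opening cost and the SFL connection cost, and then to invoke the hypothesized $O(\log D)$-approximate DLA rounding. Since $D = O(\log N)$, this yields an $O(\log\log N)$-approximation.

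First, for each client $c$ I would introduce $\sigma^f_c := \sum_{R\ni c} x_R^f$ (the fractional mass of $c$ assigned to $f$) and, for every ancestor $u$ of $c$, $\alpha_c(u) := \sum_{f \in T_u} \sigma^f_c$, where $T_u$ denotes the subtree of $T$ rooted at $u$. Let $v(c)$ be the lowest ancestor of $c$ with $\alpha_c(v(c)) \geq 1/2$, and let $k^*(c)$ be its height in $T$. The DLA instance uses the same tree $T$, the same facilities at their original leaves, and places each client $c$ at $v(c)$; the submodular function is $h := g$.

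Next, lift $x$ to a fractional DLA solution $z$ by setting $z^f_c := \sigma^f_c / \alpha_c(v(c))$ for $f \in T_{v(c)}$ and $z^f_c := 0$ otherwise. Since $\alpha_c(v(c)) \in [1/2,1]$ and $\sigma^f_c \leq \alpha_c(v(c))$ for $f \in T_{v(c)}$, we get $z^f_c \in [0,1]$, $\sum_{f \in \tilde F_c} z^f_c = 1$, and $z^f \leq 2\sigma^f$ coordinatewise. Using monotonicity and positive homogeneity of the Lovász extension via the sorted-marginal formula \eqref{eq:lovasz-extension-bis} (both properties are immediate on $\mathbb{R}^C_{\geq 0}$), together with the observation that $(x_R^f)_R$ is a feasible witness in the $\min$ defining $\hat g(\sigma^f)$,
\[
\hat g(z^f) \;\leq\; \hat g(2\sigma^f) \;=\; 2\hat g(\sigma^f) \;\leq\; 2\sum_R g(R)\, x^f_R,
\]
so summing over $f$ gives $\cost_{\DLA}(z) \leq 2\,\open(x)$.

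Applying the hypothesized $O(\log D)$-approximate DLA rounding to $z$ produces an integral assignment $\tilde\varphi$ with $\sum_f g(\tilde\varphi^{-1}(f)) \leq O(\log D)\,\open(x)$. Using $\tilde\varphi$ directly as an SFL assignment preserves the opening cost, so it remains to pay the SFL connection cost. Since $\tilde\varphi(c) \in T_{v(c)}$, the HST distance satisfies $d^T(c,\tilde\varphi(c)) \leq 2(2^{k^*(c)}-1)$. By the minimality of $k^*(c)$, at most $1/2$ of the mass of $\sigma^f_c$ sits on facilities whose LCA with $c$ is strictly below height $k^*(c)$; the remaining mass sits at HST distance $\geq 2(2^{k^*(c)}-1)$ from $c$, which is a Markov-type lower bound of $(2^{k^*(c)}-1)$ on the LP connection cost of $c$. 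Combining, $d^T(c,\tilde\varphi(c)) = O(\sum_f d^T(c,f)\sigma^f_c)$, and summing over clients yields $\conn(\tilde\varphi) = O(\conn(x))$. Altogether $\cost(\tilde\varphi) = O(\log D)\cdot\cost(x) = O(\log\log N)\cdot\cost(x)$.

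The delicate point is the threshold $1/2$ used to define $v(c)$. A fraction bounded away from $1$ is needed to keep the scaling factor $1/\alpha_c(v(c))$ a constant (so the DLA opening cost is at most a constant times $\open(x)$), while a fraction bounded away from $0$ is needed to keep the Markov bound on $2^{k^*(c)}$ effective (so the SFL connection cost is at most a constant times $\conn(x)$). Any constant in $(0,1)$ works, and $1/2$ is chosen for convenience.
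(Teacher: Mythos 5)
Your proposal is correct and matches the paper's proof essentially step for step: the same placement of each client at the lowest ancestor $v(c)$ capturing at least half of its fractional facility mass, the same rescaled fractional DLA solution $z$, and the same Lov\'asz-extension argument giving $\cost_{\DLA}(z)\leq 2\open(x)$ before invoking the $O(\log D)$-approximate rounding with $D=O(\log N)$. The only difference is cosmetic: your single Markov-style bound (all mass outside the child subtree $T_{w(c)}$ lies at distance at least $2(2^{k^*(c)}-1)$, and this mass exceeds $1/2$ by minimality of $v(c)$) replaces the paper's two-case analysis with threshold $5/6$, yielding a factor $2$ instead of $3$ on the connection cost, an immaterial constant.
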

\begin{proof}
Let $(C\cup F,d^T,g(\cdot))$ be the considered instance of SFL over an HST $T$, and $x$ be an input feasible fractional solution to \eqref{eq:configuration-lp} for this instance.

We build an instance of DLA as follows.
First, let $y^f_c\coloneqq \sum_{R\subseteq C:c\in R}x^f_R$: intuitively this is the fractional amount by which $c$ is assigned to $f$ in $x$. We set $h(\cdot)=g(\cdot)$ and $\tilde{T}=T$. Notice that $D=O(\log N)$. We set $\tilde{F}=F$ and map each $f\in \tilde{F}$ to the corresponding leaf of $T$ containing $f$. We also set $\tilde{C}=C$, and map each $c\in \tilde{C}$ to a node $v(c)$ as follows. Let $T_v$ be the subtree rooted at $v$ (containing $v$ and all its descendants) and $F_v$ be the facilities located in the leaves of $T_v$. Let also $\ell(c)$ be the leaf of $T$ containing $c$ in the mapping associated with $T$. We define $v(c)$ as the lowest ancestor of $\ell(c)$ such that $\sum_{f\in F_{v(c)}}y_c^f\geq 1/2$. Notice that $v(c)=\ell(c)$ is possible (in which case there is at least one facility $f$ colocated with $c$ at $l(c)$).

We next define a feasible fractional solution $z$ for \eqref{eq:DLA-LP} w.r.t this DLA instance as follows.
For each $c\in \tilde{C}$ we set $z^f_{v(c)}=y^f_c/(\sum_{f'\in F_{v(c)}}y^{f'}_c)$ if $f\in F_{v(c)}$, and otherwise $z^f_{v(c)}=0$. Let $\tilde{\varphi}$ be a solution to the DLA instance obtained with the CP-rounding algorithm in the claim w.r.t.\ $z$. We obtain a feasible solution $\varphi$ for the input instance by simply setting $\varphi(c)= \tilde{\varphi}(c)$.

It remains to analyze the cost of $\varphi$.
Define $\bar{z}^f_{v(c)}=y^f_c/(\sum_{f'\in F_{v(c)}}y^{f'}_c)$ for all $f\in F$. Notice that $\bar{z}\geq z$. By the definition of $\hat h(\cdot)$ and its monotonicity, $\hat{h}(z^f)\leq \hat{h}(\bar{z}^f)=\hat{h}(y^f/(\sum_{f'\in F_{v(c)}}y^{f'}_c))\leq 2\hat{h}(y^f)=2\hat{g}(y^f)$.
Notice that by plugging in $x_R^f$ for $\mu_R$ in the set in \eqref{eq:lovasz-extension} and by how $y$ is defined w.r.t.\ $x$ above, we get $\hat g(y^f) \leq \sum_{R \subseteq C}g(R)\cdot x_R^f$ and in particular $\sum_{f\in F}\hat g(y^f) \leq \open(x)$. Thus, we have $\cost_{\DLA}(z)\leq 2\open(x)$ and
\begin{equation}\label{eq:dlareduction-openingcosts}
  \open(\varphi)=\cost_{\DLA}(\tilde{\varphi})=O(\log D)\cdot\cost_{\DLA}(z)\leq O(\log \log N)\cdot 2\open(x).
\end{equation}
Consider next the connection cost. For each client $c\in C$, let $\Delta$ be the weight of the edge between $v(c)$ and its children in $T$. Observe that the distance between $v(c)$ and the leaves in $T_{v(c)}$ is exactly $2\Delta-1$.
Furthermore, both $c$ and $\varphi(c)$ are located in the leaves of $T_{v(c)}$ in the HST mapping. Hence
$
  d^T(c,\varphi(c))\leq 2(2\Delta-1).
$
We next compare the latter cost with the connection cost associated with $c$ in $x$, namely $\sum_{f\in F}d(c,f)y^f_c$. Let $\alpha=\sum_{f\in F_{v(c)}}y^f_c\geq \frac{1}{2}$. Suppose first that $\alpha\leq \frac{5}{6}$. Notice that for each $f\in F\setminus F_{v(c)}$, if any, the $f$-$c$ path in $T$ uses the edge, of weight $2\Delta$, between $v(c)$ and its parent. In particular, all such paths have length at least $2(4\Delta-1)$. Furthermore, $\sum_{f\in F\setminus F_{v(c)}}y^f_c=1-\alpha\geq \frac{1}{6}$. Thus
$$
  \sum_{f\in F}d(c,f)y^f_c\geq \sum_{f\in F\setminus F_{v(c)}}d(c,f)y^f_c\geq \frac{1}{6}2(4\Delta-1)\geq \frac{2}{3}(2\Delta-1).
$$
Consider next the complementary case where $\alpha\geq \frac{5}{6}$. Let $w(c)$ be the child of $v(c)$ along the $v(c)$-$c$ path in $T$. By the definition of $v(c)$, it must be the case that $\sum_{f\in F_{w(c)}}y^f_c< \frac{1}{2}$, and consequently $\sum_{f\in F_{v(c)}\setminus F_{w(c)}}y^f_c\geq \frac{1}{3}$. For each $f\in F_{v(c)}\setminus F_{w(c)}$, the $f$-$c$ path in $T$ has length exactly $2(2\Delta-1)$. Thus
$$
  \sum_{f\in F}d(c,f)y^f_c\geq \sum_{f\in F_{v(c)}\setminus F_{w(c)}}d(c,f)y^f_c\geq \frac{2}{3}(2\Delta-1).
$$
In both cases the connection cost of $c$ in $\varphi$ is at most $3$ times its connection cost in $x$. We conclude that $\conn(\varphi)\leq 3\conn(x)$. Altogether
$
  \cost(\varphi)\leq 3\conn(x)+O(\log\log N)\cdot 2\open(x)\leq O(\log\log N)\cdot \cost(x).
$
\end{proof}

\subsection{An Approximation Algorithm for DLA}\label{sec:approximationDLA}

In this section, we present a CP-rounding algorithm for DLA. Lemma \ref{lem:treeLProunding} follows by chaining Lemmas \ref{lem:DLAreduction} and \ref{lem:DLArounding}.
\begin{lemma}\label{lem:DLArounding}
  Given a feasible fractional solution $z$ to \eqref{eq:DLA-LP} on an instance of DLA with tree-depth $D$, in polynomial time one can compute a feasible (integral) solution to the same instance of cost at most $O(\log D)\cdot \cost_{\DLA}(z)$.
\end{lemma}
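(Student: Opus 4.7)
The plan is to adapt the randomized rounding strategy of Bosman and Olver~\cite{BO20} to the DLA setting. The key ingredient is the standard threshold representation of the Lov\'asz extension, which is equivalent to~\eqref{eq:lovasz-extension-bis}:
\begin{equation*}
\hat h(z^f) \;=\; \int_0^1 h\bigl(R_f(t)\bigr)\, dt, \qquad R_f(t) \coloneqq \{c \in \tilde C : z_c^f \geq t\}.
\end{equation*}
The LP feasibility constraint $\sum_{f\in \tilde F_c} z_c^f = 1$ forces $z_c^f = 0$ whenever $f \notin \tilde F_c$, so each $R_f(t)$ is already a feasible candidate set assignable to $f$. Moreover, sampling $t\sim U[0,1]$ yields $\Pr[c \in R_f(t)] = z_c^f$ and $\EE[h(R_f(t))] = \hat h(z^f)$.

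The algorithm then performs $K = \Theta(\log D)$ independent rounds. In round $k$, we draw independent thresholds $t_k^f \sim U[0,1]$ for each $f$, set $R_f^{(k)} \coloneqq R_f(t_k^f)$, and form a \emph{candidate pool} $T_f \coloneqq \bigcup_{k=1}^K R_f^{(k)}$. For every client $c \in \tilde C$ such that $c \in T_f$ for some $f \in \tilde F_c$, we assign $c$ to such an $f$ (breaking ties arbitrarily). Clients that are not covered in this way are handled by a fallback described below. The cost of covered clients is then straightforward: the set $S^f$ of clients assigned to $f$ is a subset of $T_f$, so by monotonicity and subadditivity of $h$,
\begin{equation*}
h(S^f) \;\leq\; h(T_f) \;\leq\; \sum_{k=1}^K h(R_f^{(k)}),
\end{equation*}
and taking expectations yields $\EE[h(S^f)] \leq K\cdot \hat h(z^f)$. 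Summing over $f$ contributes $O(\log D)\cdot \cost_{\DLA}(z)$ in expectation.

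The main obstacle, and the crux of the proof, is handling the \emph{uncovered} clients. Since the thresholds across facilities are independent and $\sum_{f\in \tilde F_c} z_c^f = 1$, the probability that a given client $c$ is uncovered is
\begin{equation*}
\prod_{f \in \tilde F_c}(1-z_c^f)^K \;\leq\; e^{-K \sum_{f\in \tilde F_c} z_c^f} \;=\; e^{-K},
\end{equation*}
which is only $1/\mathrm{poly}(D)$ for our choice of $K$ and hence too weak to union-bound naively over the $n$ clients (the singleton costs $h(\{c\})$ need not be controlled by $\cost_{\DLA}(z)$). To overcome this, we exploit the hierarchical structure of $\tilde T$: we process the tree bottom-up and, for each internal node, re-run the above rounding on the sub-LP restricted to the subtree and its still-uncovered clients, charging the extra opening cost against the fractional ``flow'' of $z$ crossing that node. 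The telescoping over the $O(D)$ levels, together with the subadditivity of $h$, keeps the total expected fallback cost within $O(\log D) \cdot \cost_{\DLA}(z)$.

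Combining the two contributions gives an expected rounded cost of $O(\log D)\cdot \cost_{\DLA}(z)$, as required. Since the computations at each step involve only sampling thresholds and evaluating $h$ via the oracle, the whole procedure runs in polynomial time; a standard derandomization by conditional expectation (where the conditional objective is computable because $\hat h$ and $h$ are polynomially evaluable) then turns the scheme into a deterministic polynomial-time algorithm, completing the proof.
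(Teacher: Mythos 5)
The first half of your argument (covering clients that land in some sampled threshold set, and bounding their cost by $K\cdot\hat h(z^f)$ per facility) is fine, but the part you yourself identify as the crux --- the uncovered clients --- is not proved, and the sketch you give in its place does not work. Your fallback is to ``re-run the rounding on the sub-LP restricted to the subtree and its still-uncovered clients, charging the extra opening cost against the fractional flow of $z$ crossing that node,'' with a claimed telescoping over $O(D)$ levels. There is no notion of flow or connection cost in DLA to charge against, and, more importantly, the fractional cost of $z$ restricted to the uncovered clients need not decay at all: the Lov\'asz extension is not linear in which clients survive, so the per-client survival probability $e^{-K}$ gives no bound on $\EE[\hat h(z^f\restriction_{\text{uncovered}})]$. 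A concrete obstruction is $h(S)=\indic[S\neq\emptyset]$ with many clients each served fractionally by tiny amounts: each client survives with probability $e^{-K}$, yet with many clients the restricted fractional cost stays $\Theta(1)$ per facility, so each ``re-run'' costs another $\Theta(\log D)\cdot\cost_{\DLA}(z)$ and the recursion over $D$ levels accumulates to $\Theta(D\log D)\cdot\cost_{\DLA}(z)$ rather than telescoping. Singleton costs $h(\{c\})$ are likewise uncontrolled, so no Markov/union-bound patch rescues the scheme. (Your closing derandomization claim is also unsupported: conditional expectations of $h$ of unions of random threshold sets are not obviously computable from value-oracle access, though this is secondary.)

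This missing piece is exactly what the paper's proof, following Bosman and Olver, supplies by entirely different means: the algorithm is a deterministic bottom-up sweep that, at each level, merges all fractional mass in a subtree onto a single facility $f_v$, and then invokes the dichotomy of \cite[Lemma 5.2]{BO20} with $\alpha=1/\log(D+1)$. Either some level set $L_\theta(z^{f_v})$ is $\frac{\alpha}{32}$-supported, in which case buying it is charged against the \emph{decrease} of the fractional cost $\sum_f\hat h(z^f)$ (a potential argument, yielding $O(\log D)\cdot\cost_{\DLA}(z)$ in total over all such purchases), or else $h(L_1(z^{f_v}))\leq 2^{-1/\alpha}\hat h(z^{f_v})=\frac{1}{D+1}\hat h(z^{f_v})$, so the integrally-served set at that node costs only a $\frac{1}{D+1}$ fraction of the current fractional cost, summing to $O(1)\cdot\cost_{\DLA}(z)$ over the $D+1$ levels. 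Some mechanism of this kind --- tying what you buy for the hard-to-cover clients to a certified drop in the fractional objective --- is needed; without it your proposal has a genuine gap at precisely the step the lemma is about.
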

The CP-rounding algorithm from \cref{lem:DLArounding} is essentially the algorithm by Bosman and Olver \cite{BO20} with minor modifications that we introduced to simplify our correctness analysis. Also, the analysis of its approximation ratio is essentially identical to \cite{BO20}, but we reproduce it for the sake of completeness. In particular, we will exploit the following definitions and lemma from \cite{BO20}. Let $h:2^{\tilde{C}}\to \mathbb{R}_{\geq 0}$ be a monotone submodular function with $h(\emptyset)=0$. For a given $f\in \tilde{F}$ and a (possibly infeasible) solution $z$ to \eqref{eq:DLA-LP}, let $L_\theta(z^f)\coloneqq \{c\in \tilde{C}: z^f_c\geq\theta\}$ be the set of clients that are served fractionally by at least some value $\theta$ by $f$. Let also $z^{f|\theta}$ be obtained from $z^f$ by rounding down to $\theta$ the values larger than $\theta$, i.e. $z_c^{f|\theta}\coloneqq \min\{z^f_c,\theta\}$ for each $c\in \tilde{C}$. Given $\theta\in [0,1]$ and $z^f\in [0, 1]^{\tilde{C}}$, we say that the set $L_\theta(z^f)$ is $\alpha$-supported (w.r.t.\ $h$) if
$
  \hat h(z^f) - \hat h(z^{f|\theta}) \geq \alpha h(L_\theta(z^f)).
$
\begin{lemma}[{\cite[Lemma 5.2]{BO20}}]\label{lem:BO20}
  Given $z^f\in [0, 1]^{\tilde{C}}$ and $\alpha\in (0, 1]$, at least one of the following holds:
  (1) there exists $\theta\in [0, 1]$, which can be computed in polynomial time, such that $L_\theta(z^f)$ is $\frac{\alpha}{32}$-supported; (2) $2^{1/\alpha}h(L_1(z^f))\leq\hat h(z^f)$.
\end{lemma}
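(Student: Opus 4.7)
The plan is to mimic the iterative rounding scheme of Bosman and Olver~\cite{BO20}, using \cref{lem:BO20} as the main structural lemma. I choose $\alpha = \Theta(1/\log D)$ so that $32/\alpha = O(\log D)$ while $2^{-1/\alpha}$ is polynomially small in $D$. The algorithm maintains a residual fractional solution $z$ (initialized to the input) and integer partial assignments $S^f$ (initially empty), and tracks the potential $\Phi(z) := \sum_{f \in \tilde{F}} \hat{h}(z^f)$, which begins at $\cost_{\DLA}(z)$ and will be shown to decrease monotonically across the main loop.

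At each iteration I invoke \cref{lem:BO20} on some facility $f$. In case~(1) I perform a \emph{peel}: add $L_\theta(z^f)$ to $S^f$, remove the peeled clients from every residual vector $z^{f'}$, and charge the cost $h(L_\theta(z^f))$ to the corresponding drop in $\Phi$, since case~(1) gives $h(L_\theta(z^f)) \leq (32/\alpha)\bigl(\hat{h}(z^f) - \hat{h}(z^{f|\theta})\bigr)$. Monotonicity of $\hat{h}$ guarantees that deleting clients from the residual does not increase $\hat{h}(z^{f'})$ for any other facility $f'$, so $\Phi$ truly decreases by at least $(\alpha/32)\cdot h(L_\theta(z^f))$ at every peel. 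When no facility admits case~(1) I perform a \emph{flush}: for every $f$ I add $L_1(z^f)$ to $S^f$ and delete those clients from the residual. Case~(2) then yields a flush cost of $\sum_{f} h(L_1(z^f)) \leq 2^{-1/\alpha}\,\Phi(z)$, a vanishing fraction of $\cost_{\DLA}(z)$.

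The cost bookkeeping is immediate. Telescoping the peel contributions across the entire run gives a total peel cost of at most $(32/\alpha)\,\cost_{\DLA}(z) = O(\log D)\cdot\cost_{\DLA}(z)$. The aggregate flush cost is controlled by a polynomial number of flushes, each contributing at most $\cost_{\DLA}(z)/D$, so its total is $O(\cost_{\DLA}(z))$. Once the main loop terminates, if any client was simultaneously claimed by more than one facility --- which can happen only because we did not purge other $S^{f'}$ when assigning --- we retain the client in a single one of those facilities and drop it from the others; by monotonicity of $h$ this can only decrease $\cost_{\DLA}(S)$. This produces the desired integer solution of cost $O(\log D)\cdot \cost_{\DLA}(z)$.

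The main obstacle will be guaranteeing that the algorithm terminates with \emph{every} client assigned, without incurring cost beyond the stated bounds. The delicate situation is when a long sequence of flushes leaves some client $c$ with $z_c^f < 1$ for every $f$ and no facility admits case~(1), so that neither a peel nor a flush makes progress on $c$. The cleanest remedy, following~\cite{BO20}, is to patch the algorithm so that, whenever case~(1) becomes inaccessible and a flush fails to cover $c$, we force $c$ onto the facility $f \in \tilde{F}_c$ carrying the largest residual $z_c^f$, and absorb the extra cost into the flush term via the constraint $\sum_{f \in \tilde{F}_c} z_c^f = 1$ together with submodularity of $h$. Verifying that this termination step preserves the $O(\log D)$ bound is the only genuinely technical part; the rest of the proof is the direct charging above against the Lovász-extension potential.
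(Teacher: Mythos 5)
Your proposal does not prove Lemma~\ref{lem:BO20} at all: it \emph{uses} it. Lemma~\ref{lem:BO20} is a purely analytic dichotomy about a single fixed vector $z^f \in [0,1]^{\tilde{C}}$ and the Lov\'asz extension $\hat h$ of a monotone submodular function $h$: either some truncation threshold $\theta$ yields an $\frac{\alpha}{32}$-supported level set $L_\theta(z^f)$, or else $\hat h(z^f) \geq 2^{1/\alpha}\,h(L_1(z^f))$. There is no algorithm, no iteration over facilities, no tree, no peeling or flushing, and no rounding in that statement. What you sketched is instead a charging argument for an iterative rounding procedure that \emph{invokes} Lemma~\ref{lem:BO20} as a black box at each step; that is the content of Lemma~\ref{lem:DLArounding} (whose cost analysis is Lemma~\ref{lem:alg_cost}), not of Lemma~\ref{lem:BO20}. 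Note also that the paper does not itself prove Lemma~\ref{lem:BO20}; it quotes it directly from Bosman and Olver~\cite[Lemma~5.2]{BO20}.

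A genuine proof of Lemma~\ref{lem:BO20} has to reason about the one-parameter family of truncations $z^{f|\theta}$ and the nested level sets $L_\theta(z^f)$. Roughly, one assumes that no threshold is $\frac{\alpha}{32}$-supported and deduces, by repeatedly stepping the threshold downward from $1$, that the drop $\hat h(z^f)-\hat h(z^{f|\theta})$ accumulated after each step stays small relative to $h(L_\theta(z^f))$, forcing the value $h(L_\theta(z^f))$ to shrink by a constant factor as $\theta$ increases; after on the order of $1/\alpha$ such steps the bound $h(L_1(z^f)) \leq 2^{-1/\alpha}\hat h(z^f)$ follows, which is alternative~(2). The ``computable in polynomial time'' clause is the easy part: by \eqref{eq:lovasz-extension-bis}, the map $\theta\mapsto\hat h(z^{f|\theta})$ is piecewise linear with breakpoints exactly at the distinct coordinate values of $z^f$, so only $O(|\tilde{C}|)$ candidate thresholds need to be examined. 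None of this reasoning appears anywhere in your proposal, so as a proof of the stated lemma it is simply absent rather than flawed.
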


Our algorithm is Algorithm~\ref{alg:tree} in the figure. Recall that $\tilde{T}_v$ is the subtree rooted at node $v$, where $\tilde{T}_v$ includes $v$ and all its descendants. Furthermore, $\tilde{F}_v$ is the set of facilities mapped into the leaves of $\tilde{T}_v$. As usual the level of a node is its hop-distance from the root.
\begin{algorithm}
    \caption{}
    \footnotesize
    \textbf{Input:} Feasible solution $z$ to \eqref{eq:DLA-LP}
    \begin{algorithmic}[1]
        \State $S^f\gets \emptyset$ for all $f\in F$
        \For{$i=0,\dots, D$}
            \State For every node $v$ at level $D-i$, choose an arbitrary $f_v\in \tilde{F}_v$ and set $z^{f_v}\gets\sum_{f'\in \tilde{F}_v}z^{f'}$ and $z^{f'}\gets 0$

            for all $f'\in \tilde{F}_v\setminus \{f_v\}$\label{alg:merging}
            \If{there exists $\theta\in [0,1]$ such that $L_\theta(z^{f_v})$ is $\frac{1}{32\log (D+1)}$-supported}\label{alg:condition}
                \State For an arbitrary such $\theta$, set $S^{f_v}\gets S^{f_v}\cup L_\theta(z^{f_v})$ and $z^{f_v}_c\gets 0$ for all $c\in L_\theta(z^{f_v})$ \label{alg:choose-theta}
            \Else
                \State Set $S^{f_v}\gets S^{f_v}\cup L_1(z^{f_v})$ and $z^{f_v}_c\gets 0$ for all $c\in L_1(z^{f_v})$\label{alg:buying-integrals}
            \EndIf
        \EndFor
        \State For every $c\in \tilde{C}$ choose $f\in\tilde{F}_c$ such that $c\in S^{f}$ and set $S^{f'}\gets S^{f'}\setminus\{c\}$ for all $f'\in \tilde{F}\setminus\{f\}$ \label{alg:pruning}
        \State \textbf{return} $(S^f)_{f\in \tilde{F}}$
    \end{algorithmic}
    \label{alg:tree}
\end{algorithm}

Clearly Algorithm~\ref{alg:tree} runs in polynomial time.
The next two lemmas analyze the correctness and the approximation ratio of Algorithm \ref{alg:tree}, hence proving Lemma \ref{lem:DLArounding}.
\begin{lemma}\label{lem:DLAcorrectness}
  Algorithm \ref{alg:tree} computes a feasible DLA solution.
\end{lemma}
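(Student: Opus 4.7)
The plan is to prove that after the main loop, every client $c\in\tilde{C}$ has been placed into $S^f$ for at least one $f\in\tilde{F}_c$. The final pruning step then selects such an $f$ for each $c$ and deletes $c$ from every other $S^{f'}$, producing a mapping $\tilde{\varphi}:\tilde{C}\to\tilde{F}$ with $\tilde{\varphi}(c)\in\tilde{F}_c$, which is exactly a feasible DLA solution.

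The argument rests on a simple support invariant maintained by induction on the iterations. Fix a client $c$. By feasibility of $z$ for~\eqref{eq:DLA-LP}, initially $z_c^{f'}>0$ implies $f'\in\tilde{F}_c$. A merging step at a node $v$ only redistributes mass within $\tilde{F}_v$: the sum $\sum_{f'\in\tilde{F}_v}z_c^{f'}$ is preserved and is moved onto $f_v$, while the subsequent if/else step only decreases $z$-values. If $v$ is neither $v(c)$ nor in an ancestor--descendant relation with $v(c)$, then $\tilde{T}_v$ and $\tilde{T}_{v(c)}$ share no leaves, so $\tilde{F}_v\cap\tilde{F}_c=\emptyset$; combined with the inductive hypothesis this forces $z_c^{f'}=0$ throughout $\tilde{F}_v$ at that moment, so neither the merge nor the if/else assignment touches $c$. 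Hence $c$ can only be added to some $S^{f_w}$ when $w$ is $v(c)$ or a descendant of $v(c)$, in which case $f_w\in\tilde{F}_w\subseteq\tilde{F}_c$ automatically.

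Now consider the iteration that processes $v(c)$. If $c$ has already been added to some $S^{f_w}$ at an earlier iteration, the previous paragraph gives $f_w\in\tilde{F}_c$ and we are done. Otherwise no earlier iteration has reduced $\sum_{f'\in\tilde{F}_{v(c)}}z_c^{f'}$, because merges at descendants of $v(c)$ preserve this quantity, so after merging at $v(c)$ we have $z_c^{f_{v(c)}}=1$; hence $c\in L_1(z^{f_{v(c)}})\subseteq L_\theta(z^{f_{v(c)}})$ for every $\theta\in[0,1]$, and whichever branch is executed, $c$ is placed into $S^{f_{v(c)}}$ with $f_{v(c)}\in\tilde{F}_c$. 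Iterations that process strict ancestors of $v(c)$ happen later, and by that point $z_c^{f'}=0$ on every $f'\in\tilde{F}_v$ they see, so they cannot remove $c$ from $S^{f_{v(c)}}$. The one delicate point is that the if-branch could conceivably fire with $\theta=0$ at a later iteration, spuriously tagging $c$ onto some $S^{f_v}$ with $f_v\notin\tilde{F}_c$; this is innocuous because the pruning step explicitly restricts the retained assignment to $\tilde{F}_c$. The main obstacle is therefore just the careful bottom-up bookkeeping of the support invariant; once that is in place, feasibility is immediate from $z_c^{f_{v(c)}}=1$ at the moment $v(c)$ is processed.
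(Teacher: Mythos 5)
Your proof is correct and follows essentially the same route as the paper: an invariant tracking the mass $\sum_{f\in\tilde F_c} z_c^f$, the observation that every node processed before $v(c)$ has $\tilde F_v$ either contained in $\tilde F_c$ or disjoint from it, the conclusion that $z_c^{f_{v(c)}}=1$ when $v(c)$ is processed so $c$ lands in $S^{f_{v(c)}}$, and the final pruning step to discard spurious assignments. The only nitpick is that, precisely because of the $\theta=0$ possibility you yourself flag, the clean dichotomy at $v(c)$'s iteration is ``either the mass $\sum_{f\in\tilde F_c} z_c^f$ has been reduced earlier (which forces a genuine assignment of $c$ to some $f\in\tilde F_c$) or it still equals $1$'', rather than ``$c$ was added earlier or not'' --- which is exactly the disjunctive invariant the paper maintains.
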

\begin{proof}
Consider a given client $c\in \tilde{C}$ such that $v(c)$ is at level $D-i$ in $\tilde{T}$. Let us show that the following invariant holds at the beginning of each iteration $j\leq i$: either $\sum_{f\in \tilde{F}_c}z^f_c=1$ or $c\in S^f$ for some $f\in \tilde{F}_c$. The invariant trivially holds for $j=0$. Assume that it holds up to the beginning of iteration $j<i$, and consider what happens during that iteration. Notice that for every node $v$ at level $D-j>D-i$, we either have that every $f \in\tilde{F}_v$ is a descendant of $v(c)$ or every $f \in\tilde{F}_v$ is not in $\tilde{F}_c$. Therefore, in Step \eqref{alg:merging} the value of $\sum_{f\in \tilde{F}_c}z^f_c$ does not change. In more detail, it remains $1$ by inductive hypothesis. The same value can decrease in Steps \eqref{alg:choose-theta} or \eqref{alg:buying-integrals}, however, this can only happen if $c$ is added to $S^{f_v}$ for some $f_v\in \tilde{F}_c$. Thus the invariant holds at the end of the $j$-th iteration, hence at the beginning of the next iteration $j+1$.

Due to the invariant, during the iteration $i$, when one considers the node $v=v(c)$, one has that either $c$ already belongs to some $S^f$ with $f\in \tilde{F}_c$, or $\sum_{f\in \tilde{F}_c}z^f_c=1$. In the latter case, after Step~\eqref{alg:merging}, $z_c^{f_{v}}=1$ where $f_{v}\in \tilde{F}_c$, so $c$ belongs to every set $L_\theta(z^{f_{v}})$ with $\theta\in [0,1]$. As a consequence, $c$ is added to $S^{f_v}$ either in Step \eqref{alg:choose-theta} or in Step \eqref{alg:buying-integrals}.

It might happen that a client $c$ is assigned \emph{also} to a facility not in $\tilde{F}_c$. Step~\eqref{alg:pruning} guarantees that the final assignment of $c$ is correct and unique.
\end{proof}

\begin{lemma}\label{lem:alg_cost}
  Algorithm~\ref{alg:tree} outputs a solution of cost at most $O(\log D)\cdot \cost_{\DLA}(z)$.
\end{lemma}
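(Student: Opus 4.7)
The plan is to track a potential across the iterations of Algorithm~\ref{alg:tree} and link it to the fractional cost $\cost_{\DLA}(z)$. Define $\Phi_i \coloneqq \sum_{v\text{ at level }D-i} \hat h(z^{f_v})$ right after the merging step of iteration $i$, and $\Phi_i'$ the analogous quantity right after the buy step of iteration $i$. Since the Lov\'asz extension $\hat h$ is positively homogeneous and convex, it is subadditive on the non-negative orthant, so merging the $z$-vectors over $\tilde F_v$ into a single $z^{f_v}$ in Step~\eqref{alg:merging} cannot increase the total. In particular $\Phi_0 \leq \sum_{f\in\tilde F} \hat h(z^f) = \cost_{\DLA}(z)$, and the merging across a node's children between consecutive iterations gives $\Phi_{i+1} \leq \Phi_i'$. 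The monotonicity of $\hat h$ further ensures $\Phi_i' \leq \Phi_i$, so all $\Phi_i, \Phi_i'$ stay below $\cost_{\DLA}(z)$.

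Next I would bound the cost added at each node $v$ processed in iteration $i$ in two cases. In Case~A (Step~\eqref{alg:choose-theta}), since $L_\theta(z^{f_v})$ is $\frac{1}{32\log(D+1)}$-supported, one has $h(L_\theta(z^{f_v})) \leq 32\log(D+1)\cdot (\hat h(z^{f_v}) - \hat h(z^{f_v|\theta}))$; because zeroing out the $L_\theta$-coordinates produces a pointwise smaller vector than capping them at $\theta$, monotonicity of $\hat h$ implies that the \emph{actual} decrease of $\hat h$ at $v$ is at least as large, so the cost of this buy is at most $32\log(D+1)$ times the decrease it causes in $\Phi$. In Case~B (Step~\eqref{alg:buying-integrals}), Lemma~\ref{lem:BO20} applied with $\alpha = 1/\log(D+1)$ (so that $\alpha/32 = \frac{1}{32\log(D+1)}$ and $2^{1/\alpha} = D+1$) yields $h(L_1(z^{f_v})) \leq \hat h(z^{f_v})/(D+1)$.

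Summing over all buys, the Case~A contributions telescope:
\[
\sum_{i}\sum_{v\in\text{Case A}}\bigl(\hat h(z^{f_v}_{\text{before}}) - \hat h(z^{f_v}_{\text{after}})\bigr) \;\leq\; \sum_{i}(\Phi_i - \Phi_i') \;\leq\; \Phi_0 \;\leq\; \cost_{\DLA}(z),
\]
so their total is at most $32\log(D+1)\cdot \cost_{\DLA}(z) = O(\log D)\cdot \cost_{\DLA}(z)$. The Case~B contributions total at most $\sum_{i=0}^{D}\Phi_i/(D+1) \leq (D+1)\cost_{\DLA}(z)/(D+1) = \cost_{\DLA}(z)$. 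Finally, the subadditivity of $h$ (which follows from monotone submodularity and $h(\emptyset)=0$) gives $h(S^f) \leq \sum_{\text{buys for }f} h(\text{set bought})$ for each $f\in \tilde F$, and the pruning in Step~\eqref{alg:pruning} only shrinks each $S^f$ and hence only lowers the cost by monotonicity of $h$. Combining everything, the output cost is $O(\log D)\cdot \cost_{\DLA}(z)$.

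The main delicate point is the potential accounting: specifically, justifying subadditivity of $\hat h$ on the non-negative orthant via positive homogeneity plus convexity, and then interleaving it with the monotonicity-based argument for the buy step so that the telescoping across all levels goes through simultaneously with the Case~B charge of $1/(D+1)$ per iteration. Lemma~\ref{lem:BO20} itself is used entirely as a black box.
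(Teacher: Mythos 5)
Your proof is correct and follows essentially the same route as the paper's: both arguments show the fractional cost never increases (via monotonicity of $\hat h$ plus positive homogeneity and convexity for the merge step), charge the $\theta$-supported buys against the telescoping decrease of that cost, charge the integral buys against a $\frac{1}{D+1}$ fraction of the current cost via Lemma~\ref{lem:BO20} with $\alpha=1/\log(D+1)$, and finish with subadditivity of $h$ and monotonicity for the pruning step. The only difference is cosmetic bookkeeping: you track a per-level potential $\Phi_i$ instead of the paper's global quantity $\sum_{f\in\tilde F}\hat h(z^f(i))$; the two coincide on the HST-type instances actually used (all facilities sit at depth-$D$ leaves), whereas for a general DLA tree with leaves at different depths your chain $\Phi_{i+1}\le\Phi_i'$ would need to be adjusted to absorb facilities at newly reached shallow leaves, which the global potential handles automatically.
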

\begin{proof}
Recall that $\cost_{\DLA}(z)=\sum_{f\in\tilde{F}} \hat h(z^f)$.
We start by observing that the value of $\cost_{\DLA}(z)$ can not increase over time when $z$ changes during the execution of the algorithm. Indeed, Steps \eqref{alg:choose-theta} and \eqref{alg:buying-integrals} can only decrease the entries of $z$, hence $\cost_{\DLA}(z)$ by the monotonicity of $\hat{h}(\cdot)$. The only other changes of $z$ happen in Step \eqref{alg:merging}. Let us interpret this step as iteratively decreasing to zero $z^{f'}$ for each $f'\in \tilde{F}_v\setminus \{f_v\}$ and increasing $z^{f_v}$ by the same amount. The decrease of the cost at each step is $\hat{h}(z^{f_v})+\hat{h}(z^{f'})-\hat{h}(z^{f_v}+z^{f'})$. By the alternative definition of $\hat{h}(\cdot)$ as in \eqref{eq:lovasz-extension-bis} and its convexity, one has
$
  \hat{h}(z^{f_v}+z^{f'})=2\hat{h}\left(\frac{z^{f_v}+z^{f'}}{2}\right)\leq 2\left(\frac{1}{2}\hat{h}(z^{f_v})+\frac{1}{2}\hat{h}(z^{f'})\right)= \hat{h}(z^{f_v})+\hat{h}(z^{f'}).
$
Hence the decrease of the cost is non-negative as required.

For each facility $f$ and level $i$, let $\Delta^{\theta}_i (f)$ be the clients added to $S^f$ in Step \eqref{alg:choose-theta} during iteration $i$ (possibly $\Delta^{\theta}_i(f)=\emptyset$). We define similarly $\Delta^{1}_i (f)$ w.r.t.\ Step \eqref{alg:buying-integrals}. Notice that, by the submodularity (hence subadditivity) of $h(\cdot)$, the increase of the cost of the solution due to adding $\Delta$ to $S^f$ is at most $h(\Delta)$. Therefore we can upper bound the cost of the final solution $S=(S^f)_{f\in \tilde{F}}$ by
$$
  \cost_{\DLA}(S)\coloneqq \sum_{f\in \tilde{F}}h(S^f)\leq \sum_{i=0}^{D}\sum_{f\in \tilde{F}}\Big(h(\Delta^{\theta}_i (f))+h(\Delta^{1}_i (f))\Big).
$$
Let us upper bound the right-hand side of the above inequality. Let $z(i)$ denote the value of $z$ at the beginning of iteration $i$. From the previous observation, we have $\hat{h}(z(i))\leq \hat{h}(z)$ for every $i$. By Lemma \ref{lem:BO20} with $\alpha=\frac{1}{\log (D+1)}$, for any $\Delta^1_i(f)$ one has $h(\Delta^1_i(f))\leq \frac{1}{D+1}\hat{h}(z^f(i))$. Thus
\begin{equation}\label{eq:bound-delta-one}
  \sum_{i=0}^{D}\sum_{f\in \tilde{F}}h\big(\Delta^{1}_i (f)\big)\leq
  \sum_{i=0}^{D}\sum_{f\in \tilde{F}}\frac{1}{D+1}\hat{h}\big(z^f(i)\big)\leq
  \sum_{i=0}^{D}\frac{1}{D+1}\cost_{\DLA}(z(i))\leq
  \cost_{\DLA}(z).
\end{equation}

Let $z(D+1)$ be the value of $z$ at the end of the $D$-th iteration, hence in particular $\cost_{\DLA}(z(D+1))\geq 0$. Notice that $z=z(0)$. We can lower bound $\cost_{\DLA}(z)$ by
$$
  \cost_{\DLA}(z)\geq\sum_{i=0}^{D}\Big(\cost_{\DLA}(z(i))-\cost_{\DLA}(z(i+1))\Big).
$$
Let $z_1(i)$ be the value of $z$ obtained from $z(i)$ after applying Step \eqref{alg:merging} for all nodes of level $D-i$. Let also $z_2(i)$ be the value obtained from $z_1(i)$ if, for all the facilities $F'_i$ where Step \eqref{alg:choose-theta} is applied during iteration $i$, instead of setting $z^f_c=0$ one sets $z^f_c=\theta$ for the corresponding value of $\theta$. For the facilities not in $F'_i$ we simply let $z_2^f(i)=z_1^f(i)$. Observe that $z(i+1)\leq z_2(i)\leq z_1(i)\leq z(i)$. One has
\begin{align*}
  & \cost_{\DLA}(z(i))-\cost_{\DLA}(z(i+1))\\
  \geq & \cost_{\DLA}(z_1(i))-\cost_{\DLA}(z(i+1)) \geq
  \cost_{\DLA}(z_1(i))-\cost_{\DLA}(z_2(i))\\
  = & \sum_{f\in \tilde{F}}\hat{h}\Big(z_1^f(i)\Big)-\hat{h}\Big(z_2^f(i)\Big)=
  \sum_{f\in F'_i}\hat{h}\Big(z_1^f(i)\Big)-\hat{h}\Big(z_2^f(i)\Big)\geq
  \frac{\sum_{f\in F'_i}h\big(\Delta^{\theta}_i(f)\big)}{32\log(D+1)}=
  \frac{\sum_{f\in \tilde{F}}h\big(\Delta^{\theta}_i(f)\big)}{32\log(D+1)}.
\end{align*}
In the first two inequalities above we used the monotonicity of $\hat{h}(\cdot)$, while in the last inequality the definition of $\alpha$-supported. Altogether
\begin{equation}\label{eq:bound-delta-theta}
    \sum_{i=0}^{D}\sum_{f\in \tilde{F}}h\Big(\Delta^{\theta}_i (f)\Big)\leq 32\log(D+1)\cdot \sum_{i=0}^{D}\Big(\cost_{\DLA}(z(i))-\cost_{\DLA}(z(i+1)) \Big)\leq O(\log D)\cdot \cost_{\DLA}(z).
\end{equation}
By the monotonicity of $h(\cdot)$, Step~\eqref{alg:pruning} cannot increase the cost of the solution, hence the claim.
\end{proof}

\section*{Acknowledgements}

Fateme Abbasi and Jaros\l{}aw Byrka were supported by Polish National Science Centre (NCN) Grant 2020/39/B/ST6/01641.
Marek Adamczyk was supported by Polish National Science Centre (NCN) Grant 2019/35/D/ST6/03060.
Miguel Bosch Calvo, Fabrizio Grandoni, Krzysztof Sornat and Antoine Tinguely were supported by the SNSF Grant 200021\_200731/1.

\bibliographystyle{alpha}
\bibliography{bib}

\appendix

\section{Some Omitted Proofs about SFL}\label{sec:omitted}

Here we collect some proofs about SFL which were omitted in the main body.

\begin{proof}[Proof of Lemma \ref{lem:solveConfLP}]
Considering the dual of \eqref{eq:configuration-lp}:
\begin{align}\label{eq:configuration-lp-dual}\tag{Conf-DLP}
        \max \Big\{\sum_{c \in C} \alpha_c+ \sum_{f \in F} \beta_f :\ \sum_{c\in R} \alpha_c + \beta_f \leq g(R) + \sum_{c\in R} d(c,f),\,\forall R \subseteq C,\ \forall f \in F\Big\}.
\end{align}
Notice that for fixed $\alpha$ and $\beta$, the functions $g_f(R)\coloneqq g(R) + \sum_{c\in R} d(c,f) - \sum_{c\in R} \alpha_c - \beta_f$ are submodular. Thus, a call of a separation oracle on \eqref{eq:configuration-lp-dual} is equivalent to a minimization of all functions $g_f(\cdot)$, which can be done using polynomially many oracle calls of $g(\cdot)$ \cite{iwata2001combinatorial}. Therefore, an optimal primal solution with $\poly(N)$ many non-zero variables for \eqref{eq:configuration-lp} can be found in polynomial time \cite[Corollary 14.1g(v)]{schrijver1998theory}.
\end{proof}

\begin{proof}[Proof of Lemma \ref{lem:reduceNumberFacilities}]
Let $I=(C,F,d,g(\cdot))$ be the considered instance of SFL. Consider the complete weighted graph on nodes $C\cup F$, with weights induced by $d$. For each client $c$, let $f(c)$ be the facility closest to $c$. We create a dummy facility $f'(c)$ and add a dummy edge $\{c,f'(c)\}$ of weight $d(c,f(c))$. Let $F'$ be the set of newly created facilities. Observe that $|F'|=n$. Finally we remove $F$ and consider the metric $d'$ over $C\cup F'$ induced by the distances over the resulting graph. Let $I'=(C,F',d',g(\cdot))$ be the obtained instance of SFL. Given a solution $\varphi'$ for $I'$, we obtain a solution $\varphi$ for $I$ by simply assigning to $f(c)$ each client $c'$ assigned to $f'(c)$ in $\varphi'$.

Let us analyze the approximation factor introduced by this reduction. We first observe that $\cost_{I}(\varphi)\leq \cost_{I'}(\varphi')$. Indeed, $\open_{I}(\varphi)= \open_{I'}(\varphi')$. Furthermore, for each each client $c'$ assigned to $f'(c)$ by $\varphi'$, the associated connection cost w.r.t.\ $I$ is $d(c',f(c))\leq d(c',c)+d(c,f(c))=d'(c',f'(c))$. Hence $\conn_{I}(\varphi)\leq \conn_{I'}(\varphi')$.

Next consider an optimal solution $\opt$ for $I$. For each facility $f$ with $\opt^{-1}(f)\neq \emptyset$, let $c\in \opt^{-1}(f)$ be the client closest to $f$. We define a solution $\opt'$ for $I'$ by assigning all the clients in $\opt^{-1}(f)$ to $f'(c)$. Again, $\open_{I}(\varphi)= \open_{I'}(\varphi')$. For each client $c'$ assigned to $f$ in $\opt$, its connection cost in $I'$ is 
$$
  d'(c',f'(c))= d(c,c')+d(c,f(c))\leq d(c',f)+d(c,f)+d(c,f(c))\leq d(c',f)+2d(c,f)\leq 3d(c',f).
$$
Hence $\conn_{I'}(\opt')\leq 3\conn_{I}(\opt)$. The claim follows.
\end{proof}

\begin{proof}[Proof of Lemma \ref{lem:reduceDistances}]
Let us guess\footnote{Throughout this paper, by guessing we mean trying all the (polynomially many) possible options. Each such options leads to a different solution, and we return the best one.} the value $L=\max_{c\in C}d(c,\opt(c))$ for some optimal solution $\opt$. W.l.o.g. assume $L>0$, otherwise the problem is trivial. Consider the complete weighted graph on nodes $C\cup F$ with weights induced by $d$. Remove the edges of weight larger than $L$. We next compute a feasible solution in each connected component of the resulting graph separately. Notice that this part of the reduction is approximation preserving since no client can be assigned to a facility in a different connected component in $\opt$.

Let $C'$ and $F'$ be the clients and facilities, resp., in one such connected component $G'$, $X'=C'\cup F'$, and $d'$ be the metric induced by the distances in $G'$. Consider the corresponding SFL instance $I'=(C',F',d',g(\cdot))$. Notice that in each such instance $I'$ one has $d'_{\max}\leq NL$. We next change the location of elements of $X'$ as follows. We consider the ball $B(x)\coloneqq \{y\in X': d'(x,y)\leq \frac{\eps}{2n}L\}$ of radius $\frac{\eps}{2n}L$ around each $x\in X'$. Let $\cal{I}$ be a maximal (independent) set of such balls so that, if $B(x),B(y)\in {\cal I}$ for $x\neq y$, then $B(x)\cap B(y)=\emptyset$. For each $y$ with $B(y)\notin {\cal I}$, we consider any $B(x)\in {\cal I}$ with $B(x)\cap B(y)\neq \emptyset$ (which must exist since ${\cal I}$ is maximal) and colocate $y$ with $x$. Let $I''=\left(C',F',d'',g(\cdot)\right)$ be the resulting instance of SFL. Observe that $d''_{\max}\leq NL$ and $d''_{\min}\geq \frac{\eps}{n}L$.

Let $\tilde{I}$ be the union of all the instances $I''$, and $\tilde{d}$ be the associated distances (where inter-component distances can be considered to be $+\infty$). Given a solution $\varphi$ for $\tilde{I}$ (obtained by the union of all the solutions obtained for each instance $I''$), we return exactly the same solution $\varphi$ for $I$.

Let us analyze the approximation factor. Notice that $\open_I(\varphi)=\open_{\tilde{I}}(\varphi)$. Furthermore, for each client $c$, $d(c,\varphi(c))\leq \tilde{d}(c,\varphi(c))+\frac{2\eps}{n}L$, where in the latter term we consider the fact that each client and facility is moved at most at distance $\frac{\eps}{n}L$ from the original location. Hence $\conn_{I}(\varphi)\leq \conn_{\tilde{I}}(\varphi)+2\eps L$. Given an optimum solution $\opt$ for $I$, by a symmetric argument one has $\cost_{\tilde{I}}(\opt)\leq \cost_{I}(\opt)+2\eps L\leq (1+2\eps)\cost_{I}(\opt)$, where we used the fact that $\cost_I(\opt)\geq L$. Altogether an $\alpha\geq 1$ approximation algorithm for each instance $I''$ implies an $\alpha(1+2\eps)+2\eps\leq \alpha(1+4\eps)$ approximation for $I$.

Finally, we scale the distance $d''$ and $g(\cdot)$ by the same factor $\frac{2n}{\eps L}$ so that $d''_{\min}=2$ and $d''_{\max}\leq \frac{2nN}{\eps}$. Clearly this final scaling is approximation preserving.
\end{proof}

\section{Generalizations of SFL}\label{sec:extensions}

In this section we discuss some generalizations of SFL.

\subsection{Reduction of the Number of Facilities}\label{sec:reductionNumberFacilitiesGeneral}

In this section we consider the generalization of SFL, next called {\sc Affine} SFL, where the opening cost of each facility $f$ with assigned clients $R\neq \emptyset$ is $g_f(R)\coloneqq p_f+w_f\cdot g(R)$, where $p_f,w_f\geq 0$ are input values. Notice that this generalizes SFL {\sc with Additive} (resp., {\sc Multiplicative}) {\sc Opening Costs}. We also observe that each $g_f(\cdot)$ is non-negative monotone submodular.

We show how to reduce to the case where $m=\poly(n)$ (hence $N=\poly(n)$) while loosing a constant factor in the approximation. We will use this reduction in the following sections to convert an $O(\log\log N)$ approximation into an $O(\log\log n)$ one.

\begin{lemma}\label{lem:reduceNumFacilitiesAffine}
  For any constant $\eps>0$, there is a $(3+37\eps)$-approximate reduction from {\sc Affine} SFL to the special case where the number of facilities is $O_\eps(n^3)$.
\end{lemma}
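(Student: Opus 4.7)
The plan is to generalize the construction from \cref{lem:reduceNumberFacilities}, which kept one representative facility per client, by instead keeping one representative per (client, discretized parameter-pair). The main extra work relative to the pure-SFL case is controlling the two additional per-facility parameters $p_f$ and $w_f$.

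As a preprocessing step, I would guess (by enumeration) the value of $\cost(\opt)$ up to a $(1+\eps)$-factor, call it $L$. Any facility $f$ with $p_f > L$ can be discarded since it cannot be opened by $\opt$, while facilities with $p_f \le \eps L / n$ can be rounded up to a common floor at total additive loss at most $\eps L$. A symmetric argument bounds the effective range of $w_f$: the product $w_f \cdot g(R)$ over the (unknown) set $R$ that $\opt$ would assign to $f$ is at most $L$, and I would guess a candidate ``anchor client'' $c^\star \in R$ and use $g(\{c^\star\})$ as a lower bound on $g(R)$ to control the range of $w_f$, again rounding extreme values at small additive loss. Then rounding $p_f$ and $w_f$ up to the nearest power of $(1+\eps)$ leaves at most $\Pi = O_\eps(\log^2 n)$ distinct discretized pairs $(p, w)$, and increases the opening cost of each used facility by at most a $(1+\eps)$-factor.

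For each client $c$ and each $(p, w) \in \Pi$, let $f(c, p, w)$ be the closest original facility with discretized parameters $(p, w)$ (if any). Mirroring \cref{lem:reduceNumberFacilities}, I would create a dummy facility $f'(c, p, w)$ with parameters $(p, w)$ (so its opening cost on $R \neq \emptyset$ is $p + w \cdot g(R)$), connected to $c$ by a single dummy edge of weight $d(c, f(c, p, w))$, and then discard all original facilities; the new metric $d'$ is induced by the resulting weighted graph. The reduced instance has at most $n \cdot |\Pi| = O_\eps(n \log^2 n) \subseteq O_\eps(n^3)$ facilities. Translating a solution $\varphi'$ on the reduced instance back (map each client served by $f'(c,p,w)$ to $f(c,p,w)$) gives $\cost_I(\varphi) \le \cost_{I'}(\varphi')$ by the triangle inequality. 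In the other direction, for each facility $f$ opened by $\opt$ with clients $R = \opt^{-1}(f)$, let $c^\star \in R$ be the closest client to $f$ and assign all of $R$ to $f'(c^\star, \tilde p_f, \tilde w_f)$, where $(\tilde p_f, \tilde w_f)$ is the discretized pair of $(p_f, w_f)$; exactly as in \cref{lem:reduceNumberFacilities}, the connection cost of this solution is at most $3$ times that of $\opt$, while the opening cost grows by at most $(1+\eps)$ from discretization. Composing the multiplicative loss $3(1+\eps)$ with the additive truncation losses of $O(\eps)\cdot L$ yields the claimed $(3 + 37\eps)$-approximation.

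The main obstacle I expect is the range-truncation argument for $w_f$, since $w_f\cdot g(R)$ couples the per-facility parameter to the unknown assignment $R$, and $g(\cdot)$ can vary over many orders of magnitude across subsets. Coupling the guess of $L$ with an ``anchor client'' enumeration, as sketched above, sidesteps this difficulty but requires careful bookkeeping so that the truncation loss remains within the overall $O(\eps)\cdot L$ budget; the generous $37\eps$ slack over the naive $3(1+\eps)$ bound is what absorbs these lower-order terms.
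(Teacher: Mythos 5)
Your overall architecture (per-client representative facilities joined by dummy edges, the factor-$3$ triangle-inequality argument in the backward direction) matches the paper's, but there is a genuine gap in the step that makes the facility count polynomial: the discretization of $w_f$. Rounding $w_f$ to powers of $(1+\eps)$ gives $O_\eps(\log n)$ classes only if you can truncate $w_f$ to a window whose upper/lower ratio is $\poly(n)$. Your proposed window is of the form $\bigl[\tfrac{\eps L}{n\,g(C)},\; \tfrac{L}{g(c^\star)}\bigr]$ (a floor chosen so that rounding up tiny weights costs at most $\eps L$ in total, and a cap coming from $w_f\cdot g(R)\leq L$ via an anchor client $c^\star\in R$). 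The ratio of these two endpoints is $\tfrac{n\,g(C)}{\eps\,g(c^\star)}$, and the quantity $g(C)/g(c^\star)$ is \emph{not} bounded by any polynomial in $n$: the values of the oracle-given submodular function can differ by factors like $2^{2^n}$. So the claimed $|\Pi|=O_\eps(\log^2 n)$ does not follow, and with it the $O_\eps(n^3)$ bound on the number of facilities collapses (you would only get a bound polynomial in the bit-length of the input, which is not what the lemma asserts). The ``anchor client'' enumeration does not repair this, because the problematic quantity is the ratio between the cap and the floor, and the floor necessarily involves $g(C)$ (or $\max_c g(c)$), not $g(c^\star)$.

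The paper circumvents exactly this obstacle by never discretizing $w_f$ at all: after additively discretizing $p_f$ into at most $n/\eps$ values (in steps of $\eps P/n$, where $P$ is the guessed maximum $p_f$ used by $\opt$ --- note this also handles $p_f=0$ gracefully, which multiplicative rounding would not), it groups, for each client $c$ and each discretized value $p'$, the facilities of that class into $O_\eps(\log(nN))$ geometric distance buckets (possible because distances were first rescaled so that $d_{\max}/d_{\min}\leq nN/\eps$), and inside each bucket it keeps the single facility with \emph{minimum} $w_f$. That representative is dominant for the opening cost whatever the magnitudes of the $w$'s, so no bound on the range of $w_f$ is ever needed. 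Two further points you should be aware of: the bucket count $O_\eps(\log(nN))$ is only $O_\eps(n)$ when $m\leq 2^n$, so the paper first disposes of the case $m\geq 2^n$ by an exact dynamic-programming solution via {\sc Weighted Set Cover}; and your preprocessing guess of $\cost(\opt)$ up to a $(1+\eps)$ factor is not obviously realizable with polynomially many guesses (the paper instead guesses discrete quantities with polynomially many candidates, namely $L=\max_c d(c,\opt(c))$ and $P=\max p_f$ over facilities used by $\opt$).
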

\begin{proof}
First of all, consider the case $m\geq 2^n$. In this case we can solve the problem optimally in polynomial time via the following reduction to the {\sc Weighted Set Cover} problem.
For an instance $I=(C, F, d, g(\cdot))$ of \textsc{Affine SFL}, consider the instance $J=(\callu,\callr, \kappa)$ of {\sc Weighted Set Cover} with universe $\callu = C$, set collection $\callr = 2^C$ and weight function $\kappa$ given as $\kappa_R=0$ if $R=\emptyset$ and $\kappa_R=\min_{f\in F}(p_f+w_f\cdot g(R)+\sum_{c\in R}d(c,f))$ for $R \in 2^C\setminus\{\emptyset\}$ (which can be computed in $\poly(N)$ time).
Notice that $2^{|\callu|} = 2^n$ which is polynomially bounded in the input size of $I$.
The optimal solution to $J$ induces a solution of exactly the same cost to $I$ and vice versa. There is a simple dynamic program which solves {\sc Weighted Set Cover} in time $O(2^{|U|}\cdot|U|\cdot |{\cal R}|)$ \cite[Lemma~2]{fomin2004exact}. Applying this algorithm to $J$, one obtains an optimal solution for the input instance $I$ in time $O(2^n \cdot \poly(n,m))$, which is polynomial in $m$.

Hence it remains to consider the case $m\leq 2^n$. We show how to reduce the number of facilities to $O_\eps(n^2 \log (nN))=O_{\eps}(n^3)$, while losing the approximation factor in the claim. By exactly the same reduction as in Lemma \ref{lem:reduceDistances}, we can assume that in the input metric $d$ the maximum distance is $0<d_{\max}\leq NL$ and the minimum non-zero distance is $d_{\min}\geq \frac{\eps}{n}L$ while loosing a factor $(1+4\eps)$ in the approximation. Here $L$ is some value that lower bounds the cost of a given optimum solution $\opt$.
Let us guess the largest value $P$ of $p_f$ over the facilities with at least one assigned client in $\opt$. We discard all the facilities $f$ with $p_f>P$. Now, assuming $P>0$, we replace each $p_f$ with the value $p'_f\coloneqq \lceil \frac{p_f \cdot n}{\eps P}\rceil \cdot \frac{\eps P}{n}$ ($p'_f=p_f$ for $P=0$).
Notice that this can only increase the cost of a given solution $\varphi$, however this increase is upper bounded by $n\cdot \frac{\eps P}{n}\leq \eps\cdot \cost_I(\opt)$, where $I$ is the input instance of the problem. Hence this reduction preserves the approximation guarantee up to a factor $1+\eps$. After this reduction, the set ${\cal P}'$ of different possible values of $p'_f$ has cardinality at most $\frac{n}{\eps}$.

Let $I=(C,F,d,p',w,g(\cdot))$ be the instance of {\sc Affine} SFL obtained after the above two reductions. Consider the complete edge-weighted graph on nodes $C\cup F$, with weights induced by $d$. We modify this graph as follows. For each client $c$ and value $p'\in {\cal P}'$, we consider the set of facilities $F_{p'}$ with $p'_f=p'$. Let $F_{p'}(c,i)$, $i\geq 0$, be the facilities in $F_{p'}$ whose distances from $c$ are in the range $[\frac{\eps}{n}L\cdot (1+\eps)^i,\frac{\eps}{n}L\cdot (1+\eps)^{i+1})$. We also define the set $F_{p'}(c,-1)$ of the facilities in $F_{p'}$ at distance $0$ from $c$. Notice that there are at most $1+\lceil \log_{1+\eps}\frac{nN}{\eps}\rceil$ sets $F_{p'}(c,i)$ which are non-empty. For each $F_{p'}(c,i)\neq \emptyset$, we choose a facility $f=f_{p'}(c,i)$ with minimum value of $w_f$.
We create a dummy facility $f'=f'_{p'}(c,i)$ with opening cost $g'_{f'}(C')=p'+w_f\cdot g(C')$ for $C'\neq \emptyset$, and add a dummy edge $\{c,f'\}$ of weight $d(c,f)$. Let $F'$ be the set of dummy facilities.
Notice that, considering also the previous reduction, one has $|F'|\leq n\cdot \frac{n}{\eps}\cdot (1+\lceil \log_{1+\eps}\frac{nN}{\eps}\rceil)=O(n^2\log (nN))$. We remove the original facilities $F$, and let $d'$ be the metric given by the distances in the resulting graph $G'$ on nodes $C\cup F'$. We solve the problem on the resulting instance $I'=(C,F',d',p',w,g(\cdot))$. Given a solution $\varphi'$ for $I'$, we obtain a solution $\varphi$ for $I$ naturally as follows: if $\varphi'(c')=f'_{p'}(c,i)$, we assign $c'$ to $f_{p'}(c,i)$.

Let us analyze the approximation factor of this final reduction. The opening costs of $\varphi$ and $\varphi'$ are identical. Furthermore, for each client $c'$ assigned to $f=f_{p'}(c,i)$ in $\varphi$, and for $f'=f'_{p'}(c,i)$, one has
$$
  d(c',f)\leq d(c',c)+d(c,f)=d'(c',c)+d'(c,f')=d'(c',f').
$$
Hence $\cost_{I}(\varphi)=\cost_{I'}(\varphi')$.

Next consider an optimum solution $\opt$ for $I$. We construct a feasible solution $\opt'$ for $I'$ as follows. Let $S^f\neq \emptyset$ be the clients assigned to some $f\in F$ in $\opt$. Recall that the opening cost of $f$ is $g'_f(S^f)=p'_f+w_f\cdot g(S^f)$. Let $c\in S^f$ be the client at minimum distance $d(c,f)$ from $f$. Define $i$ as $-1$ if $d(c,f)=0$, and otherwise, $i$ such that $d(c,f)\in [\frac{\eps}{n}L\cdot (1+\eps)^i,\frac{\eps}{n}L\cdot (1+\eps)^{i+1})$. In $\opt'$ we reassign all the clients in $S^f$ to $f'=f'_{p'_f}(c,i)$. The opening cost associated with $f'$ in $\opt'$ is no larger than the corresponding cost in $\opt$ since
$$
  p'_{f'}+w_{f'}\cdot g(S^{f'})=p'_f+w_{f'}\cdot g(S^f)\leq p'_f+w_{f}\cdot g(S^f).
$$
In the last inequality above we used the fact that $f\in F_{p'_{f}}(c,i)$ and $f_{p'_{f}}(c,i)$ is the facility in the latter set with minimum $w_{f}$ value. The connection cost of each $c'\in S^f$ w.r.t.\ $\opt'$ satisfies
$$
  d'(c',f')=d'(c',c)+d'(c,f')= d(c,c')+d(c,f_{p'_f}(c,i))\leq d(c',f)+d(c,f)+(1+\eps)d(c,f)\leq (3+\eps)d(c',f).
$$
Altogether, $\cost_{I'}(\opt')\leq (3+\eps)\cost_{I}(\opt)$. Considering also the first two reductions, we obtain a global reduction which preserves the approximation guarantee up to a factor $(1+4\eps)(1+\eps)(3+\eps)\leq 3+37\eps$.
\end{proof}

\subsection{SFL with Multiplicative Opening Costs}

In this section we sketch the proof of Theorem \ref{thr:mainMultiplicative}. By Lemma \ref{lem:reduceNumFacilitiesAffine}, it is sufficient to provide an $O(\log\log N)$ approximation.

For $f\in F$ and $R \subseteq C$ let $g_f(R) \coloneqq w_f \cdot g(R)$.
Note that $g_f(\cdot)$ is submodular, monotone and has $g(\emptyset) = 0$ for every $f \in F$.
For any (partial) assignment $S=(S^f)$ and any vector $(x_R^f)_{R \subseteq C}^{f\in F}$ let also $\open'(S) \coloneqq \sum_{f\in F}g_f(S^f)$, resp.\ $\open'(x) \coloneqq \sum_{f\in F}\sum_{R \subseteq C} g_f(R) \cdot x_R^f$ and $\cost'(S) \coloneqq \open'(S)+\conn(S)$ resp.\ $\cost'(x) \coloneqq \open'(x)+\conn(x)$.

By these definitions, the LP-relaxation of the {\sc multSFL} is given by the constraints from \eqref{eq:configuration-lp} and the objective $\cost'(\cdot)$. In particular, the LP-relaxation of {\sc multSFL} can be solved with the approach from \cref{lem:solveConfLP}.
We keep the merging rule defined in \cref{sec:preliminaries} and the sampling procedure from \cref{sec:reducingConnection}. It is easy to verify that the vector $\ddot x$ resulting from this procedure fulfills \cref{lem:costResidualClients} w.r.t.\ $\open'$ instead of $\open$.

We reduce {\sc multSFL} to a similar problem to DLA which we call DLA$^*$ which is the same problem as DLA and with the same input variables as DLA, additional inputs $\tilde w_f \geq 0$ for every $f\in \tilde F$ and cost $\cost^*_{\DLA}(\varphi) = \sum_{f\in \tilde F} h_f(\varphi^{-1}(f))$ where $h_f(\cdot) \coloneqq \tilde w_f h(\cdot)$ for every $f \in \tilde F$.
Its convex relaxation is given by the constraints in \eqref{eq:DLA-LP} with the cost function $\cost^*_{\DLA}(z) \coloneqq \sum_{f \in \tilde F} \hat h_f(z^f)$ (where $\hat h_f$ is the Lov\'{a}sz extension of $h_f$).
The reduction described in \cref{lem:DLAreduction} can be reproduced to reduce {\sc multSFL} to DLA$^*$. We define the input values of DLA$^*$ w.r.t.\ {\sc multSFL} in the same way we define the input values of DLA w.r.t.\ SFL, with additionally $\tilde w_f = w_f$ for every $f\in F$.
Notice that $h_f(\cdot) = \tilde w_fh(\cdot) = g_f(\cdot) = w_f g(\cdot)$.
Every reasoning made in the proof of \cref{lem:DLAreduction} stays valid.

We now adjust Algorithm~\ref{alg:tree} for DLA$^*$ as follows: in Step~\ref{alg:merging}, we select the facility $f_v\in\tilde{F_v}$ with minimum weight $\tilde w_{f_v}$. In the if-clause~\ref{alg:condition}, we search and verify for supportedness w.r.t.\ $h_{f_v}$ instead of $h$ (which is equivalent unless $\tilde w_{f_v}=0$, in which case $L_\theta(z^{f_v})$ is supported for every $\theta$).
Since the new algorithm functions exactly like \cref{alg:tree}, except for an arbitrary selection step becoming determined (in particular, the new algorithm is a possible implementation of \cref{alg:tree}), its correctness is implied by the correctness of \cref{alg:tree}.

Notice that since $f_v$ in Step \ref{alg:merging} is now chosen to have minimal weight, we have for any $f' \in \tilde F_v\setminus \{f_v\}$
\begin{equation*}\label{eq:mult-value-decrease}
    \hat h_{f_v}\big(z^{f_v}+z^{f'}\big)
    \leq \hat h_{f_v}\big(z^{f_v}\big) + \hat h_{f_v}\big(z^{f'}\big)
    \leq \hat h_{f_v}\big(z^{f_v}\big) + \hat h_{f'}\big(z^{f'}\big),
\end{equation*}
which means that the cost of $z$ does not increase at any time by the arguments as before. Also, notice that since $h_f$ is submodular, monotone and $h_f(\emptyset)=0$ we can apply Lemma~\ref{lem:BO20} with respect to $h_{f_v}$ instead of $h$. Thus, the cost of the sets added at Step~\ref{alg:choose-theta} and Step~\ref{alg:buying-integrals} is still bounded as in \eqref{eq:bound-delta-one} and \eqref{eq:bound-delta-theta}.

\subsection{SFL with Additive Opening Costs}

In this section we sketch the proof of Theorem \ref{thr:mainAdditive}. As in the previous section, by Lemma \ref{lem:reduceNumFacilitiesAffine}, it is sufficient to provide an $O(\log\log N)$ approximation.

Similarly to the previous section, we define the set function $g_f(\cdot)$ as $g_f(R) = g(R) + p_f$ for $R\neq \emptyset$ and $g_f(\emptyset)=0$.
As argued in the previous section, we can find an optimum to the LP relaxation of {\sc addSFL} and reduce it to the problem
DLA$^*$ as defined in the last section, but with input weights $\tilde p_f$ instead of $\tilde w_f$ and $h_f(\cdot)$ as $h_f(R) \coloneqq h(R) + p_f$ for $R\neq \emptyset$, and $h_f(\emptyset) = 0$.

We adapt \cref{alg:tree} like in the previous section:
in Step~\ref{alg:merging}, we select the facility $f_v\in\tilde{F_v}$ with minimum weight $\tilde p_{f_v}$. In the if-clause~\ref{alg:condition}, we search and verify for supportedness w.r.t.\ $h_{f_v}$ instead of $h$.
The correctness of the new algorithm here is given by the same argument as in the previous section.
Notice that by \eqref{eq:lovasz-extension-bis} we have $\hat h_f(z) = \hat h(z) + p_f \cdot \max_{c\in \tilde C} z_c$, which implies $\hat h_{f_v}(z^{f_v}+z^{f'}) \leq \hat h_{f_v}(z^{f_v}) + \hat h_{f'}(z^{f'})$ with $f_v$ chosen as in Step~\ref{alg:merging} in \cref{alg:tree}. The cost of $z$ does therefore not increase throughout the algorithm.
Bounding the cost of sets added to the solution at Step~\ref{alg:choose-theta} and Step~\ref{alg:buying-integrals} can be done, like for {\sc multSFL}, by applying \cref{lem:BO20} to $h_{f_v}$.

\section{Universal Stochastic Facility Location}\label{sec:universal}

In this section we sketch our approximation algorithm for {\sc univFL}. We first present a weaker approximation factor $O(\log\log N+\log\log \frac{d_{\max}}{d_{\min}})$. Later we will show how to refine it.

Define $g(R)\coloneqq \prob_{A \sim \pi}[R \cap A \neq \emptyset]$. We observe that this function is monotone submodular and $g(\emptyset)=0$. Recall that $g(c)=g(\{c\})$ for every $c\in C$. W.l.o.g. we can assume $g(c)>0$ since otherwise we can discard $c$. We can define the objective function of {\sc univFL} for a given assignment $\varphi:C\to F$ as
$$
  \cost(\varphi)=\conn(\varphi)+\open(\varphi)=\sum_{c \in C} d(c,\varphi(c)) \cdot g(c) + \sum_{f \in F} w_f\cdot g(\varphi^{-1}(f)).
$$
Notice that only the connection cost changes w.r.t.\ {\sc multSFL}. In more detail, the connection cost of each client $c$ is scaled by the factor $g(c)$.

We can similarly define a configuration LP for {\sc univFL}, and solve it by the same arguments as in Lemma \ref{lem:solveConfLP}. We next use an analogous notation as for SFL. Let $\dot{x}$ be an optimal solution to this LP with $\poly(N)$ many non-zero variables. We can apply the first stage of our algorithm for SFL (described in \cref{sec:reducingConnection}) with essentially no changes. This will lead to a partial assignment $S_1$ of expected cost $\expected[\cost(S_1)]\leq \ln\ln N\cdot \cost(\dot{x})$ and serving the clients $C_1$, where $\prob[c\notin C_1]\leq \frac{1}{\ln N}$. Mapping the metric over an HST $T$ and considering the restriction $\ddot{x}$ of $\dot{x}$ to $C_2\coloneqq C\setminus C_1$, we obtain that $\expected[\cost_{HST}(\ddot{x})]=O(\cost(\dot{x}))$. A reduction similar to the one in Lemma \ref{lem:DLAreduction} works also in this case (since the scaling of the fractional solution is done on a per-client base). However in this case $D=O(\log \frac{d_{\max}}{d_{\min}})$ (since we did not reduce the ratio $\frac{d_{\max}}{d_{\min}}$ in a preprocessing step). Hence we can apply the result from Lemma \ref{lem:DLArounding} to obtain an assignment covering $C_2$ of expected cost $O(\log\log \frac{d_{\max}}{d_{\min}})\cdot \cost(\dot{x})$. This concludes the sketch of the $O(\log\log N+\log\log \frac{d_{\max}}{d_{\min}})$ approximation.

We next improve this bound via a preprocessing step. Recall that $0<\pi_{\min}\coloneqq \min_{c\in C}\{g(c)\}$. We first scale the ratio $d_{\max}/d_{\min}$. Let us guess the largest distance $L=\max_{c\in C}\{d(c,\opt(c))\}$ in some optimal (universal) solution $\opt$. Notice that $\cost(\opt)\geq \pi_{\min}L$. We use essentially the same arguments as in Lemma \ref{lem:reduceDistances}, we can enforce that $d_{\max}\leq NL$ and $d_{\min}\geq \frac{\eps}{n}\pi_{\min}L$. Hence we obtain $\frac{d_{\max}}{d_{\min}}\leq \frac{nN}{\eps \pi_{\min}}$.

Now let us reduce the number of facilities $m$ to $O(n+\log\frac{1}{\pi_{\min}})$ (hence $N$ as well).
Here we use essentially the same argument as in the proof of Lemma \ref{lem:reduceNumFacilitiesAffine} (with $p_f=0$). In more detail, we can assume that
$m\leq 2^n$. Indeed, otherwise we can reduce the input instance to a {\sc Weighted Set Cover} instance (that we can solve exactly in polynomial time) in the same way as in the mentioned lemma, with the difference that now, for $R\neq \emptyset$, we set $\kappa_R=\min_{f\in F}\{w_f\cdot g(R)+\sum_{c\in R}d(c,f)\cdot g(c)\}$. By the rest of the construction in the same lemma, we can reduce (with a constant loss in the approximation factor) our instance to one where there are $O(\log \frac{d_{\max}}{d_{\min}})=O(\log \frac{n2^n}{\eps \pi_{\min}})=O(n+\log \frac{1}{\pi_{\min}})$ facilities per client. Altogether we reduce $N$ to $N'=O(n(n+\log \frac{1}{\pi_{\min}}))$. Now we can apply again the above scaling trick over the distances (with $N$ replaced by $N'$) to obtain distances $d'$ which satisfy:
$$
  \frac{d'_{\max}}{d'_{\min}}\leq \frac{nN'}{\eps \pi_{\min}} = O\bigg(\frac{n^3+n^2\log \frac{1}{\pi_{\min}}}{\pi_{\min}}\bigg).
$$
This leads to the approximation factor
$$
  O\bigg(\log\log \frac{d'_{\max}}{d'_{\min}}+\log\log N'\bigg)=O\bigg(\log\log \frac{n}{\pi_{min}}\bigg).
$$

\section{Lower-Bound for a Greedy Algorithm}\label{app:lower-bound}

In this section we prove Theorem~\ref{thm:lower-bound}, i.e. that a natural greedy algorithm for SFL has an approximation ratio at least $\Omega(\log \log n)$.
We will first specify the greedy algorithm being analyzed, then give the construction of the instances, finally prove that the algorithm indeed behaves poorly on the instances provided.

\subsection{Greedy Algorithm}
We consider the most natural (set cover type) greedy algorithm that works as follows.
While not all clients are served by facilities, select (and include in the solution) a subset $R$ of still uncovered clients and a location of facility $f$ minimizing the following \emph{cost-effectiveness ratio}
\[
  \frac{g(R \cup T) - g(T) + \sum_{c \in R} d(c,f)}{|R|},
\]
where $T$ is the (possibly empty) set of clients already served by a facility in location $f$. Notice that $g(R \cup T) - g(T)$ is the facility cost increase resulting from adding clients from set $R$.

Such a natural greedy algorithm is known to be a $1.861$-approximation algorithm for FL~\cite{JMMSV03}, which is a special case of our setting in which $g(\cdot)$ is a constant function not depending on the set of clients being served (unless this is an empty set for which the opening cost is $0$).

The above description of the greedy algorithm does not specify how ties are broken, namely what to do if there is more than one minimizer of the cost-effectiveness ratio. In order to facilitate the presentation of our lower bound example, we will assume that ties are broken consistently through the following preference order:
\begin{enumerate}
    \item $R$ is a set of two clients from different locations and $f$ is located at non-zero distance from each of the clients from $R$;
    \item $R$ contains a single client and $f$ is at the same location as the client;
    \item any other configuration.
\end{enumerate}
We call the algorithm specified above {\tt GreedySFL}.

\subsection{Instance Construction}
We will now describe a construction of instances that are difficult for {\tt GreedySFL}.
The instances are parameterized by an integer $\dime > 1$.
Our construction has $2^{\dime}$ locations, on which there are in total $\dime \cdot 2^{\dime}$ clients. There are $2^{\dime} + \dime \cdot 2^{\dime-1}$ locations for a possible facility, every location has the same facility opening cost being the function $g \colon 2^C \rightarrow \RR_{\geq 0}$ of the set of clients being served. The nontrivial part of the construction lies in the definition of the values $g(R)$ for all possible subsets of clients $R \subseteq C$.

We present the construction by first defining a particular structure behind the set of clients.
This is followed by the definition of the set of facilities and distances in the constructed instance of SFL.
Finally, we define the cost function $g$ and show its key properties.

\paragraph{Topology.}
Consider a $\dime$-dimensional hypercube, which will form the geometry of our instance. The set of vertices of the hypercube is $V = \{0,1\}^{[\dime]}$, where $[\dime] \coloneqq \{1,2,\dots,\dime\}$.
For $v \in V$, we write $v=(v_1,v_2,\dots,v_{\dime})$.
There are $\dime$ clients on each vertex, meaning that the set of clients is $C = V \times [\dime] = \{(v,l): \ v\in V,\ l \in [\dime]\}$.
Therefore, $|C| = \dime \cdot 2^{\dime}$.
For $(v,l) \in C$, we call $l$ an \emph{index of $c$}.

We will consider the operation of \emph{activating selected dimensions} $A \subseteq [\dime]$, which intuitively has two effects:
\begin{enumerate}
    \item It flattens the hypercube in these dimensions, making vertices that originally differed only in dimensions from $A$ indistinguishable;
    \item It activates all clients $(v,l)$ with $l \in A$.
\end{enumerate}
Formally, for a subset of dimension $A \subseteq [\dime]$ and for $v \in V$ we define $v^A=(v_1^A, v_2^A,\dots,v_{\dime}^A)$ as
\begin{center}
    $v_i^A \coloneqq
      \begin{cases}
        *, \quad & \text{for} \ i \in A; \\
        v_i, \quad& \text{for} \ i \notin A. \\
      \end{cases}
    $
\end{center}
Then, for $A \subseteq [\dime] $ and for $U \subseteq V$ we define $U^A \coloneqq \{v^A: v \in U\}$ as \emph{a collapsed set of vertices}.
Next, for $A \subseteq [\dime] $ and for $R \subseteq C$ we define $R^A \coloneqq \{(v^A,l): (v,l) \in R \}$ as \emph{a collapsed set of clients}. Define $U(R,A) \coloneqq \{v^A \in U^A : \exists l \in A \text{ such that } (v^A,l)\in R^A\}$ as \emph{the set of collapsed vertices containing an activated client from collapsed $R$} (for an example see Figure~\ref{fig1}).

\newcommand{\tikzI}{
\begin{tikzpicture}
    \node[shape=circle,draw=black,label={north west: $\tau_{000}$}] (A) at (0,0) {\textcolor{white}{$v_3$}};
    \node[shape=circle,draw=black ,label={north west: $\tau_{100}$}] (D) at (3.3,0) {\textcolor{white}{$v_3$}};
    \node[shape=circle,draw=black, label={north west: $\tau_{001}$}] (E) at (1.1,1.1) {\textcolor{white}{$v_3$}};
    \node[shape=circle,draw=black, label={north west: $\tau_{101}$}] (F) at (4.4,1.1) {\textcolor{white}{$v_3$}};
    \node[shape=circle,draw=black, label={north west: $\tau_{010}$}] (B) at (0,2.75) {\textcolor{white}{$v_3$}};
    \node[shape=circle,draw=black, label={north west: $\tau_{110}$}] (C) at (3.3,2.75) {\textcolor{white}{$v_3$}};
    \node[shape=circle,draw=black, label={north west: $\tau_{011}$}] (H) at (1.1,3.85) {\textcolor{white}{$v_3$}};
    \node[shape=circle,draw=black, label={north west: $\tau_{111}$}] (G) at (4.4,3.85) {\textcolor{white}{$v_3$}};
    \path [-](A) edge (B);
    \path [-](B) edge (C);
    \path [-](A) edge (D);
    \path [-](D) edge (C);
    \path [-](A) edge (E);
    \path [-](E) edge (H);
    \path [-](G) edge (F);
    \path [-](D) edge (F);
    \path [-](G) edge (H);
    \path [-](E) edge (F);
    \path [-](B) edge (H);
    \path [-](C) edge (G);
\end{tikzpicture}
}

\newcommand{\tikzII}{
\begin{tikzpicture}
    \node (A) at (0, 0) [shape=circle,draw=black,label={north west: $\tau_{000}$}] {
    \shortstack{$(\tau_{000},1)$,\\ $ (\tau_{000},2)$,\\$(\tau_{000},3)$}
    };
\end{tikzpicture}
}

\newcommand{\tikzIII}{
\begin{tikzpicture}
    \node[shape=circle,draw=black,label={[align=center]north west :$\tau_{0*0}$}] (A) at (0,0) {\textcolor{white}{$v_1$}};
    \node[shape=circle,draw=black, label={[align=center]north west :$\tau_{1*0}$}] (D) at (3.3,0) {\textcolor{white}{$v_2$}};
    \node[shape=circle,draw=black, label={[align=center]north west :$\tau_{0*1}$}] (E) at (1.1,1.65) {\textcolor{white}{$v_3$}};
    \node[shape=circle,draw=black, label={[align=center]north west :$\tau_{1*1}$}] (F) at (4.4,1.65) {\textcolor{white}{$v_4$}} ;
    \path [-](A) edge (D);
    \path [-](A) edge (E);
    \path [-](D) edge (F);
    \path [-](E) edge (F);
\end{tikzpicture}
}

\begin{figure}[ht]
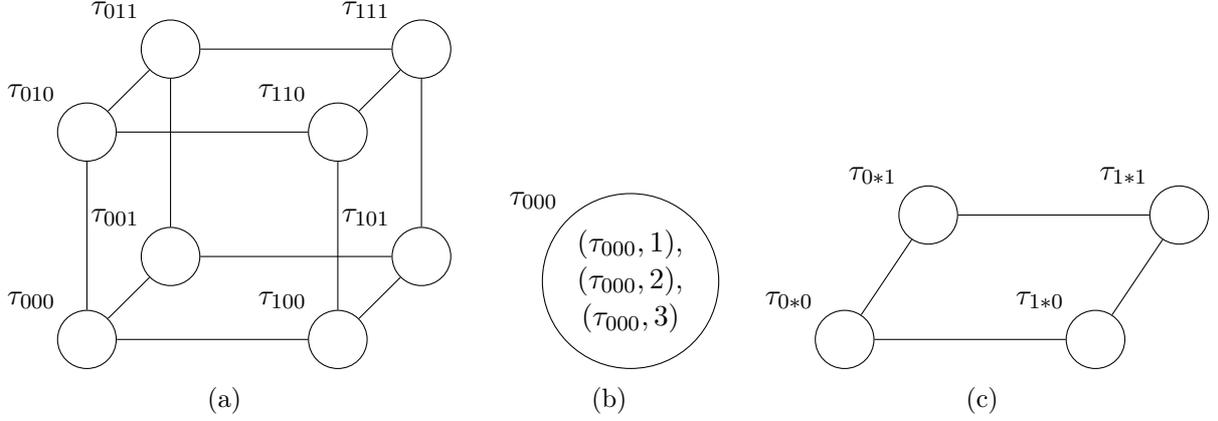

  \begin{subfigure}[b]{0.4\linewidth}
    \centering
    \tikzI
    \caption{}
    \label{fig:a}
    \vspace{4ex}
  \end{subfigure}
  \begin{subfigure}[b]{0.2\linewidth}
    \centering
    \tikzII
    \caption{}
    \label{fig:0b}
    \vspace{4ex}
  \end{subfigure}
  \begin{subfigure}[b]{0.38\linewidth}
    \centering
    \tikzIII
    \caption{}
    \label{fig:c}
    \vspace{4ex}
  \end{subfigure}
  \caption{An example of a 3-dimensional hypercube.
    For readability, we use the notation $\tau_{v_1 v_2 v_3} = (v_1,v_2,v_3) \in V$.
    Figure (a) shows the set of vertices $V=\{\tau_{000}, \dots, \tau_{111}\}$, where
    Figure (b) depicts three clients on the vertex $\tau_{000}$.
    Figure (c) shows the set of collapsed vertices for $A=\{2\}$, i.e. the set $U^{\{2\}}$.
    Suppose $A=\{2\}$ and $R=\{(\tau_{100},1), (\tau_{000},2), (\tau_{011},2), (\tau_{010},2), (\tau_{111},3)\}$.
    In this case $U(R,A)= \{\tau_{0*0}, \tau_{0*1}\}$ because the activated clients are $(\tau_{000},2), (\tau_{011},2), (\tau_{010},2)$ which, after collapsing, are contained in vertices $\tau_{0*0}, \tau_{0*1}, \tau_{0*0}$ respectively.
    }
  \label{fig1}
\end{figure}

\paragraph{Metric of the instance.}
Above, for a fixed dimensionality $\dime$, we have defined a set of $2^{\dime}$ locations such that $\dime$ clients reside in each location.
In what follows we will define the set of locations $F$ in which facilities may be open and the metric on $C \cup F$.

Consider the natural graph of the hypercube $G=(V,E)$ with vertices $V$ being the vertices of the hypercube and two vertices $v,u$ being connected by an edge $e = \{v,u\} \in E$ if their address in the hypercube differs in exactly one coordinate.
Define the set of facility locations as $F = V \cup E$, meaning that facilities may be open on vertices of the hypercube and on (midpoints of) edges connecting neighboring hypercube vertices.
Hence, the number of facility locations is $2^{\dime} + \dime \cdot 2^{\dime-1} \leq O(\dime \cdot 2^{\dime})$.

We now define the metric space over $C \cup F$. We start by setting the lengths of edges $e \in E$ of graph $G$. For an edge $e \in E$ connecting vertices that differ in the coordinate $i \in [\dime]$, we define the length of $e$ as $p_{i} = \frac{1}{2(\dime+1-i)}$.
For a facility $f$ located on the (midpoint of) edge $e = \{v,u\}$, define its distance to the endpoints of $e$ to be half the length of $e$, i.e., $d(v,f) = d(u,f) = \frac{p_i}{2}$. Finally, for any two points in $V \cup E$ we define their distance to be the shortest-path distance on $G$.

\paragraph{Submodular facility cost function.}
Now consider a random process $\pi$ which samples $A \subseteq [\dime]$ by independently sampling each $i \in [\dime]$ with probability $p_i$ (notice $p_i \in [0,1]$).
$p_A = \Pi_{i \in A} \:p_i$ denotes the probability that $A \subseteq [\dime]$ is sampled.
We define $g\colon 2^C \rightarrow \RR_{\geq 0}$ as
$$
  g(R) \coloneqq \expected_{A \sim \pi}\big[|U(R,A)|\big] = \sum_{A \subseteq [\dime]} p_A \cdot |U(R,A)|
$$
as the expected cardinality of collapsed vertices containing an activated client from collapsed $R$.
Notice that while $|U(R,A)|$ is a random variable $g(\cdot)$ is a deterministic function.
The reason of introducing the random process is its usefulness in a compact definition of $g(\cdot)$.
Moreover, it is also helpful in order to derive some important properties of $g(\cdot)$.

Observe that the above defined function $g(\cdot)$ is monotone and has $g(\emptyset)=0$.
To show that $g$ is submodular, it suffices to check that the function $R \mapsto |U(R,A)|$ is submodular for any fixed $A$. In our lower bound construction, we will exploit the behaviour of $g$ as a result of splitting a set of clients according to a fixed dimension $i \in [\dime]$.

\paragraph{Specific properties of the facility cost function.}
We will use the following notation: for every $i \in [\dime]$ and every $R \subseteq C$, let
$V_0^{i}=\{(v_1,v_2,\dots,v_{\dime})\in V:\ v_{i}=0\}$ and $V_1^{i}=\{(v_1,v_2,\dots,v_{\dime})\in V:\ v_{i}=1\}$;
and accordingly
$R_0^{i}= \{(v,l)\in R:\ v \in V_0^{i}\}$ and $R_1^{i} = \{(v,l) \in R:\ v \in V_1^{i}\}$.

First, we observe that for any client $(v,l)$, we have $g((v,l)) = p_l$.
The following lemma formally states the key property of $g$, namely that when considering two sets of clients $R_0^i$ and $R_1^i$, saving from serving a (merged) set $R_0^i \cup R_1^i$ is at most the connection cost increase caused by the merge.
Intuitively, it means that it is not strictly better to serve a merged set $R_0^i \cup R_1^i$ than to serve them separately.

\begin{lemma} \label{l:dim_split}
  We have $g(R_0^i)+g(R_1^i)-g(R_0^i \cup R_1^i) \leq \min\{|R_0^i|,|R_1^i|\} \cdot p_i.$
\end{lemma}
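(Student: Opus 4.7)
The plan is to expand $g$ using its definition as an expectation and isolate the contribution of the collapsing dimension $i$. By linearity,
\begin{equation*}
g(R_0^i)+g(R_1^i)-g(R_0^i \cup R_1^i) = \expected_{A\sim\pi}\!\left[\,|U(R_0^i,A)|+|U(R_1^i,A)|-|U(R_0^i\cup R_1^i,A)|\,\right].
\end{equation*}
I would first observe that when $i\notin A$, each collapsed vertex in $U^A$ retains a definite $i$-th coordinate, so $U(R_0^i,A)\subseteq (V_0^i)^A$ and $U(R_1^i,A)\subseteq (V_1^i)^A$ are \emph{disjoint} subsets of $V^A$. Hence $U(R_0^i\cup R_1^i,A)$ decomposes as their disjoint union and the bracketed integrand vanishes identically on the event $\{i\notin A\}$.

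When $i\in A$, in contrast, $V_0^i$ and $V_1^i$ collapse onto the same set $V^A$, so $U(R_0^i\cup R_1^i,A)=U(R_0^i,A)\cup U(R_1^i,A)$ and inclusion-exclusion reduces the integrand to $|U(R_0^i,A)\cap U(R_1^i,A)|$. The main estimate I plan to use is that each collapsed vertex in $U(R_b^i,A)$ must be produced by at least one \emph{activated} client $(v,l)\in R_b^i$ with $l\in A$, whence
\begin{equation*}
|U(R_0^i,A)\cap U(R_1^i,A)|\;\le\;\min\{\,|U(R_0^i,A)|,\,|U(R_1^i,A)|\,\}\;\le\;\min\{\,|R_0^{i,A}|,\,|R_1^{i,A}|\,\},
\end{equation*}
where $R_b^{i,A}\coloneqq\{(v,l)\in R_b^i:l\in A\}$ denotes the activated clients in $R_b^i$.

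To finish, assume without loss of generality that $|R_0^i|\le|R_1^i|$ and use $|R_0^{i,A}|$ in the bound above. Swapping the order of summation gives
\begin{equation*}
g(R_0^i)+g(R_1^i)-g(R_0^i \cup R_1^i)\;\le\;\sum_{(v,l)\in R_0^i}\prob_{A\sim\pi}\!\left[\,i\in A\text{ and }l\in A\,\right].
\end{equation*}
Since the coordinates of the sampled set are independent, each summand equals $p_i$ when $l=i$ and $p_i\,p_l\le p_i$ otherwise, so the whole expression is at most $|R_0^i|\cdot p_i=\min\{|R_0^i|,|R_1^i|\}\cdot p_i$, as required. The argument is a clean case split on whether $i\in A$ together with the crude bound $|U(R_b^i,A)|\le|R_b^{i,A}|$; the only point that needs to be stated with care — and the only step that is not pure bookkeeping — is why the $i\notin A$ terms cancel, which hinges on the $i$-th coordinate of a collapsed vertex being preserved precisely when $i\notin A$.
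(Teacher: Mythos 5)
Your proof is correct and follows essentially the same route as the paper: expand $g$ as an expectation, split on whether $i\in A$, observe that the $i\notin A$ terms vanish because the two halves stay disjoint after collapsing, and bound the $i\in A$ terms by a minimum of cardinalities. The only (harmless) difference is cosmetic: you identify the integrand exactly as $|U(R_0^i,A)\cap U(R_1^i,A)|$ and finish with a per-client count using coordinate independence, whereas the paper uses the cruder bound $\min\{|U(R_0^i,A)|,|U(R_1^i,A)|\}\le\min\{|R_0^i|,|R_1^i|\}$ and pulls the factor $\sum_{A\ni i}p_A=p_i$ out directly.
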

\begin{proof}
For every $A \subseteq [\dime]$ we have
\begin{align}
    &|U(R_0^i,A)|+|U(R_1^i,A)|-|U(R_0^i \cup R_1^i,A)|\nonumber\\
    \leq\:&
    |U(R_0^i,A)|+|U(R_1^i,A)|- \max\{|U(R_0^i,A)|, |U(R_1^i,A)|\}\nonumber\\
    \leq\:&
    \min\{|U(R_0^i,A)|, |U(R_1^i,A)|\}\nonumber\\
    \leq\:&
    \min\{|R_0^i|, |R_1^i|\}. \label{ineq:usr-split-i-in}
\end{align}
Furthermore, for every $A \subseteq [\dime]$ such that $i \notin A$ the collapsed halves of the cube are disjoint, i.e.\ $(R_0^i)^A \cap (R_1^i)^A = \emptyset$.
Moreover, $U(R_0^i,A) \cap U(R_1^i,A) = \emptyset$, hence $U(R_0^i,A) \cup U(R_1^i,A) = U(R_0^i \cup R_1^i,A)$.
Therefore, for every $A \subseteq [\dime] \setminus \{i\}$, we have
\begin{align}
    |U(R_0^i,A)| + |U(R_1^i,A)| - |U(R_0^i \cup R_1^i,A)| = 0.\label{ineq:usr-split-i-notin}
\end{align}
From the definition of $g(\cdot)$, we obtain
\begin{align*}
    g(R_0^i)+g(R_1^i)-g(R_0^i \cup R_1^i)
    =&
    \sum_{A \subseteq [\dime]} \prob[A] \cdot (|U(R_0^i,A)|+|U(R_1^i,A)|-|U(R_0^i \cup R_1^i, A)|)\\
    =&
    \sum_{A \subseteq [\dime]: i \in A} \prob[A] \cdot (|U(R_0^i,A)|+|U(R_1^i,A)|-|U(R_0^i \cup R_1^i, A)|)\\
    +&
    \sum_{A \subseteq [\dime]: i \notin A} \prob[A] \cdot (|U(R_0^i,A)|+|U(R_1^i,A)|-|U(R_0^i \cup R_1^i, A)|)\\
    \stackrel{\eqref{ineq:usr-split-i-in},\eqref{ineq:usr-split-i-notin}}{\leq}&
    \sum_{A \subseteq [\dime]: i \in A} \prob[A] \cdot \min\{|R_0^i|, |R_1^i|\}
    +
    \sum_{A \subseteq [\dime]: i \notin A} \prob[A] \cdot 0\\
    =&
    \quad p_i \cdot \min\{|R_0^i|, |R_1^i|\}.
\end{align*}
This concludes the proof of the lemma.
\end{proof}

The following lemma describes the behaviour of {\tt GreedySFL} on clients from a single vertex: the cost-effectiveness of serving one client is not worse than the cost-effectiveness of serving more clients from the same vertex.
\begin{lemma}\label{l:singleton}
  For every $R \subseteq C$ such that $|\{v \in V: (v,l) \in R\}|=1$, there exists some $c \in R$ satisfying
  \begin{equation}\label{eq:singleton-lemma}
    g(c)\leq \frac{g(R)}{|R|}.
  \end{equation}
\end{lemma}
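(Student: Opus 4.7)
The plan is to exploit the fact that all clients in $R$ share a single vertex $v$ in order to obtain a closed-form expression for $g(R)$, and then reduce the claim to an algebraic inequality on products. Let $L \coloneqq \{l : (v,l) \in R\} \subseteq [\dime]$, so that $|L| = |R|$. For any $A \subseteq [\dime]$, the set $U(R,A)$ equals $\{v^A\}$ if $A \cap L \neq \emptyset$ and is empty otherwise; by independence of the coordinates of $\pi$, this yields $g(R) = \prob_{A \sim \pi}[A \cap L \neq \emptyset] = 1 - \prod_{l \in L}(1-p_l)$.

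Because $p_l = \frac{1}{2(\dime+1-l)}$ is strictly increasing in $l$, the natural candidate is $c^* \coloneqq (v, l^*)$ with $l^* \coloneqq \min L$; then $g(c^*) = p_{l^*} = \min_{c\in R} g(c)$, so \eqref{eq:singleton-lemma} reduces to
\[
  \prod_{l \in L}(1-p_l) \leq 1 - k\cdot p_{l^*}, \qquad \text{where } k\coloneqq |L|.
\]

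To prove this, I would first pass to the worst case for fixed $l^*$ and $k$: writing $L = \{l_1 < \cdots < l_k\}$ with $l_1 = l^*$, distinctness of the integers $l_i$ forces $l_i \geq l^* + i - 1$, hence $1 - p_{l_i} \leq 1 - p_{l^*+i-1}$ and consequently $\prod_{l\in L}(1-p_l) \leq \prod_{i=0}^{k-1}(1 - p_{l^*+i})$. Setting $n \coloneqq \dime + 1 - l^*$, it then suffices to establish the ``reversed-Bernoulli'' inequality
\[
  \prod_{i=0}^{k-1}\left(1 - \tfrac{1}{2(n-i)}\right) \leq 1 - \tfrac{k}{2n}, \qquad 1 \leq k \leq n.
\]

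This last inequality is the main obstacle, since the standard Bernoulli inequality goes in the wrong direction. I would prove it by induction on $k$ via a ratio argument: denoting the left side $P_k$ and the right side $Q_k$, one has $P_1 = Q_1 = \tfrac{2n-1}{2n}$, and a short computation gives
\[
  \tfrac{P_{k+1}}{P_k} = \tfrac{2n-2k-1}{2n-2k} \leq \tfrac{2n-k-1}{2n-k} = \tfrac{Q_{k+1}}{Q_k},
\]
which cross-multiplies to $k\geq 0$. Hence the ratio $P_k/Q_k$ is non-increasing in $k$ and starts at $1$, so $P_k \leq Q_k$ throughout the range, completing the proof.
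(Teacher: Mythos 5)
Your proof is correct, and it shares the paper's skeleton up to the key inequality: since all clients of $R$ sit at one vertex, $g(R)=\prob_{A\sim\pi}[A\cap L\neq\emptyset]$, the candidate client is the one of minimal index $l^*$, and the claim becomes $|R|\cdot p_{l^*}\le \prob_{A\sim\pi}[A\cap L\neq\emptyset]$. Where you diverge is in how this inequality is established. The paper argues by induction on $|R|$: it splits off the event that the smallest index is sampled, applies the inductive hypothesis to the remaining indices, and reduces everything to the pairwise inequality $p_{i_2}-p_{i_1}-p_{i_1}p_{i_2}\ge 0$ for the two smallest indices, verified from the explicit values $p_i=\frac{1}{2(\dime+1-i)}$. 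You instead use independence to get the closed form $g(R)=1-\prod_{l\in L}(1-p_l)$, pass to the worst case of consecutive indices via monotonicity of $p_l$ (valid, since this only enlarges the product while the right-hand side $1-kp_{l^*}$ is unchanged), and then prove $\prod_{i=0}^{k-1}\bigl(1-\frac{1}{2(n-i)}\bigr)\le 1-\frac{k}{2n}$ for $1\le k\le n=\dime+1-l^*$ by comparing successive ratios, $\frac{2n-2k-1}{2n-2k}\le\frac{2n-k-1}{2n-k}$ (equivalent to $k\ge 0$), and telescoping from the base case $P_1=Q_1$; the denominators stay positive throughout the relevant range, so this is sound. Both arguments ultimately exploit the specific choice of the $p_i$ (with equal $p_i$ Bernoulli's inequality would go the other way), but yours trades the paper's probabilistic induction for an explicit worst-case reduction plus a deterministic product-versus-linear bound, which is a clean and self-contained alternative.
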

\begin{proof}
 As all clients considered in this lemma are located on the same vertex, we omit them from the notation, i.e. we write $c = l$ instead of $c = (v,l)$.
 First, recall that for every $l \in [\dime]$ we have $g(l) = p_l$.
 Let $i_1 = \min(R)$, so $p_{i_1} = g(i_1)$ is equal to the minimizer of the left-hand side of \eqref{eq:singleton-lemma}.

 As all clients are located in the same vertex, we have $|U(R,A)| = \indic[R \cap A \neq \emptyset]$, where $\indic[X]$ is the indicator of $X$ being true.
 Therefore we have
 $$
   g(R) = \expected_{A \sim \pi}\big[|U(R,A)|\big] = \prob_{A \sim \pi}[R \cap A \neq \emptyset].
 $$
 Hence, in order to prove the lemma, it is enough to show the following inequality
 \begin{align}\label{ineq:singleton-better}
   p_{\min(R)} \cdot |R| \leq \prob_{A \sim \pi}[R \cap A \neq \emptyset].
 \end{align}
 We show this by induction on the cardinality of $R$.
 If $|R|=1$ then \eqref{ineq:singleton-better} is trivial (the inequality holds with equality).
 Suppose now the claim is true for $|R| \leq r_0$. We show that it also holds for $|R|=r_0+1$.

 Let $i_2$ be the second smallest element in $R$.
 Let $E_1$ be the event of element $i_1$ being selected for the random subset $A \sim \pi$ (hence, we will write $\prob$ as a shortcut for $\prob_{A \sim \pi}$).
 Let $E_2$ be the event of at least one element from $R \setminus \{i_1\}$ being selected to the set $A$.
 Note that $E_1$ and $E_2$ are independent events and that we already have an estimation of $\prob[E_2]$ from our inductive assumption, hence
 \begin{equation*}
  \begin{split}
   \prob[R \cap A \neq \emptyset] = \quad \prob&[E_1 \cup E_2] = \prob[E_1] + \prob[E_2] - \prob[E_1] \cdot \prob[E_2]\\
   = \quad p&_{i_1} + (1-p_{i_1}) \cdot \prob[E_2]\\
    \stackrel{\text{induc.}}{\geq} \: p&_{i_1} + (1-p_{i_1}) \cdot p_{i_2} \cdot (|R|-1)\\
    = \quad p&_{i_1} \cdot |R| - p_{i_1} \cdot (|R| - 1) + p_{i_2} \cdot (|R|-1) - p_{i_1} \cdot p_{i_2} \cdot (|R| -1)\\
    = \quad p&_{\min(R)} \cdot |R| + (|R|-1) \cdot (p_{i_2} - p_{i_1} - p_{i_1} \cdot p_{i_2}).
  \end{split}
\end{equation*}
It remains to show that $p_{i_2} - p_{i_1} - p_{i_1} \cdot p_{i_2} \geq 0$.
Notice that for a fixed $p_{i_2}$, the left-hand side is monotone decreasing with $p_{i_1}$.
Notice also that $p_i$ is monotone increasing with $i$. The smallest value of the expression $p_{i_2} - p_{i_1} - p_{i_1} \cdot p_{i_2}$ is therefore attained in the case $i_1 = i$, $i_2 = i+1$, for some $i \in [\dime-1]$.
It remains to verify:
\begin{equation*}
  \begin{split}
      p_{i_2} - p_{i_1} - p_{i_1} \cdot p_{i_2} &= \frac{1}{2(\dime-i)} - \frac{1}{2(\dime+1-i)} - \frac{1}{4(\dime-i)(\dime+1-i)}\\
    &=
      \frac{2\dime+2-2i-2\dime+2i -1}{4(\dime-i)(\dime+1-i)} = \frac{1}{4(\dime-i)(\dime+1-i)}
    \geq
      \frac{1}{4(\dime-1)\dime} > 0.
  \end{split}
\end{equation*}
This concludes the proof of the lemma.
\end{proof}

\subsection{Analysis of the Greedy Algorithm}
Observe that a feasible solution to the constructed instance is to open one facility in each vertex of the hypercube to serve the $\dime$ clients located in this vertex.
Let $R_v = \{(v,l): l \in [\dime]\}$ for every $v \in V$.
The cost of serving all clients from a vertex $v$ is
\begin{align*}
    g(R_v) &= \expected_{A \sim \pi}[|U(R_v,A)|]\\
  &=
    \prob_{A \sim \pi}[A = \emptyset] \cdot |U(R_v,\emptyset)| + \sum_{A \subseteq [\dime]: A \neq \emptyset} p_A \cdot |U(R_v,A)|\\
  &=
    \prob_{A \sim \pi}[A = \emptyset] \cdot 0 + \sum_{A \subseteq [\dime]: A \neq \emptyset} p_A \cdot 1 \leq 1.
\end{align*}
Hence, the cost of an optimum solution is at most $2^{\dime}$ in total, i.e., $\cost(\opt) \leq 2^{\dime}$.
We will now show that {\tt GreedySFL} produces a different and much more expensive solution whose structure is described in \cref{lem:greedy-edges}.

We call two clients $(v,l)$ and $(u,l)$ \emph{matching clients} if $\{v,u\} \in E$ and $v$ differs from $u$ on the coordinate $l$ (notice that both clients have the same index $l$).
Hence, there are $|E| = \dime \cdot 2^{\dime-1}$ pairs of matching clients.
For matching clients $c_1 = (v,l)$ and $c_2 = (u,l)$, we have $|U(\{c_1,c_2\},A)| \in \{0,1\}$ (because they share the same index), so $g(\{c_1,c_2\}) = p_l$.
Therefore, the cost-effectiveness of serving matching clients in a facility on edge $\{u,v\}$ is equal to $(g(\{c_1,c_2\}) + p_l)/2 = p_l$ and is also equal to the cost-effectiveness of serving $c_1$ in location $v$ as well as serving $c_2$ in location $u$.

\begin{lemma}\label{lem:greedy-edges}
 {\tt GreedySFL} opens all facilities on the edges of the hypercube and uses these facilities to serve pairs of matching clients from the vertices adjacent to the edges. 
\end{lemma}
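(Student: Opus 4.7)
The plan is to proceed by induction on the iterations of \texttt{GreedySFL}, maintaining the invariant that after $k$ iterations the served clients form exactly the union of $k$ matching pairs, each sitting on a distinct hypercube edge at whose midpoint a facility has been opened. Under this invariant, I will argue that the $(k+1)$-st greedy step again opens a new edge-midpoint facility and serves its corresponding matching pair.

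I would first compute the cost-effectiveness ratio of the two natural candidate moves. Opening a new facility at vertex $v$ to serve a single unserved client $(v,l)$ costs $g((v,l)) = p_l$ with $|R|=1$, hence ratio $p_l$. Opening a new facility at the midpoint of a dimension-$l$ edge $\{u,v\}$ to serve its matching pair $(v,l),(u,l)$ costs $g(\{(v,l),(u,l)\}) + 2\cdot p_l/2 = 2p_l$ with $|R|=2$, again ratio $p_l$. Both are minimized by taking $l=l^\ast$, where $l^\ast \in [\dime]$ is the smallest index appearing among still-unserved clients (equivalently, the smallest $l$ for which some dimension-$l$ matching pair has not yet been processed), giving the common value $p_{l^\ast}$.

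The main work is to rule out any $(R,f)$ attaining ratio strictly below $p_{l^\ast}$, split by the type of $f$. If $R$ lies in a single vertex $v$ with $f$ at $v$ and $T=\emptyset$, \cref{l:singleton} shows some singleton in $R$ has ratio no larger than $g(R)/|R|$, hence at least $p_{l^\ast}$. If $R$ spans two adjacent vertices $u,v$ and $f$ is the midpoint of the dimension-$i$ edge $\{u,v\}$, combining \cref{l:dim_split} (bounding the submodular savings from merging the two halves by $\min\{|R_0^i|,|R_1^i|\} \cdot p_i$) with the single-vertex lower bound on $g(R_0^i)$ and $g(R_1^i)$ and the forced connection cost $|R|\cdot p_i/2$ shows the ratio is at least $p_i \geq p_{l^\ast}$, with equality forcing exactly the matching-pair configuration on a dimension-$l^\ast$ edge. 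If $R$ contains a client at hypercube-distance at least $2$ from $f$, that client alone contributes connection cost at least $p_1 \geq p_{l^\ast}$, already sufficient. Finally, for reuse of an already-open dimension-$l$ edge facility with $T=\{(v,l),(u,l)\}$, a direct computation from the definition of $g$ gives the marginal $g$-contribution of an added endpoint-vertex client $(v,l')$ as $(1-p_l)p_{l'}$; the resulting marginal ratio $(1-p_l)p_{l'} + p_l/2$ is bounded below by $p_{l^\ast}$ using $p_{l'}\geq p_{l^\ast}$ (since $(v,l')$ is unserved, forcing $l'\geq l^\ast$) together with $p_{l^\ast}\leq 1/2$. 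Clients from distant vertices incur extra connection cost, and an analogous submodular estimate rules out multi-client reuse.

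Having established that the minimum ratio is exactly $p_{l^\ast}$ and is attained by both the matching-pair-on-new-edge option (falling under tie-breaking preference~1) and the single-client-at-vertex option (preference~2), the stated tie-breaking rule forces \texttt{GreedySFL} to select preference~1: a new dimension-$l^\ast$ edge-midpoint facility is opened and its matching pair served, preserving the invariant. Iterating $\dime\cdot 2^{\dime-1}$ times exhausts all matching pairs and opens a facility on every edge, which is the claim. The main obstacle is the case analysis above, in particular verifying that no multi-vertex set and no marginal reuse of an open facility can undercut $p_{l^\ast}$; this relies on combining \cref{l:singleton} and \cref{l:dim_split} with the quantitative choice $p_i=1/(2(\dime+1-i))$, which is tuned precisely so that all candidate moves bottom out at the same value $p_{l^\ast}$.
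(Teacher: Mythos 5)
Your overall strategy is essentially the paper's: the same two key lemmas (\cref{l:dim_split} and \cref{l:singleton}) plus the tie-breaking order, organized as an induction with the explicit threshold $p_{l^\ast}$. However, there is a genuine gap in the case analysis that rules out candidate moves of ratio below $p_{l^\ast}$: your treatment of sets containing clients far from the facility (``If $R$ contains a client at hypercube-distance at least $2$ from $f$, that client alone contributes connection cost at least $p_1\geq p_{l^\ast}$'') has the inequality backwards. Since $p_i=\frac{1}{2(\dime+1-i)}$ is \emph{increasing} in $i$, $p_1=\frac{1}{2\dime}$ is the \emph{smallest} edge weight, so $p_1\leq p_{l^\ast}$, with strict inequality as soon as the dimension-$1$ matching pairs are exhausted. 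Moreover, the cost-effectiveness is averaged over $|R|$, so one client's connection cost cannot by itself bound the ratio: a large set of same-index clients spread over many vertices has $g(R)/|R|$ far below $p_{l^\ast}$, and the connection cost per client is only a small multiple of $p_1$; whether such sets are competitive is exactly the delicate point of the construction. The paper handles these configurations by \emph{iterated} splitting: whenever $R$ spans two sides of some dimension $i$, \cref{l:dim_split} shows one of $R_0^i,R_1^i$ is at least as cost-effective (the submodular saving being offset by the connection-cost saving of at least $\min\{|R_0^i|,|R_1^i|\}\cdot p_i$), and repeating this reduces to a single-vertex set, to which \cref{l:singleton} and the tie-breaking rule apply. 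Your shortcut for the distant-client case does not substitute for this reduction, so multi-vertex and non-adjacent configurations are not actually ruled out in your argument.

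A smaller, repairable slip: in the two-adjacent-vertices case you assert the ratio is ``at least $p_i\geq p_{l^\ast}$,'' but the edge dimension $i$ may have $p_i<p_{l^\ast}$ (e.g.\ a dimension-$1$ edge after all index-$1$ clients are served). The correct combination of your own ingredients avoids this: with $|R_0^i|\leq|R_1^i|$, the saving $\min\{|R_0^i|,|R_1^i|\}\,p_i$ from \cref{l:dim_split} is absorbed by the connection cost $|R|\,p_i/2$, giving ratio at least $\bigl(g(R_0^i)+g(R_1^i)\bigr)/|R|\geq\min\{p_{\min(R_0^i)},p_{\min(R_1^i)}\}\geq p_{l^\ast}$, without ever comparing $p_i$ to $p_{l^\ast}$. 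Your reuse-of-open-facility computation (marginal cost $(1-p_l)p_{l'}+p_l/2\geq p_{l'}$) matches the paper's and is fine, but the ``analogous estimate rules out multi-client reuse'' and ``distant vertices incur extra connection cost'' steps again need the splitting argument spelled out, as the paper does, rather than a connection-cost-only appeal.
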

\begin{proof}
  First, we show that {\tt GreedySFL} never selects new location in order to serve at least $3$ clients from at least $2$ different locations.
  Suppose, by contradiction, that {\tt GreedySFL} selected a subset of clients $R$ that contains at least $3$ clients from at least $2$ different locations $v$ and $u$ and decided to serve them from a new location, not previously used to serve any clients.
  Let $i$ be a coordinate in which $v$ differs from $u$.
  Consider splitting set $R = R_0^i \cup R_1^i$ according to dimension $i$ as in Lemma~\ref{l:dim_split}.
  By Lemma~\ref{l:dim_split}, the opening cost of serving $R_0^i$ and $R_1^i$ separately (i.e. $g(R_0^i)+g(R_1^i)$) is at most as much as the opening cost of serving them together (i.e. $g(R_0^i \cup R_1^i)$) increased by $\min\{|R_0^i|,|R_1^i|\} \cdot p_i$.
  Notice that, by splitting $R$ into $R_0^i$ and $R_1^i$ and serving them separately, we save at least $\min\{|R_0^i|,|R_1^i|\} \cdot p_i$ on the connection cost.
  Therefore it is not more expensive to serve $R_0^i$ and $R_1^i$ separately than to serve $R$.
  Therefore, at least one of the two sets $R_0^i$ or $R_1^i$ has a cost-effectiveness that is not greater than the cost-effectiveness of $R$.
  Furthermore, by applying Lemma~\ref{l:singleton}, we get that there exists a singleton set whose cost-effectiveness is not worse than $R$'s.
  Finally by the assumed tie-breaking rule, {\tt GreedySFL} favors singletons over sets of cardinality at least $3$ and hence {\tt GreedySFL} would choose the singleton set instead of $R$, which contradicts our assumption.

  Next, we show that {\tt GreedySFL} never selects clients which are served in already used locations.
  Suppose, by contradiction, that {\tt GreedySFL} decided a subset of clients $R$ to be served by a previously open facility $f$ already serving set of clients $T$, i.e. that clients from $R$ would join clients from $T$ already served at $f$ that is located on an edge that connect vertices $v$ and $u$ that differ in the coordinate $i$.
  Assume $T = \{(v,i), (u,i)\}$, i.e. $T$ is a pair of matching clients for the edge $\{v,u\}$.
  Note that we have $f = \{v,u\}$.
  If $R$ contains a client on a vertex $w \in V \setminus \{v, u\}$ as above we argue by Lemma~\ref{l:dim_split} that $R$ may be split into $R' \ni w$ and $R''$ containing clients from $\{v, u\}$ and that at least one of these sets would be preferred by {\tt GreedySFL} over $R$. Hence we may assume $R$ only contains clients located on either $v$ or $u$.
  Let us then denote by $R_v$ (resp. $R_u$) the sets of clients from $R$ that are located on $v$ (resp. $u$).
  Observe that $g(R_v \cup T) - g(T) = g(R_v \cup R_u \cup T) - g(R_u \cup T)$ and $g(R_u \cup T) - g(T) = g(R_u \cup R_v \cup T) - g(R_v \cup T)$, because $T$ is a pair of matching clients (who in the definition of $g(\cdot)$ are activated whenever the cube is collapsed in dimension $i$).
  Therefore, {\tt GreedySFL} might as well serve $R_v$ separately to $R_u$.
  We may therefore assume that $R$ only contains clients from a single vertex adjacent to the edge on which the facility is located.
  Note that the gain of $R$ from joining $T$ is $g(T) + g(R) - g(R \cup T) \leq g(T) = p_i$ (by subadditivity of $g(\cdot)$).
  In the case $|R| \geq 2$, this already shows that $R$ would not join $T$ because the facility cost gain of joining would be at least the connection cost of these clients traversing the distance of $p_i/2$.
  It remains to argue that a single client $c = (v,j)$ would also not be interested to join clients from $T$.
  To see this, observe that $g(T) = p_i$, $g(c) = p_j$ and $g(\{c_j\} \cup T) = p_i + p_j - p_i \cdot p_j$. Hence, the facility cost gain from joining $c$ to $T$ would be $p_i \cdot p_j \leq p_i / 2$ because $p_j \leq 1/2$ which is not more than the connection cost of $p_i / 2 = d(c,f)$ that client $c$ would have if it jointed $T$.

  We can show that {\tt GreedySFL} never selects a subset of more than one client from a single vertex of the hypercube to be served by a new facility.
  It is sufficient to use Lemma~\ref{l:singleton} to see that {\tt GreedySFL} favors a singleton set over a set of more clients from a single vertex.

  We have already shown that the solution produced by {\tt GreedySFL} is a collection of pairs of clients from different vertices and possibly some singletons.
  It remains to argue that {\tt GreedySFL} will not choose any singleton.
  Consider a client $c = (v,l)$ for some $v \in V, l \in [\dime]$.
  We will first show that it may only be served as a singleton or together with its matching client, i.e., client $c' = (u,l)$, where $u$ is the neighbor of $v$ in dimension $l$.
  Suppose $c$ was served together with $c'' \neq c'$.
  Observe that the cost-effectiveness of serving $\{c, c''\}$ is strictly worse than the minimal cost-effectiveness of serving $\{c\}$ and that of serving $\{c''\}$, hence {\tt GreedySFL} could not have chosen $\{c, c''\}$ as a set of clients.

  Now that we have excluded the possibility of not-matching clients to be served in pairs, we are left with matching pairs and singletons.
  It means that two matching clients are either served together or separately as singletons.
  It remains to recall that they are be equally cost-effective,
  so the tie-breaking rule chooses a matching pair before a singleton.
  Hence, {\tt GreedySFL} never selects a singleton.
\end{proof}
To conclude, we see that {\tt GreedySFL} produces a solution of cost
\begin{align*}
    \cost(\text{greedy})
    = &\sum_{i \in [\dime]} \:\:\sum_{\substack{\{v,u\} \in E:\\ v_i \neq u_i}} \Big( d(v,u) + g(\{(v,i),(u,i)\})\Big)\\
    = &\sum_{i \in [\dime]} \:\:\sum_{\substack{\{v,u\} \in E:\\ v_i \neq u_i}} \Big( p_i + \sum_{A \subseteq [\dime]} p_A \cdot |U(\{(v,i),(u,i)\},A)| \Big)\\
    = &\sum_{i \in [\dime]} \:\:\sum_{\substack{\{v,u\} \in E:\\ v_i \neq u_i}} \Big( p_i + \sum_{A \subseteq [\dime]: i \in A} p_A \cdot 1 \Big)\\
    = &\sum_{i \in [\dime]} \:\:\sum_{\substack{\{v,u\} \in E:\\ v_i \neq u_i}} \Big( p_i + p_i \Big)\\
    = &\sum_{i \in [\dime]} 2^{\dime-1} \cdot \frac{1}{\dime+1-i} = 2^{\dime-1} \cdot H_{\dime},
\end{align*}
where $H_i = 1+1/2+\cdots+1/i$ is the $i$th harmonic number.
The approximation ratio is then at least
\[
  \frac{\cost(\text{greedy})}{\cost(\opt)} \geq \frac{2^{\dime-1} \cdot H_{\dime}}{2^{\dime}} = \frac{H_{\dime}}{2} = \Omega(\log \dime).
\]
This holds for the instance with $n = \dime \cdot 2^{\dime}$ clients.
Therefore, $n < 4^{\dime}$ and $\log \log(n) < 2\log(\dime)$.
Hence, the ratio is then $\Omega(\log \log n)$ which finishes the proof of Theorem~\ref{thm:lower-bound}.

\end{document}